\PassOptionsToPackage{unicode}{hyperref}
\PassOptionsToPackage{hyphens}{url}
\PassOptionsToPackage{dvipsnames,svgnames,x11names}{xcolor}

\documentclass[a4paper, 11pt]{article}

\usepackage[]{natbib}
\usepackage{bookmark}

\usepackage{graphicx}%
\usepackage{multirow}%
\usepackage{amsmath,amssymb,amsfonts}%
\usepackage{amsthm}%
\usepackage{mathrsfs}%
\usepackage[title]{appendix}%
\usepackage{xcolor}%
\usepackage{textcomp}%
\usepackage{manyfoot}%
\usepackage{booktabs}%
\usepackage{algorithm}%
\usepackage{algorithmicx}%
\usepackage{algpseudocode}%
\usepackage{listings}%
\usepackage{caption}
\usepackage{subcaption}
\usepackage{longtable}
\usepackage{array}
\usepackage{tabularx}
\usepackage{amsmath,amssymb}
\usepackage{iftex}

\theoremstyle{thmstyleone}%
\newtheorem{theorem}{Theorem}
\theoremstyle{thmstyletwo}%
\usepackage{amsthm}
\newtheorem{example}{Example}

\theoremstyle{thmstylethree}%
\newtheorem{definition}{Definition}%

\usepackage{bm}
\newcommand{\bl}[1]{\bm{#1}}

\usepackage[utf8]{inputenc}
\usepackage[T1]{fontenc}
\usepackage[margin=1in]{geometry}

\title{Adaptive online kernel changepoint detection}

\author{Qianqian Jiang  \and Dean A.~Bodenham}

\date{{\normalsize Department of Mathematics, Imperial College London,} \\
{\small \texttt{q.jiang@imperial.ac.uk},
\quad \texttt{dean.bodenham@imperial.ac.uk}}
}

\begin{document}

\maketitle

\begin{abstract}
We propose an adaptive online kernel-based changepoint detection method for streaming data that is capable of detecting a broad range of changes in the underlying data distribution. The method maintains a recursively-weighted reproducing kernel Hilbert space representation of observations and adaptively updates the forgetting factor through a gradient-based procedure driven by a maximum mean discrepancy-type statistic between the current observation and the weighted empirical distribution of the past. This self-tuning mechanism allows the detector to adapt its effective memory and responsiveness to changes in the underlying process. Simulation results demonstrate that the proposed method achieves strong detection performance across a wide range of distributional changes. Further, our proposed approach maintains constant computational and storage cost  through recursive updates, and is very computationally efficient in comparison to competing methods. Experiments on both simulated data and benchmark real-world datasets show improved performance over several other leading kernel-based methods.
\end{abstract}

\noindent%
{\it Keywords:} Adaptive forgetting factor, gradient descent, kernel methods, random Fourier features, online changepoint detection

\section{Introduction}\label{sec1}

Online changepoint detection has a long history dating to the seminal work of \cite{Page1954CONTINUOUSIS}. Subsequent developments have largely followed two main frameworks, represented by the works of \cite{Page1954CONTINUOUSIS} and \cite{Chu1996MonitoringSC}: statistical process control and Type I error control. These ideas are closely related. In the statistical process control framework, procedures are typically designed to achieve a prescribed in-control average run length (ARL), which measures the expected time to a false alarm under the null hypothesis of no change, while seeking to minimize the detection delay after a change occurs. In the Type I error control framework, the probability of false alarms is controlled directly, while the detection delay is minimized subject to this constraint. 
See \cite{tartakovsky2014sequential} and \cite{Aue2023TheSO} for comprehensive reviews of these two frameworks.

Within these general frameworks, an important challenge arises when the form of the change is unknown in advance. A parametric detector designed to identify a shift in the mean may have limited power against changes in variance, dependence structure, tail behavior, or other complex changes in the shape of the underlying distribution. Kernel methods \citep{shawe2004kernel} provide a flexible nonparametric alternative. By mapping observations into a reproducing kernel Hilbert space (RKHS), they represent a probability distribution through its mean embedding. When the kernel is characteristic, distinct distributions have distinct embeddings, allowing changes in a broad range of distributional features to be detected without specifying a particular parametric alternative \citep{fukumizu2008characteristic}. This principle forms the foundation for kernel two-sample tests, most notably the maximum mean discrepancy (MMD), which measures the distance between the mean embeddings of two distributions  \citep{gretton2006kernel,Gretton2012AKT}. Such methods therefore provide a natural foundation for constructing online detectors that are sensitive to general distributional changes.

Kernel-based changepoint detection methods have been developed for both offline settings \citep{Harchaoui2007RetrospectiveMC, Harchaoui2008KernelCA, arlot2019kernel} and online settings. Among the online methods, \cite{li2019scan} develop the Scan \(B\)-statistic, which averages MMD statistics between a fixed-size block of the most recent observations and multiple reference blocks drawn from pre-change data, and standardizes the resulting statistic by its null standard deviation. \cite{keriven2020newma} introduce NEWMA, which compares two exponentially weighted moving averages with different forgetting factors without explicitly storing past observations. \cite{ferrari2023online} propose a nonparametric detector based on kernel density-ratio estimation between consecutive reference and test windows, with an objective designed to yield an unbiased density-ratio estimate under the null hypothesis. \cite{kalinke2025optimal} introduce MMDEW, which uses exponential windows to efficiently compute MMD-based statistics for online change detection with polylogarithmic runtime and logarithmic memory complexity. \cite{kalinke2025optimal} introduce Online RFF MMD, which uses random Fourier features to efficiently approximate MMD over a dyadic grid of candidate changepoint locations.  
\cite{Wei2022OnlineKC} develop an online kernel CUSUM procedure that compares recent observation windows of varying lengths with pre-change reference blocks, constructs self-normalized MMD-based statistics for the corresponding candidate changepoint locations, and uses their maximum as the detection statistic.

Online implementations introduce a second challenge: determining how much of the past, the \emph{memory}, the detector should retain. A long memory reduces estimation variability under the no-change regime, but this can also make the detector less reactive to a change and thereby delay detection. A short memory enables a faster response, but it also increases estimation variability and may lead to more frequent false alarms. Existing methods typically address this trade-off by relying on fixed memory parameters, such as a prespecified window length or fixed exponential forgetting factors. However, such choices can be restrictive when the magnitude, direction, and duration of future changes are unknown, especially if one is monitoring a process where multiple changepoints can occur.  We address this trade-off by learning the value of the forgetting factor directly from the data stream. Adaptive forgetting has previously been studied in online classification \citep{anagnostopoulos2012online}, continuous changepoint detection for mean shifts \citep{bodenham2017continuous}, online changepoint detection for transition matrices \citep{Plasse2021Streaming}, unsupervised streaming detection based on adaptive PCA \citep{hoeltgebaum2021unsupervised}, and online spectral density estimation \citep{kazi2026online}. To the best of our knowledge, the present work is the first adaptive online kernel-based changepoint detection method that incorporates a data-driven forgetting mechanism.

In the proposed method, each incoming observation is compared with a recursively-weighted representation of past observations in the RKHS. The forgetting factor is updated through a stochastic gradient-based algorithm driven by a cost function that measures the squared RKHS distance between the feature representation of the current observation and the recursively-weighted kernel mean embedding of past observations. In this sense, the cost function measures a distributional difference between the current and past data, following the same principle as the MMD. Intuitively, the method retains past information when consecutive observations remain compatible, but discounts older information more aggressively when the current feature representation becomes inconsistent with the recent past. The resulting detector therefore adapts its effective memory to the local dynamics of the data stream rather than fixing it for the entire monitoring horizon. Our experiments on simulated and real data demonstrate that the proposed method maintains high detection power for weak change signals while effectively controlling the average run length. For larger changes, it achieves rapid detection with short expected detection delays. These results indicate that adaptive forgetting provides a practical and flexible mechanism for balancing stability and responsiveness in online changepoint detection.

The remainder of the paper is organized as follows. Section~\ref{prelim} introduces the preliminary concepts and problem setting. Section~\ref{weighted-average} presents the proposed methodology, and Section~\ref{AFFexpr} reports the experimental results. Section~\ref{supmat} provides the proofs and additional experimental results in Supplementary material.

\section{Preliminaries}\label{prelim}

\subsection{Changepoint detection}

\label{sec:prelimcpd}

Consider independent data stream $(X_t)_{t\geq 1}$ taking values in $\mathcal X\subseteq \mathbb R^d$ and following an unknown pre-change distribution $p$:
\[X_1, X_2,\ldots,X_M\stackrel{\mathrm{i.i.d}}{\sim} p.\]
The goal is to perform a sequential test of
\[
H_0:\quad X_1,\ldots,X_t,\ldots \stackrel{\mathrm{i.i.d.}}{\sim} p
\]
against
\begin{align*}
    H_1:\quad \exists k> 1 \text{ such that } X_1,X_2,\ldots,X_k \stackrel{\mathrm{i.i.d.}}{\sim} p,
\quad
X_{k+1},X_{k+2},\ldots\stackrel{\mathrm{i.i.d.}}{\sim} q\neq p.
\end{align*}

We aim to detect a change in the underlying data distribution as soon as possible after the change has occurred, while controlling false alarms.
On the filtered probability space $(\Omega,\mathcal F,(\mathcal F_t)_{t\geq 0},\mathbb P)$ where $(\mathcal F_t)_{t\geq 0}$ is the filtration generated by $(X_t)_{t\geq 1}$, an online changepoint detection procedure aims to determine a stopping time $T$ with respect to the filtration $(\mathcal F_t)_{t\geq 0}$. 
Let \(\mathbb{P}_{\infty}\) denote the probability distribution of \((X_t)_{t\geq 1}\) when no change occurs, and for the changepoint $k<\infty$, let \(\mathbb{P}_k\) denote the probability distribution of \((X_t)_{t\geq 1}\) when the change occurs after time \(k\), so that \(X_1,\dots,X_k\sim p\) and \(X_{k+1},X_{k+2},\dots\sim q\neq p\). Let \(\mathbb{E}_{\infty}\) and \(\mathbb{E}_k\) denote expectations under \(\mathbb{P}_{\infty}\) and \(\mathbb{P}_k\), respectively.

A false alarm occurs when the stopping time is identified before the changepoint occurs. Therefore the false-alarm probability when the changepoint occurs at time $k<\infty$ is
\(
\mathbb{P}_k(T\leq k),
\)
and in the no-change scenario, the false alarm probability is
\(
\mathbb{P}_{\infty}(T<\infty).
\)
The average run length (ARL) under $H_0$ is defined by 
\[
\operatorname{ARL}:=\mathbb{E}_{\infty}[T].
\]
Conditional on no false alarm before the change, one defines the (conditional) expected detection delay as
\[
\operatorname{EDD}:=\mathbb{E}_k[T-k|T>k].
\]
The ARL and EDD are the two performance metrics often used when evaluating the performance of an online changepoint detection method \citep{Page1954CONTINUOUSIS}. Sometimes these are referred to as $\mathrm{ARL}_{0}$ and $\mathrm{ARL}_{1}$, respectively.
One aims to design a changepoint algorithm that has a high average run length while minimizing the expected detection delay.

\subsection{EWMA chart}

The Exponentially Weighted Moving Average (EWMA) chart were first proposed by \cite{Roberts1959} for detecting change in mean for a sequence of random variables. Let $X_1,\ldots,X_n$ be a sequence of independent random variables with the common mean $u_0$ before the changepoint, and $u_1$ after the changepoint. The EWMA chart is defined recursively by $Z_0=u_0$,
\[Z_t=\lambda Z_{t-1}+(1-\lambda)X_t,\quad t\geq 1,\]
where $0<\lambda<1$.
The EWMA statistic assigns greater weight to recent data, with past information decaying exponentially. The parameter $\lambda$ controls the relative weight placed on recent versus older data, smaller values emphasize recent data, while larger values retain more historical information. Before the changepoint, the mean of $Z_t$ is $u_0$, the standard deviation of $Z_t$ is $\sigma_{Z_t}=\sqrt{\frac{(1-\lambda)}{(1+\lambda)}(1-\lambda^{2t})}\sigma_{X}$, where $\sigma_{X}$ is the common standard deviation of $X_t$. Thus, before the changepoint, $Z_t$ will fluctuates around $u_0$. After a change occurs, $Z_t$ will move away from  $u_0$ and toward $u_1$. The changepoint is detected when either 
\(Z_t>u_0+L\sigma_{Z_t}\) or \(Z_t<u_0-L\sigma_{Z_t}\), where $L$ is the control limit which is chosen to achieve a predefined level of detection performance.

Instead of detecting changes in the mean, we aim to detect changes in the underlying distribution. Therefore, we embed the data into a reproducing kernel Hilbert space, where distributional changes can be detected.

\subsection{Reproducing kernel Hilbert Spaces}\label{methdg}
\begin{definition}[Reproducing kernel Hilbert space {\cite{aronszajn1950theory}}]\label{RKHS}
    Let \(\mathcal X\) be an arbitrary set . A Hilbert space \(\mathcal H\) of real-valued functions on \(\mathcal X\) is called a reproducing kernel Hilbert space (RKHS) if, for every \(x\in\mathcal X\), the point-evaluation functional
    \[
    L_x:\mathcal H\to\mathbb R,\qquad L_x(f)=f(x),
    \]
    is continuous. Equivalently, there exists a unique function \(K:\mathcal X\times\mathcal X\to\mathbb R\), called the reproducing kernel of \(\mathcal H\), such that \(K(\cdot,x)\in\mathcal H\) for every \(x\in\mathcal X\) and
    \[
    f(x)=\langle f,K(\cdot,x)\rangle_{\mathcal H},
    \qquad \forall f\in\mathcal H, \forall \ x\in\mathcal X.
    \]
    In particular,
    \[
    K(x,y)=\langle K(\cdot,y),K(\cdot,x)\rangle_{\mathcal H},
    \qquad x,y\in\mathcal X.
    \]
\end{definition}
The existence of the function \(K:\mathcal X\times\mathcal X\to\mathbb R\) is guaranteed by Riesz’s representation theorem as $L_x$ is continuous at every $f\in \mathcal{H}$ \citep{reed1980methods,christmann2008support}. 
It is easy to see that the reproducing kernel \(K\) is symmetric and positive semidefinite, i.e., for every \(n\in\mathbb N\), for every \(x_1,\dots,x_n\in\mathcal X\), and for every \(c_1,\dots,c_n\in\mathbb R\), \(\sum_{i,j=1}^n c_i c_j K(x_i,x_j)\geq 0\). The Moore–Aronszajn theorem \citep{aronszajn1950theory} is a converse to this, it states that if a function 
\(K\) satisfies these conditions then there is a Hilbert space of functions on \(\mathcal X\) for which it is a reproducing kernel.  Denote $\mathcal H(K) $ be the reproducing kernel Hilbert space associated with a positive semidefinite kernel $K:\mathcal{X}\times\mathcal{X}\to \mathbb{R}$.

\begin{definition}[Mean embedding in Hilbert space \cite{Gretton2012AKT}]\label{meanembedding}
    Let $p$ be a probability distribution such that $\mathbb{E}_{X\sim p}K^{1/2}(X,X)<\infty$. Then there exist a $\mu_{p}\in\mathcal H(K)$ such that for every $f\in\mathcal H(K)$,
    \[
    \mathbb{E}_{X\sim p} f(X)=\langle f,\mu_p\rangle_{\mathcal H(K)}.
    \]
    $\mu_p$ is called the mean embedding of $p$ in $\mathcal H(K)$.
\end{definition}
The existence of the embedding $\mu_p$ in $\mathcal H(K)$ is guaranteed by Riesz’s representation theorem as the linear functional $f\mapsto \mathbb{E}_{X\sim p} f(X)$ is bounded under the assumption that $\mathbb{E}_{X\sim p}K^{1/2}(X,X)<\infty$.

The mean embedding provides the idea for detecting change in distributions. Under no changepoint regime with distribution \(p\), the empirical
average of the feature vectors \(K(X_t,\cdot)\) estimates \(\mu_p\). If the
distribution changes from \(p\) to \(q\), the population mean of the incoming
feature vectors changes from \(\mu_p\) to \(\mu_q\). For a characteristic
kernel, \(p\neq q\) implies \(\mu_p\neq\mu_q\) (see, e.g., \cite{fukumizu2007kernel}). Hence, a distributional change
in the original observations becomes a mean change in the RKHS. This observation suggests continuously updating an estimate of the kernel mean embedding as new observations arrive and monitoring it for evidence of a distributional change. An equally weighted average is effective for estimating the mean embedding when the distribution remains unchanged, but it can respond slowly after a change because observations
from the previous regime retain the same weight as recent observations. For
online detection, it is therefore natural to discount older data and place
greater emphasis on observations that better represent the current regime.
The classical exponentially weighted moving average provides the basic
recursive mechanism for achieving this balance between stability and
responsiveness.

\section{Proposed method: OKAFF}\label{weighted-average}
Let \((X_t)_{t\ge 1}\) be a data stream taking values in
\(\mathcal X\subseteq\mathbb R^d\).  Let
\[
K:\mathcal X\times\mathcal X\to\mathbb R
\]
be a symmetric positive semidefinite kernel with the reproducing kernel Hilbert space
\(\mathcal H(K)\), and let
\[
\psi:\mathcal X\to\mathcal H(K),
\qquad
\psi(x):=K(x,\cdot),
\]
be the canonical feature map. Define
\[
Y_t:=\psi(X_t),\quad t\geq 1,
\]
then $(Y_t)_{t\geq 1}$ is the $\mathcal H(K)$-value data stream. In this way, distributional change detection of the original data stream is converted into mean-change detection in \(\mathcal H(K)\).  If \(X_t\sim p\), then
\[
\mathbb E[Y_t]
=\mathbb E_{X\sim p}[\psi(X)]
=\mu_p.
\]
After a change from \(p\) to \(q\), the mean of the feature stream changes
from \(\mu_p\) to \(\mu_q\). If \(K\) is characteristic, then \(p\neq q\)
implies \(\mu_p\neq\mu_q\). We can therefore detect a change in the
distribution of the original stream by monitoring the mean behavior of the
\(\mathcal H(K)\)-valued stream.

At time \(t \in  \{1, 2, \dots\}\), the empirical distribution of the original observations is
\[
 p_t=\frac{1}{t}\sum_{i=1}^t\delta_{X_i},
\]
where $\delta_x$ is the Dirac measure at $x\in\mathcal{X}$, and its mean embedding is
\[
\mu_{p_t}
=\frac{1}{t}\sum_{i=1}^t\psi(X_i).
\]
In what follows, we assign greater weight to recent observations because they better represent the current regime, sharing the same principle as the EWMA scheme. Let
\((\lambda_t)_{t\geq0}\) be a sequence of forgetting factors satisfying
\(0<\lambda_t\leq1\), and define
\begin{align}\label{ait}
    a_{i,t}:=\prod_{s=i}^{t-1}\lambda_s,
\qquad
W_t:=\sum_{j=1}^t a_{j,t},\qquad b_{i,t}:=\frac{a_{i,t}}{W_t},
\end{align}
with the convention that an empty product is equal to \(1\). The forgetting-factor-weighted empirical distribution is
\[
p_{t,\lambda}
=\sum_{i=1}^t\frac{a_{i,t}}{W_t}\,\delta_{X_i},
\]
and its mean embedding is defined as
\begin{equation}
 \bar Y_t :=  \mu_{p_{t,\lambda}}
=\sum_{i=1}^t\frac{a_{i,t}}{W_t}\,\psi(X_i).
\label{eqn:Ybardefn}
\end{equation}
Thus, \(\bar Y_t\) is simultaneously the weighted mean of the
Hilbert space-valued stream and the mean embedding of the weighted empirical
distribution of the original stream.

A direct evaluation of this weighted embedding would require revisiting all
past observations at every time step. To obtain an online procedure, we retain
only the unnormalized weighted feature sum and its total weight. These two
quantities are denoted by \(M_t\) and \(W_t\) below. Their ratio gives the
normalized weighted mean, and both admit one-step recursive updates. Let
\(
M_0=0\), \(W_0=0\).
For \(t\geq1\), define recursively
\begin{equation}
M_{t}=\lambda_{t-1} M_{t-1}+Y_{t}, \qquad
W_{t}=\lambda_{t-1} W_{t-1}+1.
\label{eqn:mwupdate}
\end{equation}
Then, for \(t\ge 1\), we have the normalized weighted average 
\begin{equation}
\bar Y_t(\bl{\lambda}_{t-1}) :=\bar Y_t =\frac{M_t}{W_t},
\label{eqn:ybar}
\end{equation}
which is a function of $\bl{\lambda}_{t-1}=(\lambda_s)_{0\leq s\leq t-1}=(\lambda_0,\lambda_1,\ldots,\lambda_{t-1})$ and the sequence of data $(X_t)_{t\geq 1}$. $\bar Y_t$ usually lives in infinite dimensional reproducing kernel Hilbert space, monitoring it directly is impractical. We therefore define detection statistic as
\begin{align}\label{detdecsta}
  S_t \:=\; \|\bar{Y}_t\|_{\mathcal H(K)}^2,\qquad t\geq1.
\end{align}

When \(\lambda_t=1\) for all \(t\), \(\bar{Y}_t\) is the empirical estimator of the mean embedding \(\mu_p\). It is unbiased, and the law of large numbers implies that $\bar{Y}_t$ converges to $\mu_p$ as $t\to\infty$ (see, e.g., \cite{berlinet2004reproducing}).

Before presenting our main theoretical results, we introduce some notation.
Recall the kernel mean embedding $\mu_p$ is defined as
\begin{align}\label{meanebd}
    \mu_p:=\mathbb E_{X\sim p}\psi(X)\in \mathcal H(K).
\end{align}
The covariance operator $\Sigma_p$ of the RKHS-valued random element \(\psi(X)\) is
defined by
\begin{align}\label{covebd}
\Sigma_p
:=
\mathbb E_{X\sim p}
\left[
\{\psi(X)-\mu_p\}\otimes \{\psi(X)-\mu_p\}
\right],
\end{align}
where, for \(f,g\in\mathcal H(K)\), the rank-one operator
\(f\otimes g:\mathcal H(K)\to\mathcal H(K)\) is defined by
\[
(f\otimes g)h
:=
\langle g,h\rangle_{\mathcal H(K)} f,
\qquad h\in\mathcal H(K).
\]
Then, for every \(h\in\mathcal H(K)\),
\[
\Sigma_p h
=
\mathbb E_{X\sim p}
\left[
\left\langle \psi(X)-\mu_p,h\right\rangle_{\mathcal H(K)}
\{\psi(X)-\mu_p\}
\right].
\]
For all \(h_1,h_2\in\mathcal H(K)\),
\[
\left\langle \Sigma_p h_1,h_2\right\rangle_{\mathcal H(K)}
=
\operatorname{Cov}
\left(
h_1(X),h_2(X)
\right),
\]
where we use the reproducing property
\[
\langle \psi(X),h\rangle_{\mathcal H(K)}=h(X).
\]

The following two results describe the asymptotic behaviour of the detection statistic $S_t$, in the cases when $\mu_p = 0$ and $\mu_p \neq 0$.
\begin{theorem}
\label{CLTffwave}
Let \(\mathcal X\) be a topological space equipped with its Borel \(\sigma\)-algebra \(\mathcal B(\mathcal X)\), and let \((X_t)_{t\ge1}\) be a sequence of i.i.d. \(\mathcal X\)-valued random variables with common law \(p\). Let \(K:\mathcal X\times\mathcal X\to\mathbb R\) be a continuous, symmetric, positive semidefinite kernel, and let \(\mathcal H(K)\) denote its associated RKHS. Assume that \(
\mathbb E_{X\sim p}K(X,X)<\infty
\) and that \(\mathcal H(K)\) is separable. Let \((\lambda_t)_{t\ge1}\) be a deterministic sequence satisfying
\(
0<\lambda_t<1\), and \(t(1-\lambda_t)\to0\), as \(t\to\infty\).
Denote the mean embedding of \(p\) by \(\mu_p\) and 
define the detection statistic $S_t$ as in \eqref{detdecsta}.
Then, if \(\mu_p=0\), as $t\to\infty$,
\[
tS_t
\stackrel{d}{\rightarrow}
\sum_{r\ge1}\eta_{r,p}Z_r^2,
\]
and hence,
\[
t(S_t-\mathbb E S_t)
\stackrel{d}{\rightarrow}
\sum_{r\ge1}\eta_{r,p}(Z_r^2-1),
\]
where \((\eta_{r,p})_{r\ge1}\) are the nonzero eigenvalues of
\(\Sigma_p\) defined in \eqref{covebd}, counted with multiplicity, and
\(Z_1,Z_2,\ldots\) are i.i.d. \(N(0,1)\).
\end{theorem}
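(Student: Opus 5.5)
The plan is to show that $\sqrt t\,\bar Y_t$ satisfies a central limit theorem in the separable Hilbert space $\mathcal H(K)$, with Gaussian limit $G$ of covariance $\Sigma_p$, and then apply the continuous mapping theorem to $h\mapsto\|h\|^2_{\mathcal H(K)}$. Write
\[
\sqrt t\,\bar Y_t=\sum_{i=1}^t \sqrt t\, b_{i,t}\,\psi(X_i),
\]
a triangular array of independent, mean-zero (since $\mu_p=0$) $\mathcal H(K)$-valued summands. The condition $\mathbb E K(X,X)<\infty$ gives $\mathbb E\|\psi(X)\|^2<\infty$, so $\Sigma_p$ is trace class with $\operatorname{tr}\Sigma_p=\mathbb E K(X,X)=\sum_r\eta_{r,p}$. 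Expanding $G$ in an orthonormal eigenbasis $(e_r)$ of $\Sigma_p$ gives $\|G\|^2=\sum_r\eta_{r,p}Z_r^2$, which is the first claim.

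\textbf{Step 1: the weights are asymptotically uniform.} Since $0<\lambda_s<1$, we have $a_{i,t}\le 1$, and hence $W_t\le t$. Fix $\varepsilon\in(0,1)$. For $i\ge\varepsilon t$,
\[
-\log a_{i,t}=\sum_{s=i}^{t-1}-\log\lambda_s\le C\sum_{s=\varepsilon t}^{t-1}(1-\lambda_s).
\]
Because $1-\lambda_s=o(1/s)$, this is $o(1)\cdot\log(1/\varepsilon)\to0$. The constant $C$ is available because $\lambda_s\to1$, and only finitely many early $\lambda_s$, bounded away from $0$, are involved for fixed $t$. Thus $a_{i,t}\to1$ uniformly over $i\ge\varepsilon t$. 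It follows that $\liminf_t W_t/t\ge 1-\varepsilon$ for every $\varepsilon$, so $W_t/t\to1$. In the same way,
\[
t\sum_i b_{i,t}^2=\frac{t\sum_i a_{i,t}^2}{W_t^2}\to1,
\qquad
\max_i \sqrt t\, b_{i,t}\le \frac{\sqrt t}{W_t}\to0.
\]

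\textbf{Step 2: Hilbert-space CLT.} The covariance operator of $\sqrt t\,\bar Y_t$ is $\bigl(t\sum_i b_{i,t}^2\bigr)\Sigma_p$, which converges to $\Sigma_p$ in trace norm by Step 1.
\begin{itemize}
\item For finite-dimensional distributions, take projections $\langle \sqrt t\,\bar Y_t,h\rangle$. These are weighted sums of i.i.d.\ scalars $h(X_i)$, and the Lindeberg condition holds because $\max_i\sqrt t\, b_{i,t}\to0$ while $\mathbb E h(X)^2<\infty$. Cramér--Wold then gives convergence of finitely many coordinates.
\item For tightness, use the tail bound
\[
\mathbb E\sum_{r>R}\langle\sqrt t\,\bar Y_t,e_r\rangle^2=\Bigl(t\sum_i b_{i,t}^2\Bigr)\sum_{r>R}\eta_{r,p}.
\]
This is small uniformly in large $t$, which is the standard flat-concentration criterion for tightness in separable Hilbert space.
\end{itemize}
Together these give $\sqrt t\,\bar Y_t\stackrel{d}{\to}G$. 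Since $tS_t=\|\sqrt t\,\bar Y_t\|^2$, the first claim follows.

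\textbf{Step 3: centred version.} By independence and $\mu_p=0$,
\[
t\,\mathbb E S_t=\Bigl(t\sum_i b_{i,t}^2\Bigr)\operatorname{tr}\Sigma_p\to\sum_r\eta_{r,p}.
\]
Slutsky's lemma then gives $t(S_t-\mathbb E S_t)\stackrel{d}{\to}\sum_r\eta_{r,p}(Z_r^2-1)$. This series converges in $L^2$ because $\sum_r\eta_{r,p}^2\le(\operatorname{tr}\Sigma_p)^2<\infty$.

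The main obstacle I expect is Step 1: checking carefully that $t(1-\lambda_t)\to0$ forces the weights to be uniform enough. The delicate part is that the weights on early indices may be badly distorted, so one must show this early block has negligible total mass, of order $\varepsilon$. The tightness argument in Step 2 is routine once the covariance convergence is in hand.
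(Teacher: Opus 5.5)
Your proof is correct and follows essentially the same route as the paper. Both arguments apply a Lindeberg-type CLT in $\mathcal H(K)$ to the weighted sum, using that the maximal weight vanishes, $W_t/t\to1$ and $t\delta_t\to1$; then continuous mapping to $\|\cdot\|_{\mathcal H(K)}^2$, the eigen-expansion of $\|G\|_{\mathcal H(K)}^2$, and Slutsky for the centred version. The differences are cosmetic: you normalize by $\sqrt t$ rather than $\sqrt{\delta_t}$, you verify the Hilbert-space CLT directly via finite-dimensional convergence plus flat concentration instead of citing Kundu et al., and you get uniformity of the weights by an $\varepsilon$-split with a logarithmic bound instead of the paper's inequality $1-\prod_r(1-x_r)\le\sum_r x_r$ combined with Cesaro.
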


\begin{theorem}
\label{cor:CLTffwave}
Under the same conditions as Theorem~\ref{CLTffwave},
if \(\mu_p\neq0\), then as $t\to\infty$,
\[
\sqrt t
\left(
S_t-\|\mu_p\|_{\mathcal H(K)}^2
\right)
\stackrel{d}{\rightarrow}
N\left(
0,
4\left\langle \Sigma_p\mu_p,\mu_p\right\rangle_{\mathcal H(K)}
\right).
\] 
Hence,
\[
\sqrt t
\left(
S_t-\mathbb E S_t
\right)
\stackrel{d}{\rightarrow}
N\left(
0,
4\left\langle \Sigma_p\mu_p,\mu_p\right\rangle_{\mathcal H(K)}
\right),
\]
where
\(
\mathbb E S_t
=
\eta_K\delta_t
+
\bigl(1-\delta_t\bigr)\theta_K,
\)
with \(\eta_K:=\mathbb E_{X\sim p} K(X,X)\),
$\theta_K:=\mathbb E_{X,Y\sim p}  K(X,Y)$, with \(X,Y\) independent random variables with law \(p\), 
and
\begin{align}\label{deltaL}
    \delta_t
:=
\sum_{i=1}^t b_{i,t}^2,
\end{align}
where \(b_{i,t}\) is defined in \eqref{ait}.
\medskip
\end{theorem}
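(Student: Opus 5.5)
Write $\bar Y_t-\mu_p=\sum_{i=1}^t b_{i,t}(Y_i-\mu_p)=:U_t$. Then $S_t=\|\mu_p+U_t\|^2$ and
\[
S_t-\|\mu_p\|_{\mathcal H(K)}^2 = 2\langle U_t,\mu_p\rangle_{\mathcal H(K)}+\|U_t\|_{\mathcal H(K)}^2 .
\]
I will show that the linear term, scaled by $\sqrt t$, is asymptotically normal with variance $4\langle\Sigma_p\mu_p,\mu_p\rangle$, and that $\sqrt t\,\|U_t\|^2\to0$ in probability. Slutsky's lemma then gives the first display.

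\textbf{Step 1: the weights.} The hypothesis $t(1-\lambda_t)\to0$ (with $\lambda_0$ fixed, which is harmless) makes the weights asymptotically uniform, in the sense that $t\,\delta_t=t\sum_i b_{i,t}^2\to1$. This is presumably already established in the proof of Theorem~\ref{CLTffwave}; if not, I argue directly. For $i\ge t/2$ the products $a_{i,t}=\prod_{s=i}^{t-1}\lambda_s$ satisfy $1-a_{i,t}\le\sum_{s\ge t/2}^{t-1}(1-\lambda_s)=o(1)$. For $i<t/2$ we have $0<a_{i,t}\le1$ trivially, and the uniform bound $a_{i,t}\ge\prod_{s\ge t/2}\lambda_s\cdot a_{i,t/2}$ iterates dyadically. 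This gives $\max_i |t b_{i,t}-1|\to0$ up to an early block whose contribution is negligible. I expect this to be the main obstacle: products over the early indices need not be close to $1$ (for instance $\lambda_0$ could be small), so the real claim is $t\sum_i b_{i,t}^2\to1$ and $\max_i b_{i,t}\to0$, which follow because $W_t\ge\sum_{i\ge t/2}a_{i,t}\sim t/2$ while every $a_{i,t}\le1$. A cleaner route is to split at $i=\epsilon t$, use $\sum_s(1-\lambda_s)/s$-type bounds on the late block, and bound the early block crudely by $\epsilon$.

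\textbf{Step 2: the linear term.} We have $\sqrt t\langle U_t,\mu_p\rangle=\sum_i \sqrt t\,b_{i,t}\,\xi_i$ with $\xi_i=\mu_p(X_i)-\langle\mu_p,\mu_p\rangle$. The $\xi_i$ are i.i.d. real random variables with mean zero and variance $\sigma^2=\langle\Sigma_p\mu_p,\mu_p\rangle$, which is finite since $\mathbb E K(X,X)<\infty$. This is a triangular array whose total variance is $t\delta_t\sigma^2\to\sigma^2$. The Lindeberg condition holds because $\max_i \sqrt t\,b_{i,t}$ stays bounded while $\max_i b_{i,t}\to0$; more precisely, $\sum_i tb_{i,t}^2\,\mathbb E[\xi^2\mathbf 1\{|\xi|>\epsilon/(\sqrt t b_{i,t})\}]\to0$ by dominated convergence since $\sqrt t\, b_{i,t}\le C/\sqrt t$. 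The Lindeberg--Feller theorem then gives $2\sqrt t\langle U_t,\mu_p\rangle\to N(0,4\sigma^2)$. If $\sigma^2=0$ the limit is degenerate and the same conclusion holds trivially.

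\textbf{Step 3: the quadratic term.} By independence and $\mathbb E\|Y-\mu_p\|^2=\eta_K-\theta_K<\infty$,
\[
\mathbb E\|U_t\|^2=\delta_t(\eta_K-\theta_K)=O(1/t),
\]
so $\sqrt t\|U_t\|^2=O_p(t^{-1/2})\to0$. Slutsky's lemma now yields the first display.

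\textbf{Step 4: the second display and the formula for $\mathbb E S_t$.} Taking expectations in the decomposition, the cross term has mean zero, so
\[
\mathbb E S_t=\|\mu_p\|^2+\delta_t(\eta_K-\theta_K)=\eta_K\delta_t+(1-\delta_t)\theta_K,
\]
using $\|\mu_p\|^2=\theta_K$ (both follow from the reproducing property and Fubini). Then $\sqrt t(\mathbb E S_t-\|\mu_p\|^2)=\sqrt t\,\delta_t(\eta_K-\theta_K)=O(t^{-1/2})\to0$, and Slutsky's lemma transfers the limit to $\sqrt t(S_t-\mathbb E S_t)$.
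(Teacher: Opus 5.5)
Your proof is correct, but it takes a different route from the paper.

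\textbf{The paper's route.} It proves a Hilbert-space CLT for the whole normalized error $R_t=(\bar Y_t-\mu_p)/\sqrt{\delta_t}$. It checks the covariance, trace and coordinatewise Lindeberg conditions of the Kundu--Majumdar--Mukherjee theorem, which gives $R_t\stackrel{d}{\rightarrow}G\sim N(0,\Sigma_p)$ in $\mathcal H(K)$. The $\mu_p\neq0$ case then follows from the expansion $S_t=\|\mu_p\|^2+2\sqrt{\delta_t}\langle\mu_p,R_t\rangle+\delta_t\|R_t\|^2$, using continuous mapping ($\langle\mu_p,R_t\rangle\stackrel{d}{\rightarrow}\langle\mu_p,G\rangle$ and $\|R_t\|^2=O_{\mathbb P}(1)$) and Slutsky.

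\textbf{Your route.} You project onto $\mu_p$ before taking any limit. You therefore need only a scalar Lindeberg--Feller CLT for $\sum_i b_{i,t}\{\mu_p(X_i)-\|\mu_p\|^2\}$. You dispose of the quadratic remainder with the exact second moment $\mathbb E\|U_t\|^2=\delta_t(\eta_K-\theta_K)$ and Markov's inequality.

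\textbf{Comparison.}
\begin{itemize}
\item Your argument is more elementary. It avoids any infinite-dimensional CLT and Karhunen--Lo\`eve, and does not need separability of $\mathcal H(K)$. Your remainder bound is also explicit: $\sqrt t\|U_t\|^2=O_{\mathbb P}(t^{-1/2})$.
\item The paper's single Hilbert-space CLT is what also delivers Theorem~\ref{CLTffwave}. When $\mu_p=0$ your linear term vanishes identically, and the limit law of $t\|U_t\|^2$ genuinely requires the full Gaussian limit $G$.
\end{itemize}

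\textbf{A small point in your Step 1.} The split at $t/2$ only yields $W_t\gtrsim t/2$. That gives $\max_i b_{i,t}=1/W_t\to0$ and $t\delta_t=O(1)$, which is enough for Lindeberg and for your Step 3. It does not give $t\delta_t\to1$, which you need for the limiting variance. Your $\epsilon t$ split does give it.

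The paper's argument is shorter. Summing $1-a_{i,t}\le\sum_{s=i}^{t-1}(1-\lambda_s)$ over $i$ gives $\frac1t\sum_{i=1}^t(1-a_{i,t})\le\frac1t\sum_{s=1}^{t-1}s(1-\lambda_s)\to0$ by Ces\`aro. Hence $W_t/t\to1$ and $\frac1t\sum_i a_{i,t}^2\to1$, with no splitting. Also, $\lambda_0$ never enters $a_{i,t}$ for $i\ge1$, so your caveat about it is unnecessary.
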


The asymptotic framework requires the forgetting factor \(\lambda_t\) to vary over time and converge to \(1\) as \(t\to\infty\), thereby ensuring that the effective sample size grows asymptotically.  This requirement motivates us to consider a sequence of time-varying forgetting factors, \((\lambda_t)_{t\geq 1}\),  and to develop an adaptive forgetting-factor framework in the next section. The result further shows that, from an asymptotic perspective, the marginal distribution of the detection statistic \(S_t\) is nearly time-invariant. Consequently, a fixed control limit can be determined for the detection procedure for a given nominal in-control ARL, thereby providing practical guidance for threshold selection.

While Theorem~\ref{CLTffwave} assumes that \(t(1-\lambda_t)\to0\) as \(t\to\infty\), in practice \((\lambda_t)_{t\geq 1}\) is a stochastic, rather than a deterministic, sequence. Although we cannot necessarily guarantee that such a stochastic sequence will obey this assumption, this theorem provides practical guidance for setting thresholds for the detection statistic $S_t$.
The following result similarly explores a theoretical approach to setting a threshold which will control the false alarm rate.

\begin{theorem}\label{fap1}
Let \(\mathcal X\) be a topological space equipped with its Borel \(\sigma\)-algebra \(\mathcal B(\mathcal X)\), and let \((X_t)_{t\ge1}\) be a sequence of i.i.d. \(\mathcal X\)-valued random variables with common law \(p\).  Let \(K:\mathcal X\times\mathcal X\to\mathbb R\) be a continous, symmetric, positive semidefinite kernel with associated reproducing kernel Hilbert space \(\mathcal H(K)\). Assume that
\(
\sup_{x\in\mathcal X}K(x,x)\le \kappa 
\)
for some \(\kappa>0\), and that \(\mathcal H(K)\) is separable.  Let \((\lambda_t)_{t\ge1}\) be a deterministic sequence satisfying
\(
0<\lambda_t<1\), and $1-\lambda_t=o(t^{-\beta })$ for some $\beta\geq 1$. For every
\(\alpha\in(0,1)\), let
\begin{align*}
&\varepsilon_{t}=\kappa
\sqrt{\frac{L_t}{2}
\left[
\lfloor \log_2 t\rfloor\log 2
+
2\log(\lfloor \log_2 t\rfloor+1)
+
\log\left(\frac{2\pi^2}{3\alpha}\right)
\right]
}
\end{align*}
with 
\(
L_t
:=
\sum_{i=1}^t
b_{i,t}^2(4-3b_{i,t})^2
\), and \(b_{i,t}\) is defined in \eqref{ait}.
Then, 
\[
\begin{aligned}
\mathbb P_\infty(T<\infty)
<
\alpha,
\end{aligned}
\]
where the stopping time is defined as
\(
T=
\inf\Big\{t:\ |S_t-\mathbb E S_t|>\varepsilon_t\Big\},
\)
and the detection statistic $S_t$ is defined in \eqref{detdecsta} and
\(
\mathbb E S_t
=
\eta_K\delta_t
+
\bigl(1-\delta_t\bigr)\theta_K,
\)
with \(\eta_K:=\mathbb E_{X\sim p} K(X,X)\),
$\theta_K:=\mathbb E_{X,Y\sim p}  K(X,Y)$, \(X,Y\) being independent random variables with law \(p\), 
\(
\delta_t
:=
\sum_{i=1}^t b_{i,t}^2
\), and \(\varepsilon_t=O(\sqrt{t^{-1}\log t})\).
\end{theorem}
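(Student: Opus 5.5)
The plan is to apply McDiarmid's bounded-differences inequality at each fixed $t$, and then take a union bound over $t$ organised by dyadic blocks. Throughout, $(\lambda_t)$ is deterministic, so the weights $b_{i,t}$ are deterministic. Expanding $\bar Y_t=\sum_i b_{i,t}\psi(X_i)$ gives
\[
S_t=\sum_{i,j=1}^t b_{i,t}b_{j,t}K(X_i,X_j),
\]
which is a deterministic function of the independent variables $X_1,\dots,X_t$. The formula $\mathbb E S_t=\eta_K\delta_t+(1-\delta_t)\theta_K$ is obtained by separating the diagonal and off-diagonal terms and using $\sum_i b_{i,t}=1$. This is exactly the expression already used in Theorem~\ref{cor:CLTffwave}.

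\textbf{Step 1 (bounded differences).} Replace $X_i=x$ by $x'$ and leave the other coordinates fixed. The change in $S_t$ is
\[
b_{i,t}^2\{K(x',x')-K(x,x)\}+2b_{i,t}\sum_{j\ne i}b_{j,t}\{K(x',X_j)-K(x,X_j)\}.
\]
We have $0\le K(x,x)\le\kappa$, and by Cauchy--Schwarz $|K(x,y)|\le\kappa$. Hence the first term is at most $\kappa b_{i,t}^2$ in absolute value, and the second is at most $2b_{i,t}(1-b_{i,t})\cdot 2\kappa$. So
\[
c_i:=\kappa b_{i,t}(4-3b_{i,t}),\qquad \sum_i c_i^2=\kappa^2L_t.
\]
McDiarmid's inequality then gives
\[
\mathbb P_\infty(|S_t-\mathbb ES_t|>\varepsilon)\le 2\exp\!\left(-\frac{2\varepsilon^2}{\kappa^2L_t}\right).
\]

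\textbf{Step 2 (union bound over time).} Write $k=\lfloor\log_2 t\rfloor$ and plug in $\varepsilon=\varepsilon_t$. The exponent equals $-[k\log 2+2\log(k+1)+\log(2\pi^2/(3\alpha))]$. Therefore
\[
\mathbb P_\infty(|S_t-\mathbb ES_t|>\varepsilon_t)\le \frac{3\alpha}{\pi^2}\,2^{-k}(k+1)^{-2}.
\]
The block $\{2^k\le t<2^{k+1}\}$ contains $2^k$ indices, so the block contributes at most $3\alpha/(\pi^2(k+1)^2)$. Summing over $k\ge0$ and using $\sum_k(k+1)^{-2}=\pi^2/6$ gives
\[
\mathbb P_\infty(T<\infty)\le\sum_t\mathbb P_\infty(|S_t-\mathbb ES_t|>\varepsilon_t)\le \alpha/2<\alpha .
\]

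\textbf{Step 3 (rate of $\varepsilon_t$).} This is the only place the condition $1-\lambda_t=o(t^{-\beta})$ with $\beta\ge1$ is used, and I expect it to be the main technical point. Since $L_t\le16\delta_t$ and the bracket in $\varepsilon_t$ is $O(\log t)$, it suffices to show $\delta_t=O(1/t)$.

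Since $a_{j,t}\le1$, we have $b_{i,t}\le 1/W_t$ and $\delta_t\le 1/W_t$. So the task reduces to showing $W_t\ge ct$ for some $c>0$.

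Fix $\epsilon\in(0,1)$. Choose $s_0$ such that $1-\lambda_s\le\epsilon/s$ and $\log\lambda_s\ge-2(1-\lambda_s)$ for all $s\ge s_0$. Then for $i\ge s_0$,
\[
\log a_{i,t}\ge-2\epsilon\sum_{s=i}^{t-1}s^{-1}\ge -2\epsilon\log(t/(i-1)).
\]
Hence $a_{i,t}\ge ((i-1)/t)^{2\epsilon}$, and summing over $i\ge s_0$ gives $W_t\ge c\,t$ for large $t$. This yields $\varepsilon_t=O(\sqrt{t^{-1}\log t})$.

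The subtlety is that the $o(1/s)$ rate does not make $\sum_s(1-\lambda_s)$ summable. It only gives polynomially slow decay of $a_{i,t}$, so one must check that this decay is still mild enough to keep $W_t$ of order $t$. The power bound above is designed to handle exactly this.
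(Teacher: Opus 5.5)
Your proposal is correct and follows the paper's proof essentially step for step. The common structure is McDiarmid's inequality with bounded differences $\kappa b_{i,t}(4-3b_{i,t})$, the dyadic-block union bound giving $\alpha/2$, and the reduction $L_t\le16\delta_t\le16/W_t$. The only deviation is in showing $W_t\ge ct$. The paper bounds $a_{i,t}$ below by the constant $e^{-4^{\beta+1}}$ for $i\ge\lfloor t/2\rfloor$, whereas you use the power bound $a_{i,t}\ge((i-1)/t)^{2\epsilon}$ for all $i\ge s_0$; both are valid for $\beta\ge1$.
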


{\centering
\includegraphics[
    width=0.9\linewidth
]{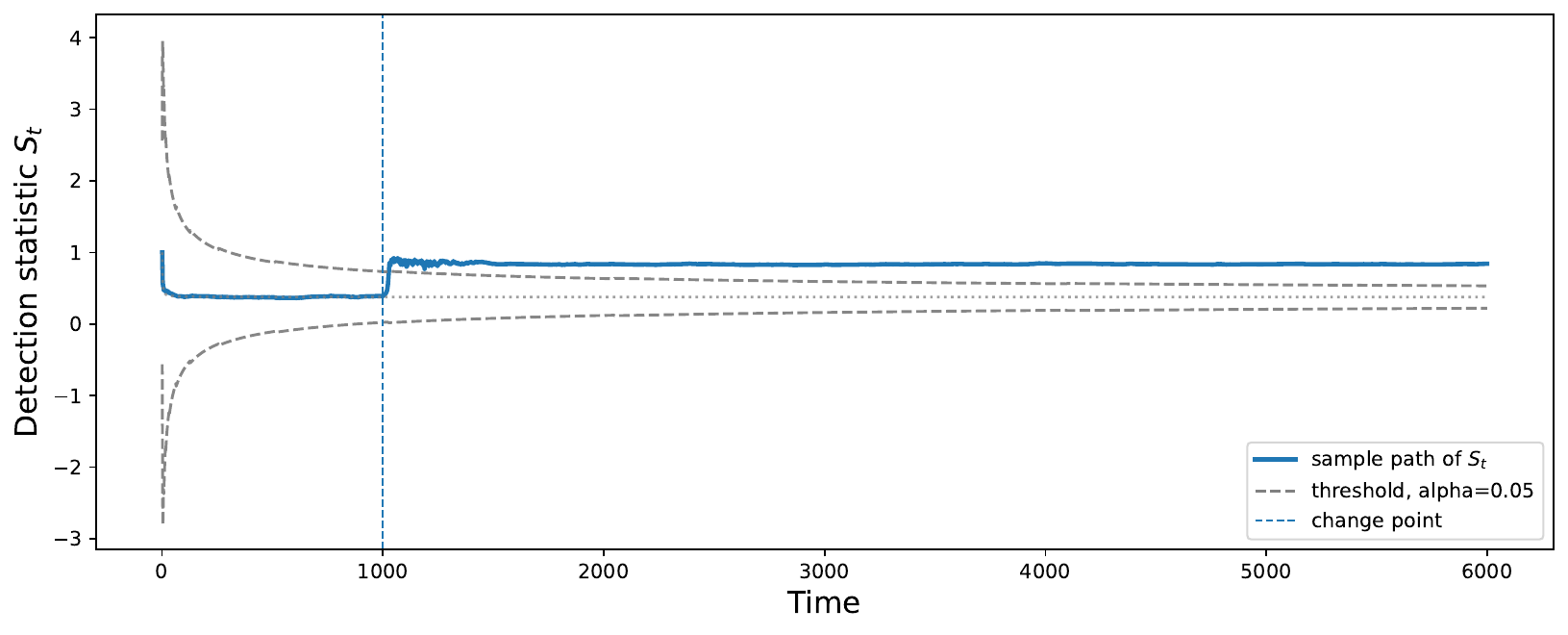}

\captionof{figure}{Example of a sample path of the proposed detection statistic $S_t$, with the time-dependent threshold from Theorem~\ref{fap1}, for data with a changepoint at $\kappa=1000$.}
\label{fig:time-dependent-threshold}
\par}

Figure~\ref{fig:time-dependent-threshold} shows an example of the proposed test statistic and the time-dependent thresholds derived in Theorem~\ref{fap1} for data with a changepoint.
In this example, the data are $20$-dimensional with the pre-change data independently sampled from a multivariate $\mathcal{N}(\mathbf{0}, I)$ distribution, and the post-change data are independently sampled from a $\operatorname{Laplace}(0,0.3)^{20}$ distribution, i.e. each of the $20$ components is independent and follows a $\operatorname{Laplace}(0,0.3)$ distribution. The forgetting factor is set using the adaptive procedure described in the next section.

While the time-dependent threshold in Theorem \ref{fap1} is theoretically sound, there are practical issues when using it with sequences where the changepoint occurs early in the monitoring process. This is discussed in Section \ref{simu}, along with proposals for alternative thresholds that overcome this issue.

A simple example of how Theorem~\ref{CLTffwave} can be applied is to consider the linear kernel and the $\mathbb{R}^d$-valued random vector $X$ with distribution $p$. Then its kernel mean embedding \(\mu_p = 0\), when $\mathbb{E}[X]=0$ and $\mathbb{E}\|X\|^2<\infty$.
We now consider an example of how Theorem~\ref{cor:CLTffwave} can be applied. Note that, for the Gaussian kernel \(K\), the mean embedding \(\mu_p \in \mathcal{H}_K\) is nonzero for every probability distribution \(p\). The calculation of $\|\mu_p\|_{H(K)}$ when $p=N(\mu,\Sigma)$ is derived in \cite{KeCe2019}. Combining this with Theorem~\ref{cor:CLTffwave}, we have the following calculation.

\begin{example}[Gaussian distribution]
Let  \(X_i\stackrel{i.i.d.}{\sim}N(\mu,\Sigma)\) on \(\mathbb R^d\), $i=1,2,\ldots,t$, and let
\[
K(x,y)=\exp\left(-\frac{\|x-y\|^2}{2\gamma^2}\right),
\qquad \gamma>0.
\]
Let \(\mathcal H(K)\) be the corresponding RKHS. Then
\[
\mathbb E S_t
=
\delta_t
+
(1-\delta_t)
\det\left(I+\frac{2\Sigma}{\gamma^2}\right)^{-1/2}
\]
where \(\delta_t\) is defined in \eqref{deltaL}, and
\begin{align*}
    \left\langle \Sigma_p\mu_p,\mu_p\right\rangle_{\mathcal H(K)}
&=
\det\left[
\left(I+\frac{\Sigma}{\gamma^2}\right)
\left(I+\frac{3\Sigma}{\gamma^2}\right)
\right]^{-1/2}-
\det\left(I+\frac{2\Sigma}{\gamma^2}\right)^{-1}.
\end{align*}
\label{eg:gaussian}
\end{example}

\subsection{Adaptive forgetting factors}

The forgetting factor approach defined in \eqref{eqn:mwupdate}--\eqref{detdecsta} controls the detector’s adaptivity and responsiveness, raising the question of how \(\lambda_t\) should be selected to achieve good detection performance. When \(\lambda_t=\lambda=1\), the recursion reduces to the standard unweighted statistic, whereas \(\lambda_t=\lambda=0\) discards the entire data history and retains only the current observation. For \(0<\lambda_t<1\), recent observations receive greater weight, while the influence of older observations decays exponentially. We develop a data-driven approach in which \(\lambda_t\) is allowed to vary over time and is updated according to an empirical measure criterion. This self-tuning procedure determines, in an online and data-dependent manner, how rapidly past information should be discarded.  Specifically, the cost function quantifies the discrepancy between the distribution represented by the current observation and that represented by past observations. This criterion follows the same principle as the maximum mean discrepancy (MMD) \citep{Gretton2012AKT}, which measures differences between current and historical data distributions in a nonparametric manner. When the kernel is characteristic, its kernel mean embedding
is injective \citep{fukumizu2007kernel}. Therefore, this cost function can in principle distinguish any two different probability distributions. In particular, the Gaussian and Laplace kernels are characteristic on $\mathbb R^d$ \citep{fukumizu2007kernel}. Consequently, the resulting detection statistic can identify changes in the underlying distribution of the data without requiring parametric assumptions about either the pre-change or post-change distribution. Although thresholds may depend on the pre-change distribution, we use the median heuristic to select the kernel bandwidth \citep{Gretton2012AKT}. Our numerical results suggest that this choice produces broadly similar thresholds across different pre-change distributions. However, the threshold (or control limit $L$ in \eqref{thre} below) must still be calibrated, and determining the control limit that achieves a desired in-control average run length is a nontrivial question. The adaptive forgetting-factor mechanism is also closely related to ideas from the Kalman filter (see, e.g., \cite{kalman1960new}, \cite{anagnostopoulos2012online}, \cite{bodenham2017continuous}). Inspired by these ideas, we propose an online procedure that adaptively updates the forgetting factor through an approximate gradient-based algorithm.
Define the cost function
\begin{align}\label{costfun}
    f_t(\bl{\tilde{\lambda}}_{t-2}):=\|\bar Y_{t-1}-Y_t\|_{\mathcal H(K)}^2,\quad t\geq2,
\end{align}
with $\bl{\tilde{\lambda}}_{t-2}=(\tilde{\lambda}_0,\ldots,\tilde{\lambda}_{t-2})$, which is discrepancy of the distribution between the current data point and the past data points.
The forgetting factors are updated adaptively using the following recursion for $t \geq 2$, after first specifying initial values for $\lambda_0$ and $\lambda_1$:
\begin{equation}
\lambda_t=\Pi_{[\ell_1,\ell_2]}\bigl(\lambda_{t-1}-\eta\,\nabla_{v}f_t(\bl{\lambda}_{t-2})\bigr),
\label{eqn:lambdat}    
\end{equation}
where $\Pi_{[\ell_1,\ell_2]}(z):=\min\{\ell_2,\max\{\ell_1,z\}\}$, $0<\ell_1<\ell_2<1$, \(\eta>0\) is a learning rate, and
\begin{align*}
\nabla_{v}f_t(\bl{\lambda}_{t-2})
&:= \left.\sum_{i=0}^{t-2}
\frac{\partial f_t}{\partial \tilde{\lambda}_i}\right|_{\bl{\tilde{\lambda}}_{t-2}=\bl{\lambda}_{t-2}}
\end{align*}
is the directional derivative of \(f_t(\bl{\tilde{\lambda}}_{t-2})\) at \(\bl{\lambda}_{t-2}=(\lambda_0,\ldots,\lambda_{t-2})\) in the direction \(v=(1,1,\ldots,1)\in\mathbb R^{t-1}\). The learning rate $\eta$ determines how
quickly the memory can adapt, while projection
prevents a single unusual observation from producing a forgetting factor value outside the range $(0, 1)$. 
The lower bound $\ell_1 > 0$ prevents the method
from discarding nearly all historical information,
the upper bound $\ell_2 < 1$ prevents the effective memory
from becoming arbitrarily long. 
By the chain rule, the directional derivative is computed by
\begin{equation}
\nabla_{v}f_t(\bl{\lambda}_{t-2})
=
2\bigl\langle \nabla_{v}\bar Y_{t-1},\,\bar Y_{t-1}-Y_t\bigr\rangle_{\mathcal H(K)},
\quad t\ge 2,
\label{eqn:dirder}    
\end{equation}

where
\begin{align*}
    \nabla_{v}\bar Y_{t-1} (\bl{\lambda}_{t-2})
&:=\left.\sum_{i=0}^{t-2}
\frac{\partial \bar Y_{t-1}}{\partial \tilde{\lambda}_i}\right|_{\bl{\tilde{\lambda}}_{t-2}=\bl{\lambda}_{t-2}}.
\end{align*}
Since the weighted
average
\[
\bar Y_{t-1}=\frac{M_{t-1}}{W_{t-1}},
\]
it follows that $\nabla_{v}\bar Y_{t-1}$ satisfies 
\begin{equation}
\nabla_{v}\bar Y_{t-1}
=
\frac{\nabla_{v}M_{t-1}-\bar Y_{t-1}\nabla_{v}W_{t-1}}{W_{t-1}},
\quad t\ge 2,
\label{eqn:gradY}
\end{equation}
where
\begin{align*}
\nabla_{v}M_{t-1}(\bl{\lambda}_{t-2})
&:=\left.\sum_{i=0}^{t-2}
\frac{\partial M_{t-1}}{\partial \tilde{\lambda}_i}\right|_{\bl{\tilde{\lambda}}_{t-2}=\bl{\lambda}_{t-2}},\qquad
\nabla_{v}W_{t-1}(\bl{\lambda}_{t-2})
&:=\left.\sum_{i=0}^{t-2}
\frac{\partial W_{t-1}}{\partial \tilde{\lambda}_i}\right|_{\bl{\tilde{\lambda}}_{t-2}=\bl{\lambda}_{t-2}}.
\end{align*}
Moreover, for \(t\ge 2\),
\begin{align}\label{eqn:mw}
M_{t-1}=\lambda_{t-2}M_{t-2}+Y_{t-1},
\qquad
W_{t-1}=\lambda_{t-2}W_{t-2}+1.
\end{align}
Hence, $\nabla_{v}M_{t-1}$ satisfys the recursion
\begin{align}\label{eqn:gradm}
\nabla_{v}M_{t-1}=M_{t-2}+\lambda_{t-2}\nabla_{v}M_{t-2},
\qquad t\ge 2,
\end{align}
with initial value
\(
\nabla_{v}M_0=0,
\)
and similarly $\nabla_{v}W_{t-1}$ satisfys the recursion
\begin{align}\label{eqn:gradw}
\nabla_{v}W_{t-1}=W_{t-2}+\lambda_{t-2}\nabla_{v}W_{t-2},
\qquad t\ge 2,
\end{align}
with initial value
\(
\nabla_{v}W_0=0.
\)
These recursions make online adaptation computationally feasible because the algorithm does not need to revisit the entire observation history whenever a new data point arrives. 
Recall that the detection statistic is
\[
S_t:=\|\bar Y_t\|_{\mathcal H(K)}^2,
\qquad t\geq1.
\]
Its behavior is governed by both the distributional features captured by the
selected kernel and the effective memory induced by the adaptive forgetting
factors. Under the no-change regime, consecutive observations
have the same distribution, so the cost function remains relatively small
and the adaptive forgetting factor tends to stay close to \(1\), preserving a
long effective memory. When the changepoint  occurs, the discrepancy between the
current observation and the weighted historical summary increases. The cost
function responds to this discrepancy by reducing the forgetting factor,
thereby discounting observations from the previous regime and assigning greater
influence to recent data. A change is signalled when \(S_t\) enters a
prescribed rejection region. Figure~\ref{fig:adaptiveforgettingfactor} shows how 
the adaptive forgetting factor is close to $1$ before the changepoint at $t=1000$, 
then its value drops soon after the changepoint occurs, before recovering to a value
close to $1$ during the post-change regime.

{\centering
\includegraphics[
    width=0.9\linewidth
]{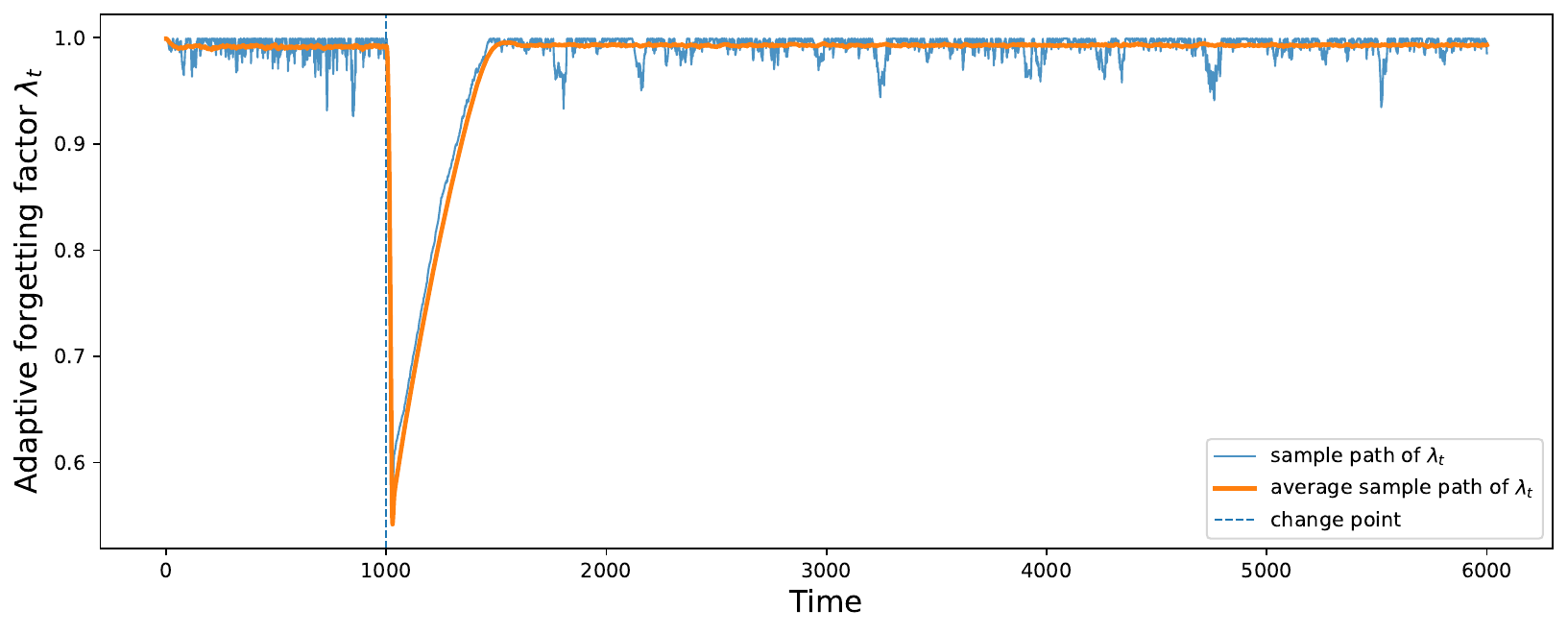}

\captionof{figure}{Sample paths of the adaptive forgetting factor, with a changepoint at $\kappa=1000$.}
\label{fig:adaptiveforgettingfactor}
\par}

\subsection{Computationally efficient detection statistics with Random Fourier Features}

The exact RKHS formulation is useful for analysis, but a direct implementation
can become expensive because the kernel representation of a running mean would
contain one term for every observation seen so far. Since the feature map is typically implicit, the statistic is evaluated through kernel functions,
\[
S_t
=
\|\bar Y_t\|_{\mathcal H(K)}^2
=
\frac{1}{W_t^2}
\sum_{i=1}^t\sum_{j=1}^t
a_{i,t}a_{j,t}K(X_i,X_j)
\]
where $W_t$ and $a_{i,t}$ are defined in \eqref{ait}. Consequently, an exact calculation would require retaining all previous observations and evaluating an increasing number of pairwise kernel terms as the data stream grows.
Its storage and computational
cost would therefore grow with time, which would make the method unsuitable for online changepoint detection. The random Fourier features method \citep{rahimi2007random} replaces the implicit,
possibly infinite-dimensional feature map by an explicit vector of finite
dimension. Once the random frequencies have been drawn, every observation can
be mapped into this feature space and all subsequent updates reduce to ordinary
vector operations. This approximation creates a transparent statistical--computational trade-off.
Increasing the number of random features improves the kernel approximation but requires more memory and computation per observation, whereas using fewer features produces a faster detector at the cost of greater approximation error. In
either case, the cost no longer grows with the monitoring horizon, because the
algorithm stores only fixed-dimensional state vectors rather than the full
sequence of past observations, the procedure is updated via the $O(1)$ recursive equations \eqref{eqn:gradY}--\eqref{eqn:gradw}.

Suppose that \(K : \mathbb{R}^d \times \mathbb{R}^d \to \mathbb{R}\) is a continuous, real-valued, translation-invariant semipositive definite kernel satisfying for all  $n\in\mathbb{N}$, $\ x_1,\dots,x_n\in\mathbb{R}^d$, $\ c_1,\dots,c_n\in\mathbb{R}$.
\[
\sum_{i,j=1}^n c_i c_j\, K(x_i,x_j)\ge 0
\]
By Bochner's theorem \citep{rudin1962fourier}, there exists a finite symmetric positive Borel measure \(\nu\) on \(\mathbb{R}^d\) such that
\begin{align*}
    K(x,y)&=\int_{\mathbb R^d}\psi_\omega(x)\psi_\omega(y)^*\,d\nu(\omega)=
\int_{\mathbb{R}^d} \cos\!\big(\omega^\top(x-y)\big)\,d\nu(\omega),
\end{align*}
where \(
\psi_\omega(x)=e^{i\omega^{\top}x}
\), \(^{*}\) denotes complex conjugation, and the sine term vanishes because \(K\) is symmetric and real-valued.
When \(\nu\) is a probability measure, this representation can be written as the expected inner product of the random Fourier feature \(\psi_\omega\)
\[
K(x,y)
=\mathbb E_{\omega\sim\nu}\big[\psi_\omega(x)\psi_\omega(y)^*\big].
\]
The random Fourier feature approximation proposed by
\cite{rahimi2007random}
replaces \(\nu\) by the empirical measure \(\frac{1}{m}\sum_{j=1}^m \delta_{\omega_j}\), where \(\omega_1,\dots,\omega_m\stackrel{\mathrm{i.i.d.}}{\sim}\nu\) are independent frequencies.
Define the finite-dimensional real-valued random feature map
\begin{align}\label{rff}
\widehat{\psi}(x)
&=
\frac{1}{\sqrt{m}}
\begin{pmatrix}
\cos(Wx)\\
\sin(Wx)
\end{pmatrix}
\in\mathbb{R}^{2m},
\end{align}
where \(W\in\mathbb{R}^{m\times d}\) is a random frequency matrix whose
\(j\)th row is \(\omega_j^\top\), for \(j=1,\ldots,m\), and the
trigonometric functions are applied componentwise. The Euclidean inner product of the finite-dimensional feature maps 
\begin{align*}
    \hat{K}(x,y)
&:=\left\langle \hat{\psi}(x),\hat{\psi}(y)\right\rangle_{\mathbb R^{2m}}=
\frac{1}{m}\sum_{j=1}^m
\cos\!\big(\omega_j^\top(x-y)\big)
\end{align*}
provides an estimator of
\(
K(x,y)
=
\mathbb E_{\omega\sim\nu}
\big[\cos\!\big(\omega^\top(x-y)\big)\big].
\)
For the Gaussian kernel
\[
K(x,y)=\exp\!\left(-\frac{\|x-y\|^2}{2\gamma^2}\right),
\]
the corresponding spectral measure is the Gaussian measure
\(
\nu=\mathcal N(0,\gamma^{-2}I),
\)
so that
\(
K(x,y)=\mathbb E_{\omega\sim \nu}
\big[\psi_\omega(x)\psi_\omega(y)^*\big].
\)
Since we use random Fourier features to approximate the original kernel, we will use the finite-dimensional random feature map \(\hat{\psi}\) in place of the original feature map \(\psi\) in Section~\ref{weighted-average} to define the detection statistic. The detection statistic is defined by
\begin{equation}
S_t^{\mathrm{RFF}}=\|\bar{Y}^{\mathrm{RFF}}_t\|^2,
\qquad 
    \bar{Y}^{\mathrm{RFF}}_t 
    = \sum_{i=1}^t\frac{a_{i,t}}{W_t}\,\hat \psi(X_i),
\label{eq:strff}    
\end{equation}

where \(\|\cdot\|\) denotes the Euclidean norm, \(\bar{Y}^{\mathrm{RFF}}_t\) is defined as \(\bar{Y}_t\) in \eqref{eqn:Ybardefn} with the original infinite-dimensional feature map \(\psi\) in Section~\ref{weighted-average} replaced by the finite dimensional random feature map \(\hat{\psi}\) defined in \eqref{rff}. 

\paragraph*{Algorithm.}
The complete procedure is presented in Algorithm~\ref{alg:aff-gd}.
It raises an alarm whenever the detection statistic enters the rejection
region \(\mathcal{R}_L\) defined in \eqref{thre}. For simplicity, Algorithm~\ref{alg:aff-gd} uses
\(y_t\) and \(\bar y_t\) in place of \(y_t^{\mathrm{RFF}}\) and
\(\bar y_t^{\mathrm{RFF}}\), respectively.
An alternative version of the algorithm uses the adaptive rejection region \(\mathcal{R}_{L, \rho, t}\) defined in \eqref{eqn:RLrhot}.
\begin{algorithm}[!htbp]
\caption{Adaptive kernel detector}
\label{alg:aff-gd}

\footnotesize
\begin{algorithmic}[1]

\Require Data stream \((x_t)_{t\geq 1}\subset\mathbb{R}^d\);
feature map \(\hat{\psi}\) defined in \eqref{rff};
initial forgetting factors \(\lambda_0,\lambda_1\in(0,1)\);
step size \(\eta>0\); bounds \(0<\ell_1<\ell_2<1\).

\State Initialize \(m_0=w_0=\nabla_v m_0=\nabla_v w_0=0\).

\For{\(t=1,2,\ldots\)}

    \State Observe \(x_t\) and set \(y_t=\hat{\psi}(x_t)\). \Comment Random Fourier
    Features step

    \State Update exponentially weighted sums:
    \(m_t=\lambda_{t-1}m_{t-1}+y_t\) and
    \(w_t=\lambda_{t-1}w_{t-1}+1\).

    \State Compute
    \(\bar y_t=m_t/w_t\) and
    \(S_t^{\mathrm{RFF}}=\lVert\bar y_t\rVert^2\).

    \State Update derivative recursions:
    \(\nabla_v m_t=m_{t-1}
      +\lambda_{t-1}\nabla_v m_{t-1}\),  \(\nabla_v w_t=w_{t-1}
      +\lambda_{t-1}\nabla_v w_{t-1}\).

    \State Compute
    \(\displaystyle
      \nabla_v\bar y_t
      =\frac{\nabla_v m_t-\bar y_t\nabla_v w_t}{w_t}\).
      \Comment Using \eqref{eqn:gradY}

    \If{\(t\geq 2\)}

        \State
        \(\displaystyle
        \nabla_v f_t
        =2\left\langle
        \nabla_v\bar y_{t-1},
        \bar y_{t-1}-y_t
        \right\rangle\).
        \Comment Using \eqref{eqn:dirder}

        \State
        \(\displaystyle
        \lambda_t=
        \Pi_{[\ell_1,\ell_2]}
        \left(\lambda_{t-1}-\eta\nabla_v f_t\right)\).
        \Comment Using \eqref{eqn:lambdat}

    \EndIf

    \If{\(S_t^{\mathrm{RFF}}\in\mathcal{R}_L\)}
        \State Raise an alarm and stop.
    \EndIf

\EndFor

\end{algorithmic}
\end{algorithm}

The experiments in Section~\ref{rffnumcom} in the Supplementary Material examine the effect of the number of random features on detection performance. Increasing the number of random features from $500$ to $5000$ slightly lowers the expected detection delay for OKAFF, but the performance is broadly similar across this range.

Moreover, Section~\ref{supp:sec:avruntime} in the Supplementary Material investigates the average runtime of OKAFF with other competitor methods. Figure~\ref{fig:runtime-total-comparison-d20} shows that OKAFF is very computationally efficient, with a runtime similar to NEWMA and lower than all other competitor methods, making it well-suited for online changepoint detection.

\subsection{Behaviour of the detection statistic}

\label{bemostatnullalt}

\begin{figure}[!htbp]
\centering

\includegraphics[
  width=0.32\linewidth
]{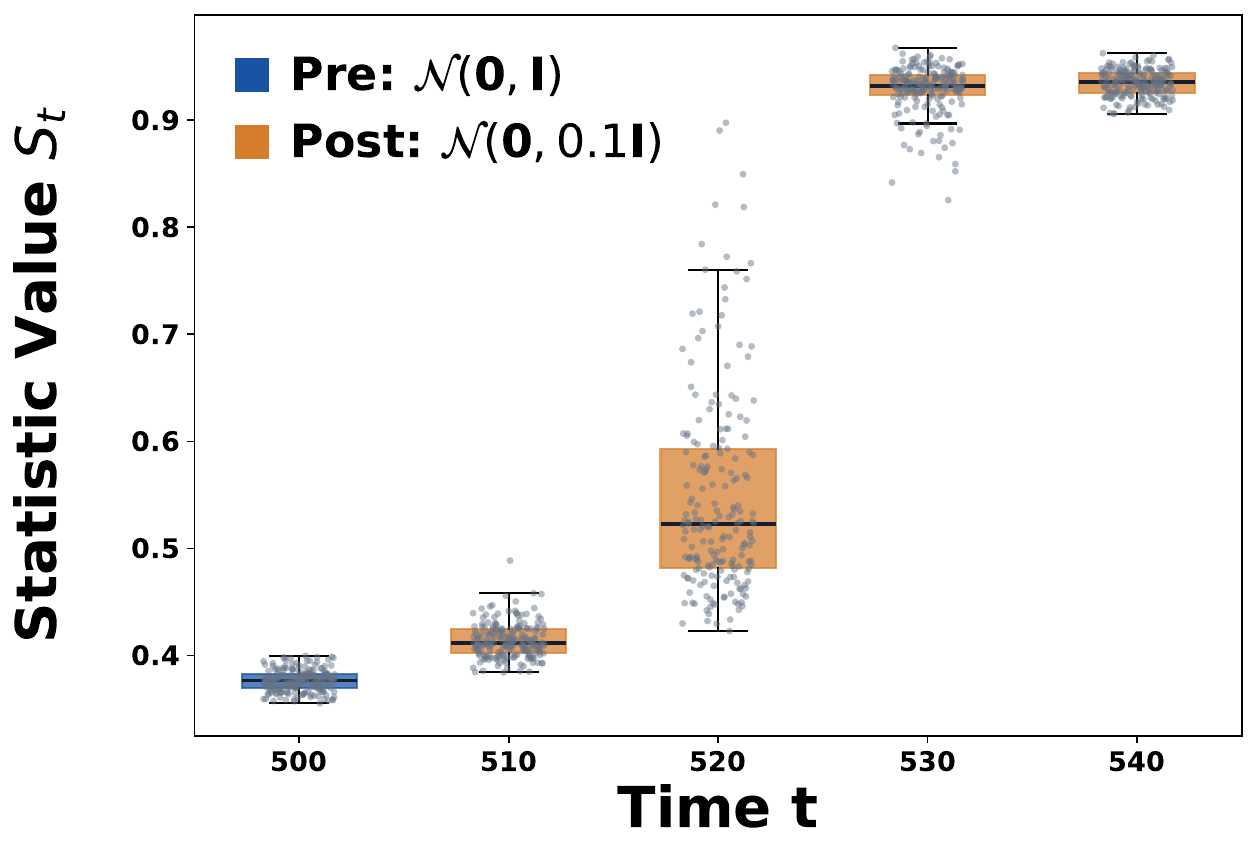}
\hfill
\includegraphics[
  width=0.32\linewidth
]{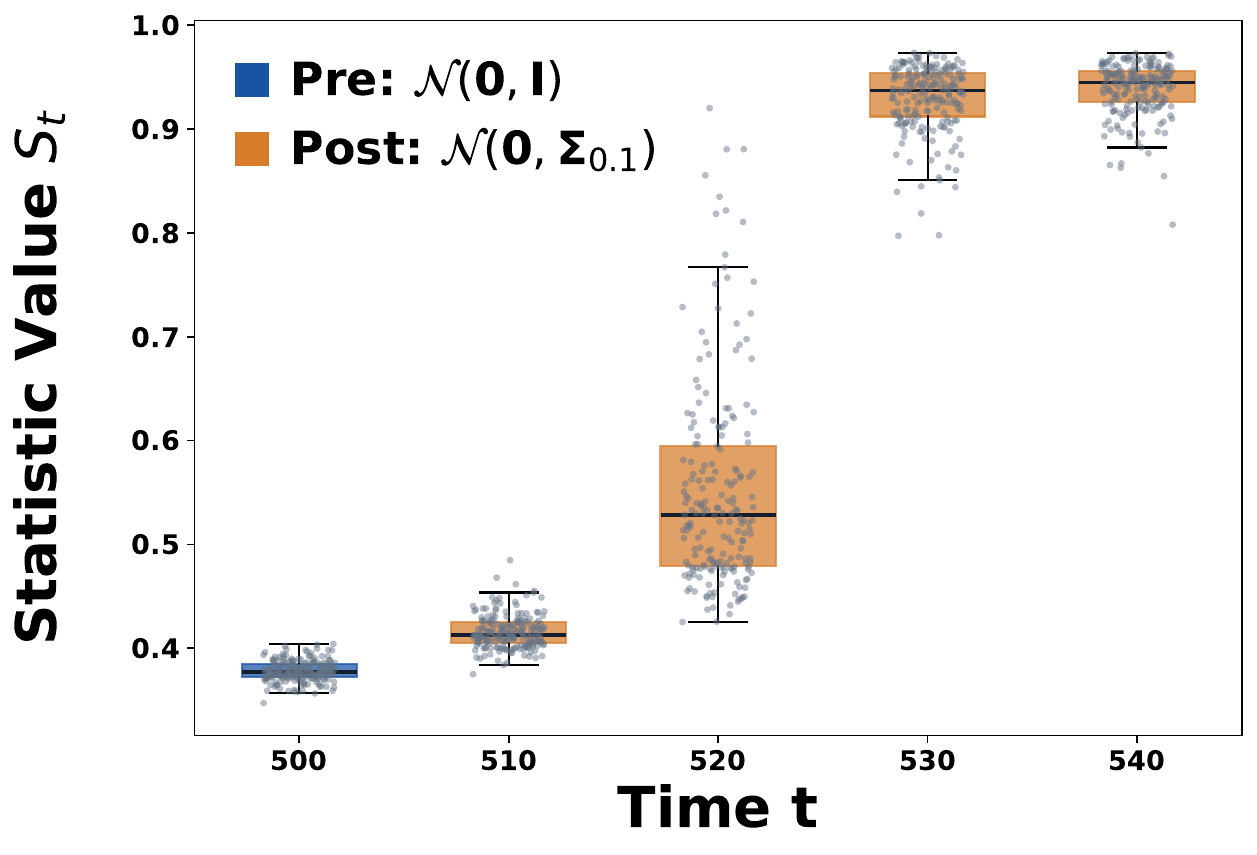}
\hfill
\includegraphics[
  width=0.32\linewidth
]{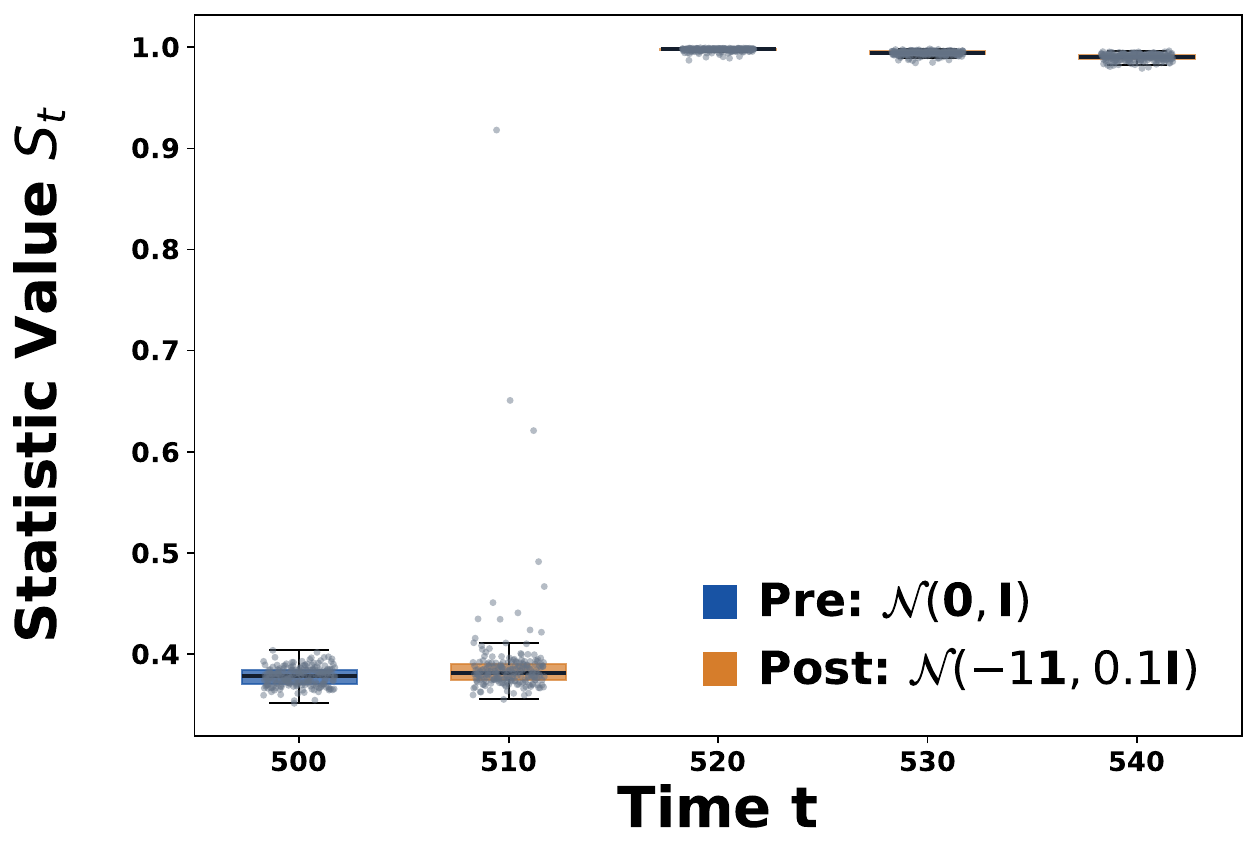}

\smallskip

\includegraphics[
  width=0.32\linewidth
]{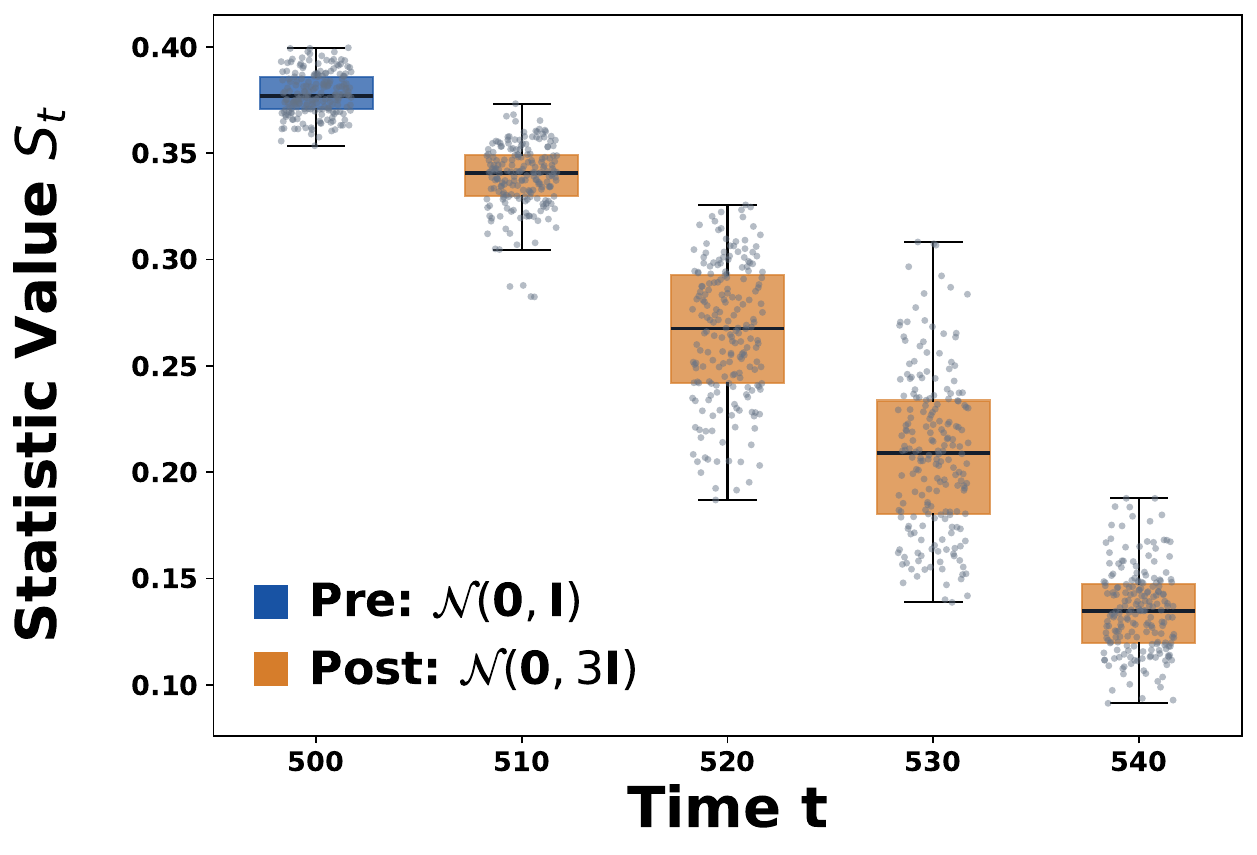}
\hfill
\includegraphics[
  width=0.32\linewidth
]{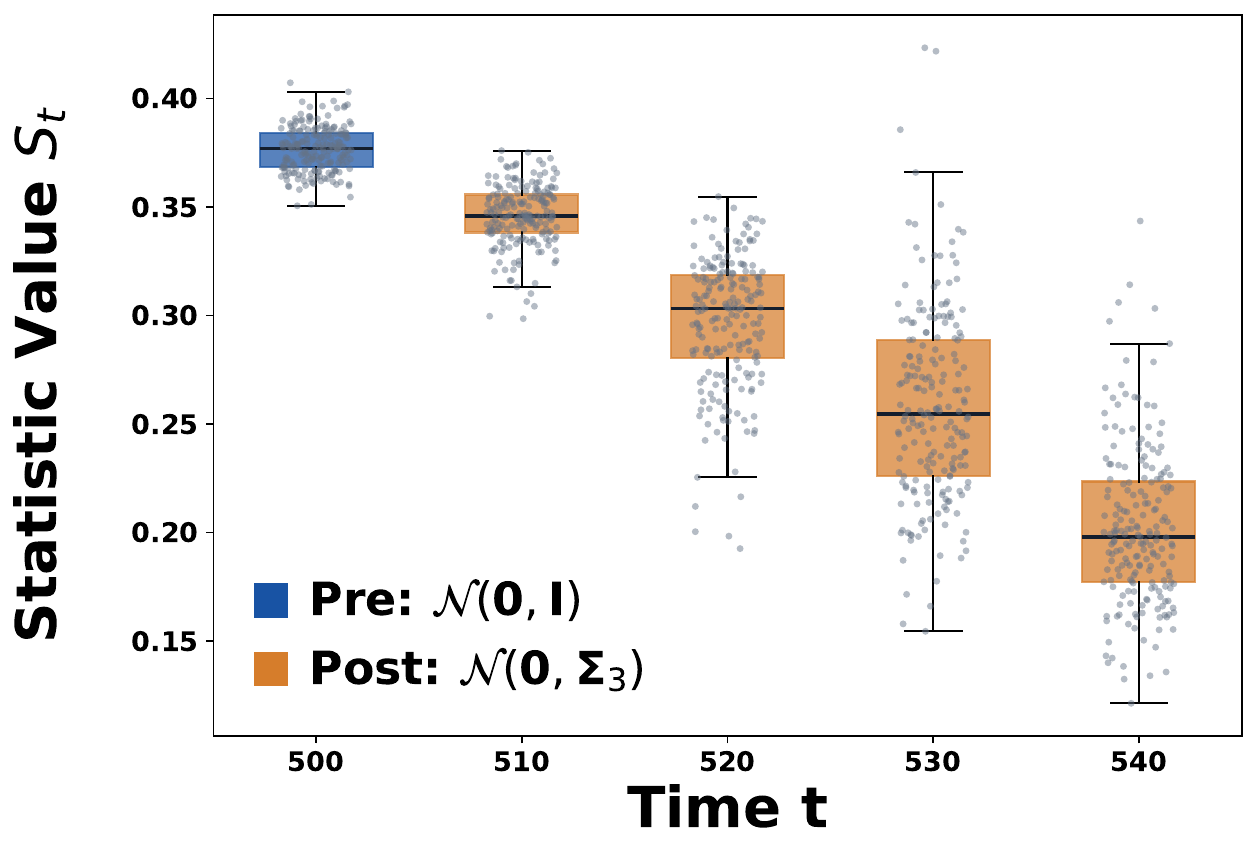}
\hfill
\includegraphics[
  width=0.32\linewidth
]{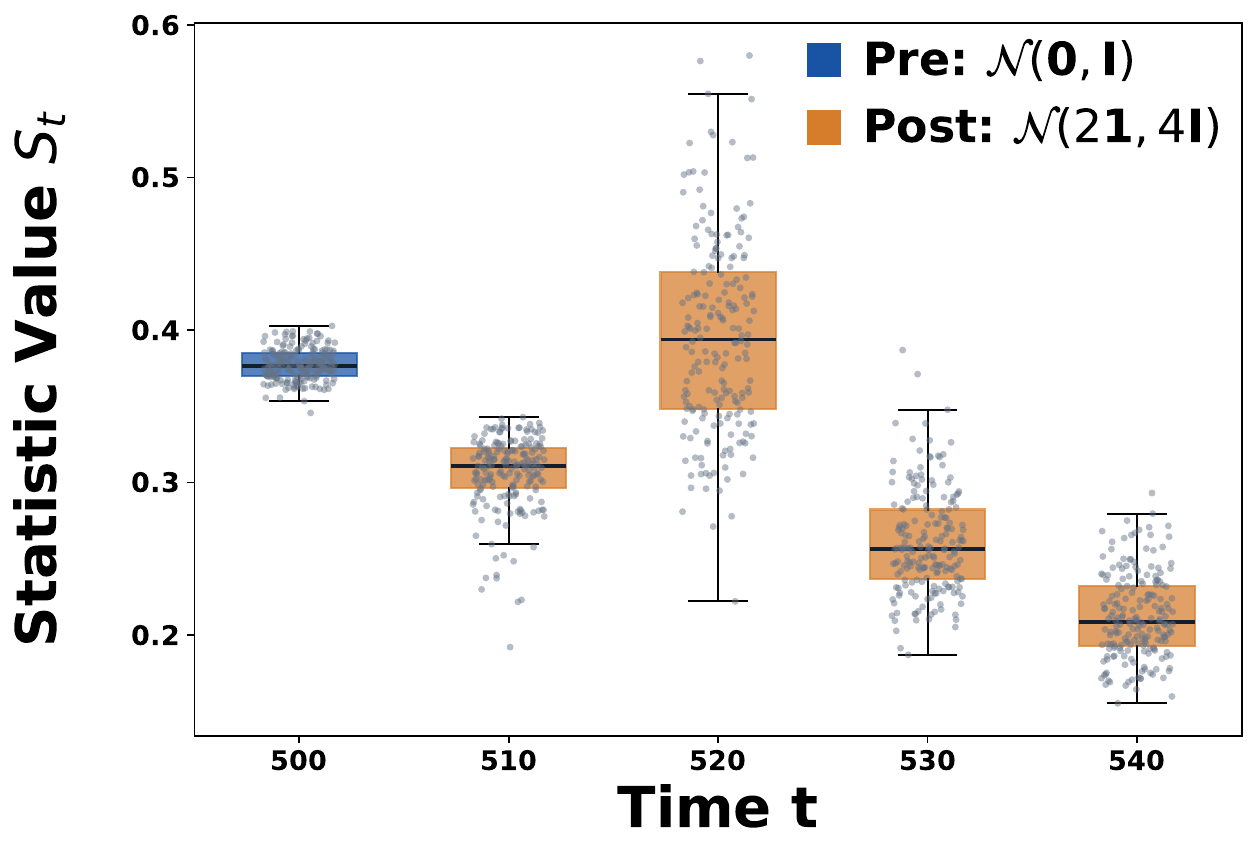}

\caption{Empirical distributions of the monitoring statistic under a
standard Gaussian pre-change distribution and six different Gaussian post-change distributions in dimension \(d=20\),
with the change occurring at \(t=501\). The top row presents, from left to
right, the post-change distributions \(\mathcal{N}(\mathbf{0},0.1I)\),
\(\mathcal{N}(\mathbf{0},\Sigma_{0.1})\), and
\(\mathcal{N}(-\mathbf{1},0.1I)\). The bottom row presents, from left to
right, the the post-change distributions \(\mathcal{N}(\mathbf{0},3I)\),
\(\mathcal{N}(\mathbf{0},\Sigma_3)\), and
\(\mathcal{N}(2\mathbf{1},4I)\).}
\label{fig:gaussian-boxplots-510-540}

\end{figure}

We investigate the behaviour of the detection statistic $S_t$ to see how it reacts to changes for a variety of pre-change and post-change underlying distributions. In all experiments, the changepoint occurs at \(t=501\) and  box plots are used to compare the pre-change values of $S_t$ at $t=500$ with its post-change values at times $t=510, 520, 530$, and $540$, for 100 Monte Carlo trials.
Box plots shaded blue show pre-change values, while orange box plots show post-change values of $S_t$.
A Gaussian kernel is used in all experiments, with its bandwidth selected via the median heuristic.


\par\medskip
\noindent\textbf{Gaussian pre-change distribution.}\enspace
Figure~\ref{fig:gaussian-boxplots-510-540} shows the behaviour of $S_t$ when 
we first fix the pre-change distribution as
\(
p=\mathcal{N}(\boldsymbol{0},I)
\) with \(d = 20\), 
and consider six different Gaussian post-change distributions: $\mathcal{N}(\boldsymbol{0},0.1I)$, $\mathcal{N}(\boldsymbol{0},\Sigma_{0.1})$, $\mathcal{N}(-\boldsymbol{1},0.1I)$, $\mathcal{N}(\boldsymbol{0},3I)$, $\mathcal{N}(\boldsymbol{0},\Sigma_{3})$, $\mathcal{N}(2\boldsymbol{1},4I)$.
Here,
\(
\Sigma_{0.1}
=0.05I+0.05\boldsymbol{1}\boldsymbol{1}^{\top},
\)
and 
\(
\Sigma_3
=1.5I+1.5\boldsymbol{1}\boldsymbol{1}^{\top}.
\)
The results for each post-change distribution are shown in a separate panel as time increases. While $S_t$ reacts more quickly to some changes than others, in all cases the empirical distribution of $S_t$ has changed significantly by $t=540$.

\begin{figure}[!htbp]
\centering

\includegraphics[
  width=0.48\linewidth
]{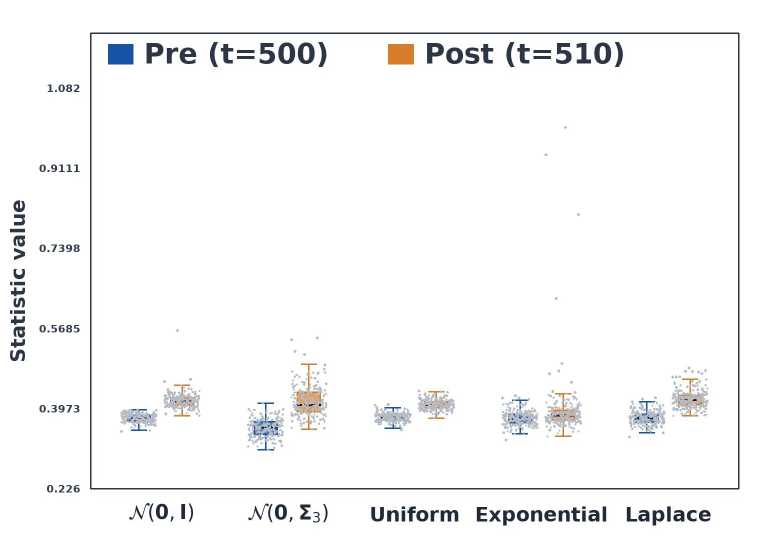}
\hfill
\includegraphics[
  width=0.48\linewidth
]{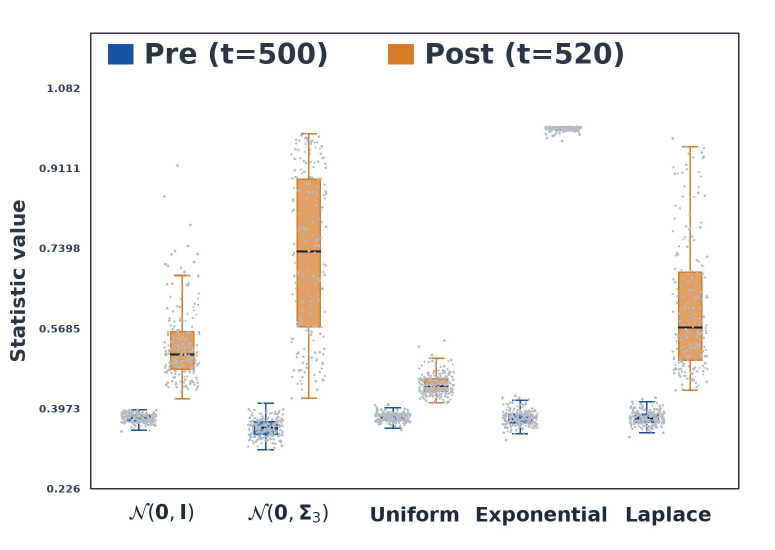}
\hfill
\includegraphics[
  width=0.48\linewidth
]{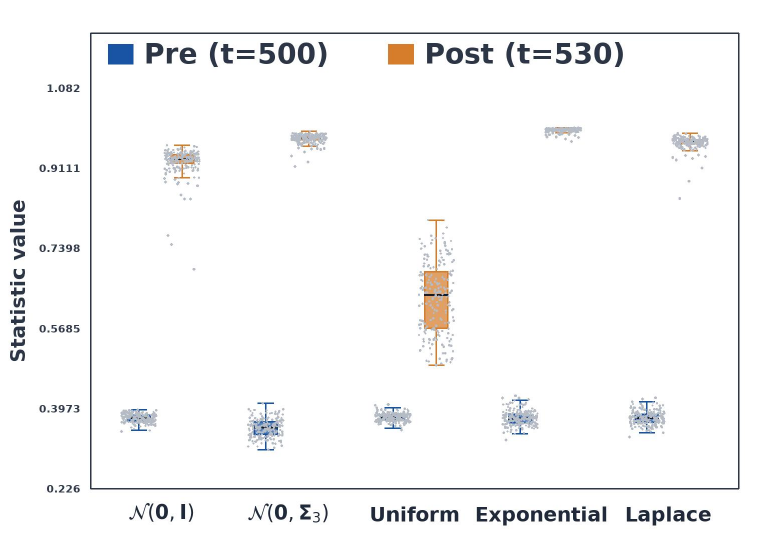}
\hfill
\includegraphics[
  width=0.48\linewidth
]{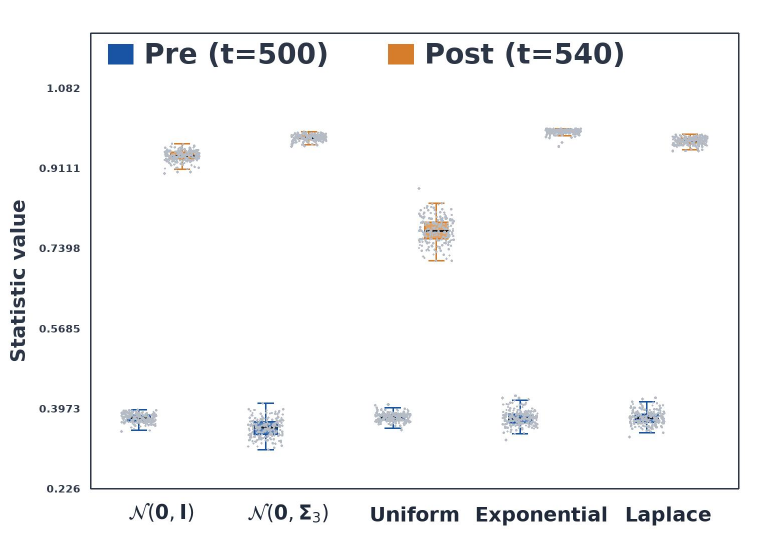}

\caption{Empirical distributions of the monitoring statistic under
different pre-change distributions and the same post-change distribution, with
the changepoint occurring at \(t=501\). The statistic under the pre-change
distribution is evaluated at \(t=500\), while the statistic under the
post-change distribution is evaluated at \(t=510\) (upper left), \(t=520\)
(upper right), \(t=530\) (bottom left) and \(t=540\) (bottom right).}
\label{fig:boxplot-thresholds-510-540-difpre}

\end{figure}

\par\medskip
\noindent\textbf{Different pre-change distributions.}\enspace 
Figure~\ref{fig:boxplot-thresholds-510-540-difpre} shows the behaviour of $S_t$ for five different pre-change distributions: 
$\mathcal{N}(\mathbf{0},I)$,
$\mathcal{N}(\mathbf{0},\Sigma_3)$,
$\operatorname{Uniform}([-1,1]^d)$,
$\operatorname{Exp}(1)^d$,
$\operatorname{Laplace}(0,1)^d$. Here, \(d=20\) and \(\boldsymbol{0},\boldsymbol{1}\in\mathbb{R}^{d}\) denote the zero
vector and the vector of ones, respectively, \(I\in\mathbb R^{d\times d}\) denotes the identity matrix, and
\(
\Sigma_3=1.5I+1.5\boldsymbol{1}\boldsymbol{1}^{\top}.
\)
The $\mathcal{N}(\mathbf{0},I)$ distribution is included as a reference, while the post-change distribution in all cases is \(\mathcal{N}(\mathbf{0},0.1I)\).
The layout for this figure is slightly different, and each panel shows the empirical post-change distributions of $S_t$ for a different time $t \in \{510, 520, 530, 540\}$, with the empirical pre-change distributions of $S_t$ at $t=500$ included alongside for reference. Again, the results show that $S_t$ is reactive to the changepoint in all cases.



We also plot the average values of $S_t$ in both scenarios as line plots in Figure~\ref{fig:average-gaussian-difpre-samplepath}. These figures collectively illustrate
how significantly the statistic differs between the pre-change and post-change regimes. 
This suggests that a properly calibrated threshold can yield high detection power while controlling false
alarms. 
Additional experimental results under different null distributions are reported in Section~\ref{exprappd} of Supplementary Material.

\begin{figure}[!htbp]
\centering

\includegraphics[
  width=0.49\linewidth
]{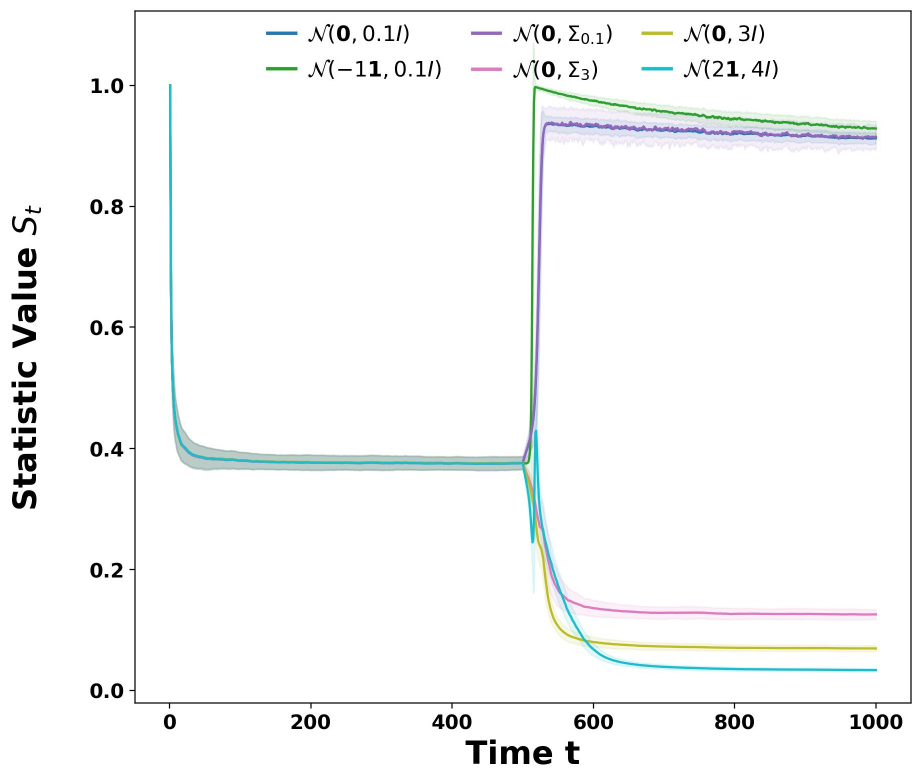}
\hfill
\includegraphics[
  width=0.49\linewidth
]{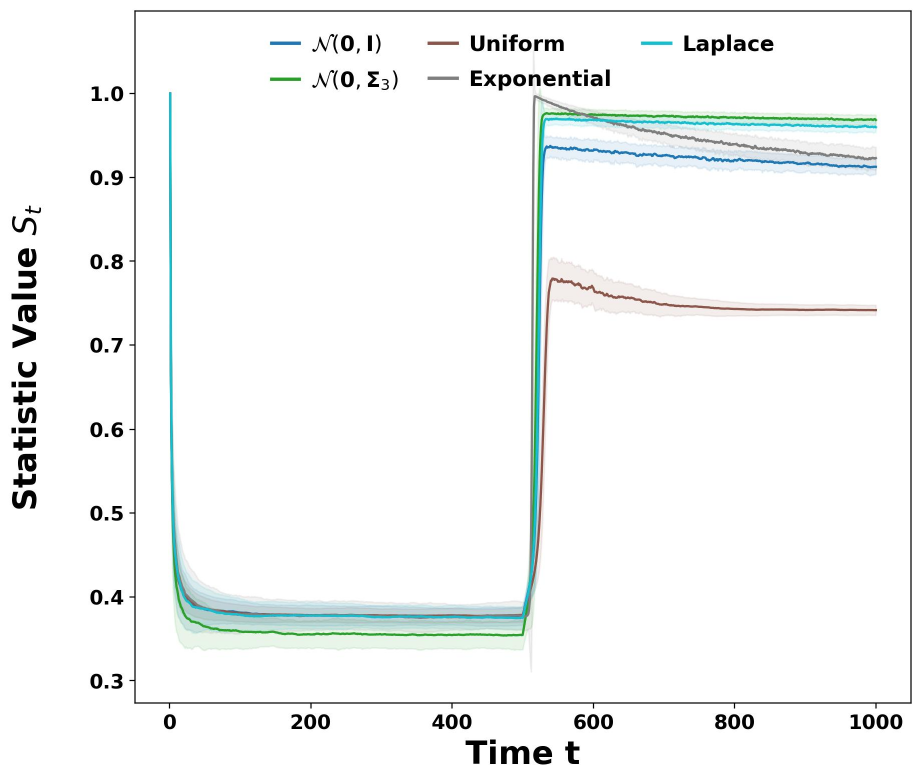}

\caption{Left: Sample paths of the monitoring statistic under a common Gaussian pre-change distribution
and different Gaussian post-change distributions, with the changepoint occurring at
\(t=501\). The panel shows the average sample paths with the corresponding
estimated standard-deviation bands. Right: Sample paths of the monitoring statistic under different pre-change 
distributions and a common post-change distribution, with the changepoint
occurring at \(t=501\). The panel shows the average sample paths together
with the corresponding estimated standard-deviation bands.}
\label{fig:average-gaussian-difpre-samplepath}

\end{figure}


\clearpage

\section{Experiments}\label{AFFexpr}
\textbf{Threshold calibration.} 
We use Theorem~\ref{cor:CLTffwave} as guidance to construct the threshold for the detection statistic $S_t^\mathrm{RFF}$ defined in \eqref{eq:strff}. Suppose that \(K\) is normalized so that
\(K(x,x)=1\). Under the no-change regime and using deterministic forgetting factors, with the detection statistic $S_t$ that does not use random Fourier features,
\[
\mathbb E S_t
=
\delta_t
+
\bigl(1-\delta_t\bigr)\theta_K,
\qquad
\theta_K:=\mathbb E_{X,Y\sim p}  K(X,Y),
\]
where \(X\) and \(Y\) are independent copies from \(p\), and
\(\delta_t\) is the sum of the squared normalized weights, as defined in
\eqref{deltaL}. Thus, \(\theta_K\) describes the average similarity between
two independent pre-change observations, whereas \(\delta_t\) describes the
concentration of the weighting scheme. The corresponding variance is
\[
\operatorname{Var}(S_t)
=
4(\delta_t-2\rho_t+\kappa_t)\nu_K
+
2(\delta_t^2-\kappa_t)\zeta_K,
\]
where
\(
\delta_t:=\sum_{i=1}^tb_{i,t}^2\),
\(\rho_t:=\sum_{i=1}^tb_{i,t}^3\),
\(\kappa_t:=\sum_{i=1}^tb_{i,t}^4\),
and 
\begin{align}
\nu_K
&:=
\left\langle\Sigma_p\mu_p,\mu_p
\right\rangle_{\mathcal H(K)}
=
\mathbb E_{X\sim p}[\mu_p(X)^2]
-\left(\mathbb E_{X,Y\sim p}K(X,Y)\right)^2,    
\nonumber \\
\zeta_K
&:=\mathbb E_{X,Y\sim p}[K(X,Y)^2]
-2\mathbb E_{X\sim p}[\mu_p(X)^2]+\left(\mathbb E_{X,Y\sim p}K(X,Y)\right)^2.
\nonumber
\end{align}
Here 
\(\mu_p:=\mathbb E_{Y\sim p}K(Y,\cdot)\), and
\(\mu_p(X):=\mathbb E_{Y\sim p}K(X,Y)\). The quantities \(\nu_K\) and
\(\zeta_K\) capture the variability of the kernel under the null
distribution, while \(\delta_t,\rho_t,\kappa_t\) capture the effect of the
forgetting weights. For a constant forgetting factor, \(\lambda_t=\lambda\) for all \(t\geq1\), the mean
and variance approach the limits $\lim_{t\to\infty}\mathbb E S_t
=\delta_\lambda+(1-\delta_\lambda)\theta_K$ and $\lim_{t\to\infty}\operatorname{Var}(S_t)
=4(\delta_\lambda-2\rho_\lambda+\kappa_\lambda)\nu_K+2(\delta_\lambda^2-\kappa_\lambda)\zeta_K$,
where
\begin{equation}
\delta_\lambda=\frac{1-\lambda}{1+\lambda},
\qquad
\rho_\lambda=\frac{(1-\lambda)^2}{1+\lambda+\lambda^2},
\qquad
\kappa_\lambda=
\frac{(1-\lambda)^3}{1+\lambda+\lambda^2+\lambda^3}.   
\label{eqn:deltarhokappalambda}
\end{equation}
Each quantity is the limit of its corresponding time-indexed weight
summary. A similar approach, using the constant forgetting factor case to guide the development of an adaptive forgetting factor procedure, was adopted in \cite{anagnostopoulos2012online}.

Under the no-change regime, but using random Fourier features, the distribution of \(S_t^{\mathrm{RFF}}\) becomes
approximately stable after the initial burn-in period, and the statistic then fluctuates
around its asymptotic null center. When the changepoint occurs, \(S_t^{\mathrm{RFF}}\) will deviate from the asymptotic null center. The numerical experiments exhibit this
behavior: before a change, the statistic remains concentrated around a stable
level \(u\), whereas after a change it is more likely to leave the calibrated
acceptance band; see Figure~\ref{fig:time-dependent-threshold} for an example. This observation supports the use of a fixed asymptotic center and
scale for long-run monitoring. The asymptotic argument in Theorem~\ref{cor:CLTffwave} determines the form of the threshold, but it does not
imply exact finite-sample false-alarm control for the adaptive procedure.
Moreover, Theorem~\ref{cor:CLTffwave} is stated for deterministic forgetting factors, whereas
the proposed method updates the value of the forgetting factor based on the data. We therefore treat the
asymptotic approximation as a calibration guide and select the final
control-limit multiplier \(L\) using pre-change data and Monte Carlo simulation.

Using the initial value of $\lambda=\lambda_0$ and the quantities in \eqref{eqn:deltarhokappalambda}, define the asymptotic null center and standard deviation by
\begin{align}
u&:=\delta_\lambda+(1-\delta_\lambda)\theta_K,\label{u}\\
\sigma&:=
\left[
4(\delta_\lambda-2\rho_\lambda+\kappa_\lambda)\nu_K
+
2(\delta_\lambda^2-\kappa_\lambda)\zeta_K
\right]^{1/2}.\label{v}
\end{align}
For a control-limit multiplier \(L>0\), we use the two-sided rejection region 
\begin{align}\label{thre}
\mathcal{R}_L
=
[0,u-L\sigma)\cup(u+L\sigma,\infty).
\end{align}
Equivalently, monitoring continues while \(S^{\mathrm{RFF}}_t\in[u-L\sigma,u+L\sigma]\). Increasing
\(L\) produces a wider acceptance region and fewer false alarms, but may also
increase detection delay. In practice, \(L\) is selected to achieve a desired
in-control ARL, using either pre-change observations and Monte Carlo calibration. 

For the Gaussian kernel and a Gaussian pre-change null distribution, let
\[
K(x,y)
=
\exp\left(-\frac{\|x-y\|^2}{2\gamma^2}\right),
\qquad
X\sim N(\mu,\Sigma).
\]
Then using Example~\ref{eg:gaussian},
\begin{align}\label{thetagauss}
    \theta_K
=
\det\left(I+\frac{2\Sigma}{\gamma^2}\right)^{-1/2},
\end{align}
and
\begin{align}\label{nugauss}
\nu_K
&=
\det\left[
\left(I+\frac{\Sigma}{\gamma^2}\right)
\left(I+\frac{3\Sigma}{\gamma^2}\right)
\right]^{-1/2}-
\det\left(I+\frac{2\Sigma}{\gamma^2}\right)^{-1},
\end{align}
and
\begin{align}\label{zetagauss}
\zeta_K
&=
\det\left(I+\frac{4\Sigma}{\gamma^2}\right)^{-1/2}-
2\det\left[
\left(I+\frac{\Sigma}{\gamma^2}\right)
\left(I+\frac{3\Sigma}{\gamma^2}\right)
\right]^{-1/2}+
\det\left(I+\frac{2\Sigma}{\gamma^2}\right)^{-1}.
\end{align}
Although the resulting thresholds defined by the rejection region $\mathcal{R}_{L}$ in \eqref{thre} may depend on the null distribution, we select the kernel bandwidth using the median heuristic (see, e.g., \cite{fukumizu2009kernel}), 
\begin{align*}  
\gamma=\sqrt{\frac{1}{2}\operatorname{median}\{\|X_i-X_j\|^2:1\le i <j\le n\}}.
\end{align*}


As demonstrated in the experiments below, this choice yields broadly similar thresholds across different null distributions. Nevertheless, the control limit $L$ must still be calibrated to ensure a prescribed in-control average run length. So far, we have considered fixed thresholds. In Section~\ref{adpthr}, we use Theorem~1 to guide the construction of the adaptive threshold. 
Below, we use $S_t$ to denote $S_t^{\mathrm{RFF}}$, since the OKAFF detection statistic will always use random Fourier features in practice.

\subsection{Simulation Study}\label{simu}

\subsubsection{Simulations with fixed thresholds}\label{fixthr}
The following experiments show the expected detection delay (EDD) versus the average run length (ARL) for different methods under different post-change alternatives; recall that these metrics are defined in Section~\ref{sec:prelimcpd}.
To illustrate the EDD at a given target ARL, we conduct a simulation study similar to the one described in \cite{kalinke2025optimal} and \cite{Wei2022OnlineKC}, and include our proposed method in the comparison using fixed thresholds.
All figures show the values of EDD on the vertical axis versus ARL on the horizontal axis. However, in order to make performance
differences visible over the wide range of ARL and EDD values considered, the axes are both log-scaled.


\paragraph*{Thresholds for ARL values} 
The methods in \cite{kalinke2025optimal,kalinke2025maximum,Wei2022OnlineKC,li2019scan} do not provide the procedure for selecting thresholds. Although NEWMA in \cite{keriven2020newma} provides guidance for threshold selection, it does not establish how to choose a threshold that attains a prescribed average run length or false-alarm probability. We therefore calibrate the threshold for each method separately using Monte Carlo simulations. For each method, we start by selecting a range of thresholds, and then run 200 Monte Carlo trials to determine the ARL values corresponding to these thresholds. In each trial, NEWMA, online RFF MMD, MMDEW and OKAFF each use 250 observations as a burn-in period, and OKCUSUM and ScanB use 250 observations as a reference sample, which is similar to a burn-in period. Note that the 250 observations in the burn-in period allow the test statistics to stabilize, before starting to monitor for a changepoint.


\paragraph*{Bandwidth parameter}
All methods use the 250 observations in the burn-in period (or reference sample) to set the bandwidth using the median heuristic.


\paragraph*{Other algorithm parameters} We choose the remaining algorithmic parameters as follows: we set \(B_{\min}=2\), \(B_{\max}=50\), and \(N=5\) for OKCUSUM; \(B=50\) and \(N=5\) for ScanB; and \(B=50\) for NEWMA. For OKAFF, we set the step size to \(\eta=0.001\) and $\lambda_0=\lambda_1=0.999$ (intial values of forgetting factors). NEWMA, Online RFF MMD, OKAFF each use \(500\) random features. All methods use a Gaussian kernel. We select the bandwidths for all methods using the same median heuristic.

\begin{figure}[htbp]
\centering

\begin{subfigure}{0.98\textwidth}
  \centering
  \includegraphics[
    width=\linewidth,
    height=0.2\textheight,
    keepaspectratio
  ]{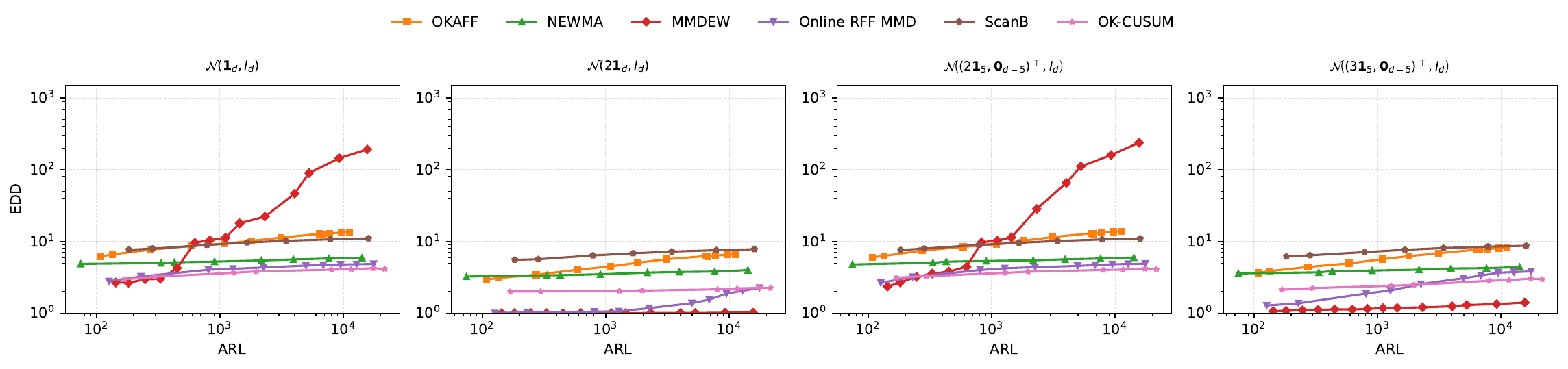}
  \caption{Dense and sparse mean shifts.}
  \label{fig:edd-arl-d20-mean}
\end{subfigure}

\medskip

\begin{subfigure}{0.98\textwidth}
  \centering
  \includegraphics[
    width=\linewidth,
    height=0.2\textheight,
    keepaspectratio
  ]{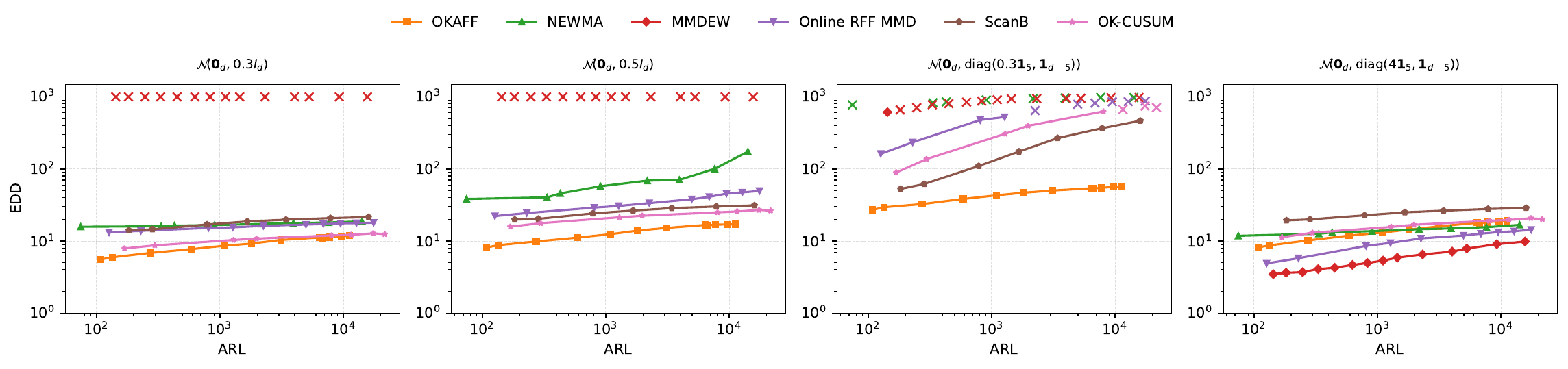}
  \caption{Dense and sparse covariance shifts.}
  \label{fig:edd-arl-d20-covariance}
\end{subfigure}

\medskip

\begin{subfigure}{0.98\textwidth}
  \centering
  \includegraphics[
    width=\linewidth,
    height=0.2\textheight,
    keepaspectratio
  ]{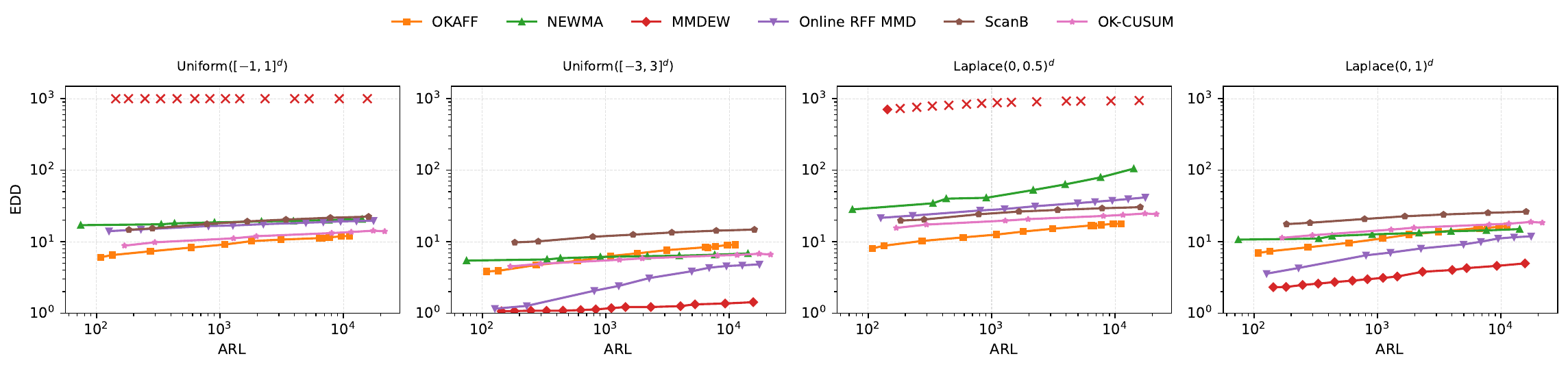}
  \caption{Uniform and Laplace alternatives.}
  \label{fig:edd-arl-d20-uniform-laplace}
\end{subfigure}

\caption{EDD versus ARL under different alternatives in dimension
\(d=20\), when the change occurs at \(k=100\). The three panels show
dense and sparse mean shifts, dense and sparse covariance shifts, and
uniform and Laplace alternatives, respectively.}
\label{fig:edd-arl-d20-mean-cov-unif-laplace}

\end{figure}

\clearpage 

We first conduct experiments for multivariate data to match the simulation setting of 
\citep{kalinke2025maximum,kazi2026online,Wei2022OnlineKC}, and then conduct experiments for univariate data.

\paragraph*{Multivariate data.}
We consider simulations with a \(20\)-dimensional standard normal pre-change distribution and evaluate a range of post-change alternatives, including mean shifts, covariance changes, Gaussian mixtures, uniform distributions, and Laplace distributions. Some of these distributional alternatives are also considered in \citep{kalinke2025maximum,kazi2026online,Wei2022OnlineKC}.  Complete distributional specifications and parameter settings are provided in Section~\ref{exprappd} of the Supplementary Material. At a comparable ARL, a lower EDD indicates faster detection. No single method dominates under every alternative. Instead, the relative
performance depends on both the type and magnitude of the distributional
change. Each EDD experiment is truncated after $1000$ post-change alternative
observations. A simulation replication is considered a failure if the method does not raise an alarm within this horizon. The failure rate is therefore the
proportion of the \(200\) simulation replications in which the change is not detected
within \(1000\) post-change observations. A cross in a plot marks a
configuration with a failure rate of at least \(40\%\), that is, the
method fails to detect the change within the simulation horizon in at least
\(80\) of the \(200\) replications. For failed replications, the detection delay is set to \(1000\), and and the EDD is computed as the average detection delay across all \(200\) replications, including both successful and failed detections. The experimental results are reported in Figure~\ref{fig:edd-arl-d20-mean-cov-unif-laplace}, with additional results presented in Figures~\ref{fig:edd-arl-covariance-shifts-d20}--\ref{fig:edd-arl-distribution-shifts-d20} in the Supplementary Material.

For both dense and sparse mean shifts, the EDD values of most methods are
approximately below \(10\), especially when the mean change is moderate or
large. This indicates that all of these methods detect mean changes rapidly,
and their performance is broadly similar under the settings considered here.
The main exception is MMDEW, whose performance is noticeably less stable. For
the weaker dense and sparse mean changes, its 
EDD increases sharply and can be
several orders of magnitude larger than those of the other methods. Thus,
although MMDEW performs well for sufficiently strong mean shifts, its detection
performance deteriorates substantially when the change in the mean is small.
Note that both axes are displayed on the log scale, so when the EDD increases above $10^1$, this represents a significant increase.

For covariance changes, the proposed method (OKAFF) performs well across all of the alternatives and
is the most stable detector among the methods considered. This advantage is particularly pronounced when the covariance scale decreases substantially, such as from \(I\) to \(0.3I\) or \(0.5I\). In these settings, MMDEW performs poorly, exhibiting very large detection delays and a high proportion of failed replications in which no change is detected within the simulation horizon. For the covariance alternatives
\[\operatorname{diag}(\underbrace{\delta,\ldots,\delta}_{5},\underbrace{1,\ldots,1}_{d-5},),\qquad \delta \in \{0.3,0.5\}, \qquad d=20,\]
the competing methods also exhibit comparatively large detection delays, whereas OKAFF achieves substantially shorter delays across the considered ARL range. These results indicate that OKAFF is more effective than the competing methods at detecting covariance changes and provides more stable performance across the covariance-change settings considered.

For the uniform and Laplace post-change alternative distributions,
OKAFF continues to perform well and achieves the smallest detection delay for
several alternatives. ScanB and OK-CUSUM also exhibit relatively stable
performance, but OKAFF has a lower EDD in most of the cases considered.
Moreover, the runtime results in Figure \ref{fig:runtime-total-comparison-d20} show that ScanB and OK-CUSUM have
substantially larger computational costs, which is an important practical
limitation for online detection. MMDEW is again less stable: although it achieves the smallest EDD for some alternatives, it exhibits very large detection delays and a high proportion of failed detections for others. For strong change signals, the EDD values of most methods are below approximately \(10\). While MMDEW performs poorly in some of the experiments above, 
Section~\ref{exprappd} in the Supplementary Material shows that, for the Gaussian-mixture alternatives, MMDEW generally produces the smallest
EDD, with Online RFF MMD and OKAFF also performing competitively for the more pronounced mixtures.

Overall, the multivariate \(d=20\) experiments demonstrate that OKAFF is a robust detector across a wide range of distributional changes. Its advantages are particularly evident for weak signals and covariance changes, which for some competing methods exhibit rapidly increasing detection delays or substantial failures. For sufficiently strong change signals, several methods detect the change almost immediately, and the differences in their detection performance consequently become less pronounced.

\begin{figure}[htbp]
\centering

\begin{subfigure}{0.98\textwidth}
  \centering
  \includegraphics[
    width=\linewidth,
    height=0.2\textheight,
    keepaspectratio
  ]{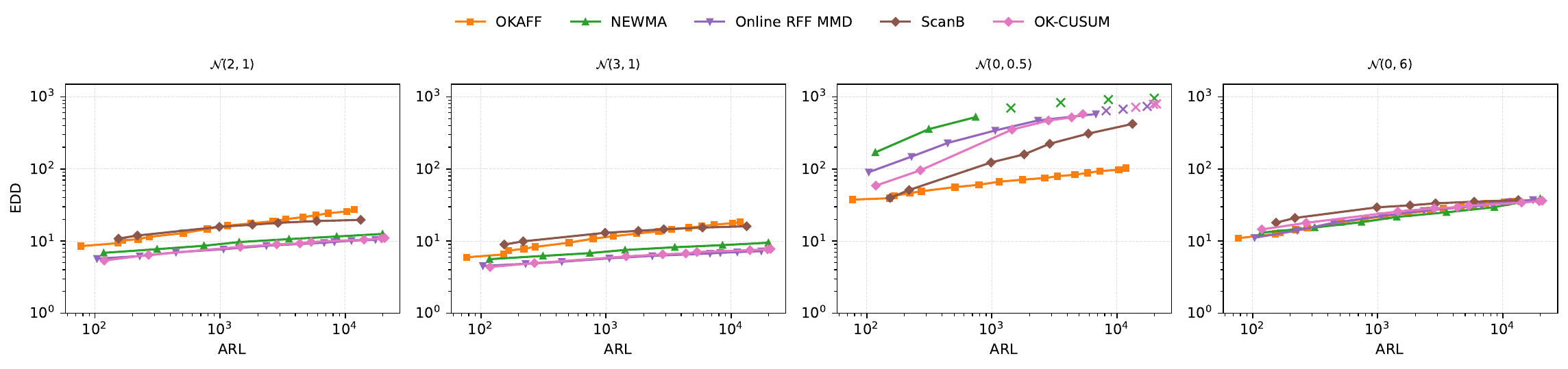}
  \caption{Mean and variance shifts.}
  \label{fig:edd-arl-d1-mean-variance}
\end{subfigure}

\medskip

\begin{subfigure}{0.98\textwidth}
  \centering
  \includegraphics[
    width=\linewidth,
    height=0.2\textheight,
    keepaspectratio
  ]{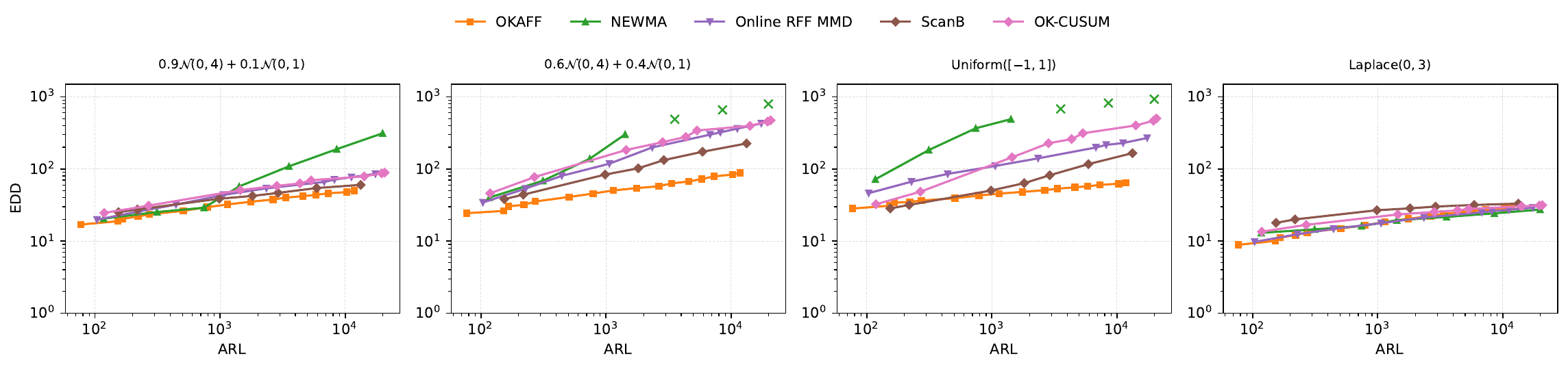}
  \caption{Gaussian-mixture, uniform, and Laplace alternatives.}
  \label{fig:edd-arl-d1-mixture-uniform-laplace}
\end{subfigure}

\caption{EDD versus ARL under different alternatives for univariate
data when the changepoint occurs at \(k=100\).}
\label{fig:edd-arl-d1-mean-cov-unif-laplace}

\end{figure}

\paragraph*{Univariate data.} Figure~\ref{fig:edd-arl-d1-mean-cov-unif-laplace} reports the EDD-versus-ARL results for the univariate setting under different alternatives, with additional results presented in Figures~\ref{fig:edd-arl-mean-shifts-d1}--\ref{fig:edd-arl-distribution-shifts-d1} in Section~\ref{exprappd} of the Supplementary Material.
The null distribution is \(p=\mathcal{N}(0,1)\). The alternatives include changes in the mean, changes in the variance, 
and more general distributional changes represented by Gaussian-mixture, uniform, and Laplace distributions as follows,
\begin{itemize}\setlength{\itemsep}{0pt}\setlength{\parskip}{0pt}
    \item \textbf{Mean shift:}
    \(\mathcal{N}(\delta,1)\), 
    \(\delta\in\{1,2,3,4\}\).

    \item \textbf{Variance shift:}
    \(\mathcal{N}(0,\delta)\), 
    \(\delta\in\{0.3,0.5,2,6\}\).

    \item \textbf{Gaussian mixture:}
    \((1-\delta)\mathcal{N}(0,4)+\delta\mathcal{N}(0,1)\), 
    \(\delta\in\{0.1,0.2,0.3,0.4\}\).

    \item \textbf{Laplace:}
    \(\operatorname{Laplace}(0,\delta)\),
    \(\delta\in\{0.5,1,3,4\}\).

    \item \textbf{Uniform:}
    \(\operatorname{Uniform}([-\delta,\delta])\), 
    \(\delta\in\{0.5,1,3,4\}\).
\end{itemize}
For these univariate experiments, MMDEW exhibited unstable behavior, which prevented reliable estimation of its ARL and EDD. We therefore exclude MMDEW from Figure~\ref{fig:edd-arl-d1-mean-cov-unif-laplace} .

All methods detect moderate and large mean shifts rapidly. For variance changes, OKAFF provides the most consistently low delays. OKAFF also performs well for Gaussian-mixture alternatives and for the uniform and Laplace alternatives, whereas NEWMA and Online RFF MMD deteriorate substantially in several alternatives. When the change is large, the performance differences among the methods become less pronounced. Overall, OKAFF provides a favorable balance of detection speed and stability across the univariate alternatives.

\subsubsection{Simulations with adaptive thresholds} 
\label{sec:adapthreshsimstudy}

The simulations in Section~\ref{fixthr} consider the case where there is a single changepoint and one can use a burn-in period of $250$ observations. In many real-world applications, there are (i) multiple changepoints and (ii) fewer observations between changepoints, which would preclude the possibility of assuming we can use a long burn-in period to estimate parameters and stabilize the statistics.

We therefore consider the adaptive threshold approach proposed in the NEWMA method \citep{keriven2020newma}, which allows the threshold to evolve continuously during online monitoring. This approach is particularly useful when the pre-change distribution is unknown and only a limited number of pre-change observations are available, or when multiple changepoints may occur, as the threshold can adapt to the evolving behavior of the monitoring statistic. 

\begingroup
\scriptsize
\setlength{\tabcolsep}{2.5pt}
\renewcommand{\arraystretch}{0.95}
\begin{longtable}{@{}>{\hspace{0.4em}}p{0.265\textwidth}p{0.11\textwidth}*{4}{>{\centering\arraybackslash}p{0.145\textwidth}}@{}}
\caption{Performance comparison of four methods under 20-dimensional
post-change alternatives, with the in-control ARL constrained to
\(500\pm40\). Bold entries indicate the smallest EDD, highest Success, and the lowest False alarm and Failure proportions.}
\label{tab:main:d20-mid-arl-edd-sff}\\
\toprule
& Metric & OKAFF & NEWMA & Online RFF MMD & MMDEW\\
\midrule
\endfirsthead
\multicolumn{6}{c}{\tablename~\thetable{} continued}\\
\toprule
& Metric & OKAFF & NEWMA & Online RFF MMD & MMDEW\\
\midrule
\endhead
\midrule
\multicolumn{6}{r}{Continued on next page}\\
\endfoot
\bottomrule
\endlastfoot

\multicolumn{6}{@{}l}{\hspace{0.4em}\bfseries\boldmath
\(\mathcal{N}(\boldsymbol{0},I)
\to\mathcal{N}(\delta\boldsymbol{1},I)\)}\\
\addlinespace[2pt]

\(\delta=1\) & EDD
& \makebox[5.0em][l]{\makebox[3.4em][r]{9.11}}
& \makebox[5.0em][l]{\makebox[3.4em][r]{764.28}}
& {\bfseries \makebox[5.0em][l]{\makebox[3.4em][r]{4.07}}}
& \makebox[5.0em][l]{\makebox[3.4em][r]{179.08}}\\
& Success
& {\bfseries \makebox[5.0em][l]{\makebox[3.4em][r]{0.950}}}
& \makebox[5.0em][l]{\makebox[3.4em][r]{0.582}}
& \makebox[5.0em][l]{\makebox[3.4em][r]{0.912}}
& \makebox[5.0em][l]{\makebox[3.4em][r]{0.782}}\\
& False alarm
& {\bfseries \makebox[5.0em][l]{\makebox[3.4em][r]{0.050}}}
& \makebox[5.0em][l]{\makebox[3.4em][r]{0.414}}
& \makebox[5.0em][l]{\makebox[3.4em][r]{0.088}}
& \makebox[5.0em][l]{\makebox[3.4em][r]{0.142}}\\
& Failure
& {\bfseries \makebox[5.0em][l]{\makebox[3.4em][r]{0.000}}}
& \makebox[5.0em][l]{\makebox[3.4em][r]{0.004}}
& {\bfseries \makebox[5.0em][l]{\makebox[3.4em][r]{0.000}}}
& \makebox[5.0em][l]{\makebox[3.4em][r]{0.076}}\\

\addlinespace[3pt]

\(\delta=-4\) & EDD
& \makebox[5.0em][l]{\makebox[3.4em][r]{3.39}}
& \makebox[5.0em][l]{\makebox[3.4em][r]{71.74}}
& \makebox[5.0em][l]{\makebox[3.4em][r]{1.01}}
& {\bfseries \makebox[5.0em][l]{\makebox[3.4em][r]{1.00}}}\\
& Success
& {\bfseries \makebox[5.0em][l]{\makebox[3.4em][r]{0.934}}}
& \makebox[5.0em][l]{\makebox[3.4em][r]{0.580}}
& \makebox[5.0em][l]{\makebox[3.4em][r]{0.888}}
& \makebox[5.0em][l]{\makebox[3.4em][r]{0.844}}\\
& False alarm
& {\bfseries \makebox[5.0em][l]{\makebox[3.4em][r]{0.066}}}
& \makebox[5.0em][l]{\makebox[3.4em][r]{0.414}}
& \makebox[5.0em][l]{\makebox[3.4em][r]{0.112}}
& \makebox[5.0em][l]{\makebox[3.4em][r]{0.156}}\\
& Failure
& {\bfseries \makebox[5.0em][l]{\makebox[3.4em][r]{0.000}}}
& \makebox[5.0em][l]{\makebox[3.4em][r]{0.006}}
& {\bfseries \makebox[5.0em][l]{\makebox[3.4em][r]{0.000}}}
& {\bfseries \makebox[5.0em][l]{\makebox[3.4em][r]{0.000}}}\\

\addlinespace[4pt]

\multicolumn{6}{@{}l}{\hspace{0.4em}\bfseries\boldmath
\(\mathcal{N}(\boldsymbol{0},I)
\to\mathcal{N}(\boldsymbol{0},\delta I)\)}\\
\addlinespace[2pt]

\(\delta=0.3\) & EDD
& {\bfseries \makebox[5.0em][l]{\makebox[3.4em][r]{12.67}}}
& \makebox[5.0em][l]{\makebox[3.4em][r]{861.44}}
& \makebox[5.0em][l]{\makebox[3.4em][r]{25.03}}
& \makebox[5.0em][l]{\makebox[3.4em][r]{671.29}}\\
& Success
& {\bfseries \makebox[5.0em][l]{\makebox[3.4em][r]{0.942}}}
& \makebox[5.0em][l]{\makebox[3.4em][r]{0.592}}
& \makebox[5.0em][l]{\makebox[3.4em][r]{0.880}}
& \makebox[5.0em][l]{\makebox[3.4em][r]{0.828}}\\
& False alarm
& {\bfseries \makebox[5.0em][l]{\makebox[3.4em][r]{0.058}}}
& \makebox[5.0em][l]{\makebox[3.4em][r]{0.386}}
& \makebox[5.0em][l]{\makebox[3.4em][r]{0.106}}
& \makebox[5.0em][l]{\makebox[3.4em][r]{0.168}}\\
& Failure
& {\bfseries \makebox[5.0em][l]{\makebox[3.4em][r]{0.000}}}
& \makebox[5.0em][l]{\makebox[3.4em][r]{0.022}}
& \makebox[5.0em][l]{\makebox[3.4em][r]{0.014}}
& \makebox[5.0em][l]{\makebox[3.4em][r]{0.004}}\\

\addlinespace[3pt]

\( \delta=2.5\) & EDD
& \makebox[5.0em][l]{\makebox[3.4em][r]{6.33}}
& \makebox[5.0em][l]{\makebox[3.4em][r]{808.32}}
& \makebox[5.0em][l]{\makebox[3.4em][r]{4.61}}
& {\bfseries \makebox[5.0em][l]{\makebox[3.4em][r]{2.06}}}\\
& Success
& {\bfseries \makebox[5.0em][l]{\makebox[3.4em][r]{0.938}}}
& \makebox[5.0em][l]{\makebox[3.4em][r]{0.618}}
& \makebox[5.0em][l]{\makebox[3.4em][r]{0.906}}
& \makebox[5.0em][l]{\makebox[3.4em][r]{0.896}}\\
& False alarm
& {\bfseries \makebox[5.0em][l]{\makebox[3.4em][r]{0.062}}}
& \makebox[5.0em][l]{\makebox[3.4em][r]{0.382}}
& \makebox[5.0em][l]{\makebox[3.4em][r]{0.094}}
& \makebox[5.0em][l]{\makebox[3.4em][r]{0.102}}\\
& Failure
& {\bfseries \makebox[5.0em][l]{\makebox[3.4em][r]{0.000}}}
& {\bfseries \makebox[5.0em][l]{\makebox[3.4em][r]{0.000}}}
& {\bfseries \makebox[5.0em][l]{\makebox[3.4em][r]{0.000}}}
& \makebox[5.0em][l]{\makebox[3.4em][r]{0.002}}\\

\end{longtable}
\endgroup

Theorem~\ref{cor:CLTffwave} provides further motivation for applying this adaptive-threshold approach to our proposed statistic $S_t$ for OKAFF. Specifically, under a Gaussian kernel and a deterministic forgetting factor, the monitoring statistic converges to a Gaussian limit under the null hypothesis. This result motivates constructing the threshold using the evolving mean and variance of the statistic. We therefore implement an adaptive threshold that is updated continuously throughout the monitoring process for OKAFF, NEWMA, Online RFF MMD and MMDEW. Table~\ref{tab:main:d20-mid-arl-edd-sff} below shows some results for these methods when a burn-in period of only $50$ is used, and there is a single changepoint occurring after 100 observations, with a total sequence length of $1100$. Since ScanB and OK-CUSUM require a large amount of pre-change data to construct a reference block, we omit them from this experiment. All methods are given adaptive thresholds to ensure their ARL is approximately $500$. 
The performance metric reported are \(\operatorname{EDD}\), False alarm  (proportion of sequences with an alarm raised before the true
changepoint) Success (the proportion of successful detections, with no false alarm and the changepoint is detected within $1000$ post-change observations), and Failure (failure to detect the
change within 1000 post-change observations). Note that the values of Success, False alarm and Failure sum to $1$. The adaptive threshold construction and additional simulation scenarios, including for non-Gaussian data, for both univariate and multivariate data, are provided in Section~\ref{adpthr} of Supplementary Material.
Table~\ref{tab:main:d20-mid-arl-edd-sff} shows that OKAFF performs well across scenarios considering changes in mean and variance. We also notice that NEWMA does not perform well with a burn-in period of $50$. Online RFF MMD has a smaller EDD when there is a small change in the mean ($\delta=1$) and MMDEW has a smaller EDD when there is a large change in the mean or variance, but a high EDD for small changes. However, comparing across all scenarios shown in the tables in Section~\ref{adpthr} of Supplementary Material, OKAFF generally has the best overall performance, with comparitively low EDD and high Success, and low False alarm and low Failure.

\subsubsection{Learning the threshold from the data: a nonparametric approach}

In Section~\ref{fixthr}, the fixed threshold was learned using Theorem~\ref{cor:CLTffwave} and assuming the pre-change distribution was Gaussian. However, Figure~\ref{fig:boxplot-thresholds-510-540-difpre} shows that the statistic will reflect any changes for a variety of pre-change distributions. This is because the cost function in \eqref{costfun} measures a difference of distributions, sharing the same principle as MMD. If we would like to use the fixed threshold approach, without assuming knowledge of the pre-change distribution, we can use the adaptive threshold approach to learn a threshold from the burn-in period, and then use this as a fixed threshold for the monitoring period.

\subsection{Real data experiments}
In the real data experiments, we use the adaptive threshold briefly described in the previous section, and described in more detail in Section~\ref{adpthr} of Supplementary Material. 

In the single-changepoint experiments, we evaluate the performance of each changepoint detection method using the average detection delay for a given in-control ARL. For the multiple-changepoint and real-data experiments, the $F_1$-score is a more appropriate performance measure. We therefore use the datasets and $F_1$-score evaluation framework introduced in \cite{Burg2020AnEO}. These datasets were independently annotated by multiple human annotators, who identified the presence and locations of perceived changepoints. For each method, the first 50 observations are used to estimate the Gaussian kernel bandwidth using the median heuristic. These observations also serve as burn-in data, and monitoring of the detection statistic begins from the 51st observation onward.

Let $T_k$ be the set of changepoints supplied by annotator $k$, let $X$ be the set of predicted changepoints, and let
\begin{equation*}
    T^{*}=\bigcup_{k=1}^{K}T_k
\end{equation*}
be the union of the $K$ annotation sets.  Following the implementation in \cite{Burg2020AnEO}, the
initial location $0$ is added to every $T_k$ and to $X$ before evaluation. Let $\operatorname{TP}(T,X;\Delta)$ denote the set of true changepoints in $T$
that can be matched one-to-one to predictions in $X$ within the tolerance
margin $\Delta$.  A prediction cannot match more than one true changepoint; when
several predictions are eligible, the closest prediction is used.  The experiment use $\Delta=5$ observations as in \cite{Burg2020AnEO}. In this experiment, only annotations occurring after the first 45 observations are retained because monitoring starts after the first 50 data and the matching tolerance is $\pm5$ observations. Precision, annotation-averaged recall and $F_1$-score
are then defined by
\begin{align*}
    P &= \frac{\lvert\operatorname{TP}(T^{*},X;5)\rvert}{\lvert X\rvert},\\
    R &= \frac{1}{K}\sum_{k=1}^{K}
         \frac{\lvert\operatorname{TP}(T_k,X;5)\rvert}{\lvert T_k\rvert}, \\
    F_1&=\frac{2PR}{P+R}.
\end{align*}

Adding location 0 makes the denominators nonzero and treats the start of the
series as a changepoint shared by the annotations and predictions.  Thus a
dataset with no annotated changes and no alarms receives $P=R=F_1=1$.

\begin{longtable}{llrrrr}
\caption{Per-dataset Best F1 score comparison via parameters grid search. The best
score in each dataset is bold.}
\label{tab:oracle-f1-dataset}\\
\toprule
Dataset & Type & OKAFF & NEWMA & MMDEW & Online RFF MMD \\
\midrule
\endfirsthead
\caption[]{Per-dataset Best F1 score comparison via parameters grid search. (continued).}\\
\toprule
Dataset & Type & OKAFF & NEWMA & MMDEW & Online RFF MMD \\
\midrule
\endhead
\midrule
\multicolumn{6}{r}{Continued on next page}\\
\endfoot
\bottomrule
\endlastfoot
bank$^{\ddagger}$ & U & \textbf{1.0000} & \textbf{1.0000} & \textbf{1.0000} & \textbf{1.0000} \\
bitcoin & U & 0.4791 & \textbf{0.5063} & 0.4496 & 0.4496 \\
brent\_spot & U & \textbf{0.5437} & 0.3607 & 0.3392 & 0.3152 \\
businv & U & \textbf{0.9189} & 0.5882 & 0.5882 & 0.5882 \\
children\_per\_woman & U & 0.5075 & 0.5075 & 0.5075 & \textbf{0.6784} \\
co2\_canada & U & 0.5027 & 0.3610 & \textbf{0.6905} & 0.3610 \\
construction & U & 0.6957 & \textbf{0.8889} & 0.6957 & 0.6957 \\
gdp\_argentina & U & 0.8889 & 0.8889 & \textbf{1.0000} & 0.8889 \\
global\_co2 & U & \textbf{0.8462} & \textbf{0.8462} & \textbf{0.8462} & \textbf{0.8462} \\
homeruns & U & \textbf{0.9474} & 0.6957 & 0.8679 & 0.6957 \\
iceland\_tourism & U & 0.9474 & \textbf{1.0000} & 0.9474 & 0.9474 \\
jfk\_passengers & U & 0.7234 & 0.7234 & \textbf{0.9091} & 0.7234 \\
lga\_passengers & U & \textbf{0.8468} & 0.5386 & 0.5685 & 0.5386 \\
measles & U & \textbf{0.9474} & \textbf{0.9474} & \textbf{0.9474} & \textbf{0.9474} \\
nile$^{\dagger}$ & U & \textbf{1.0000} & \textbf{1.0000} & \textbf{1.0000} & \textbf{1.0000} \\
ratner\_stock & U & \textbf{0.6496} & 0.5714 & 0.5714 & 0.5714 \\
scanline\_126007 & U & \textbf{0.7547} & 0.6523 & 0.7066 & 0.6444 \\
scanline\_42049 & U & 0.7016 & 0.3849 & \textbf{0.7619} & 0.3018 \\
seatbelts & U & 0.6829 & \textbf{0.8235} & \textbf{0.8235} & \textbf{0.8235} \\
shanghai\_license & U & \textbf{0.9655} & 0.6496 & 0.6364 & 0.6364 \\
unemployment\_nl & U & \textbf{0.8416} & 0.6799 & 0.7960 & 0.5663 \\
us\_population & U & \textbf{0.8889} & \textbf{0.8889} & \textbf{0.8889} & \textbf{0.8889} \\
usd\_isk & U & \textbf{0.9474} & 0.9381 & \textbf{0.9474} & 0.5542 \\
well\_log & U & \textbf{0.5878} & 0.4254 & 0.4787 & 0.3141 \\
apple & M & 0.6341 & 0.6154 & \textbf{0.6957} & 0.5938 \\
bee\_waggle\_6 & M & \textbf{0.9286} & \textbf{0.9286} & \textbf{0.9286} & \textbf{0.9286} \\
occupancy & M & 0.7895 & 0.4375 & \textbf{0.9091} & 0.3425 \\
run\_log & M & \textbf{0.7163} & 0.4483 & 0.6061 & 0.4483 \\
\end{longtable}

Table~\ref{tab:oracle-f1-dataset} presents the comparison after excluding all datasets with 50 or fewer observations, as well as \texttt{gdp\_iran} (58), \texttt{gdp\_japan} (58), \texttt{ozone} (54), and \texttt{robocalls} (52), whose monitoring periods are also relatively short. 
A dagger ($\dagger$) marks the \texttt{nile} dataset which has no annotated changepoint outside the initial $50$ observations which are used as a burn-in period. A double dagger ($\ddagger$) marks the \texttt{bank} dataset which has no
annotated changepoints. We retain these two datasets in the analysis since it is still possible for methods to falsely identify changepoints.
Table \ref{tab:oracle-f1-summary} reports average of the best \(F_1\)-scores over all
datasets. These results were obtained by running each method across a grid of parameters, shown in Table~\ref{tab:oracle-parameter-grids}, and reporting the results for the best-performing parameter configuration.

Reviewing the results, we see that OKAFF has the best $F_1$-score in $17/28$ of the datasets, compared to MMDEW ($14/28$), NEWMA ($10/28$) and Online FDD MMD $(8/28)$. In the cases where OKAFF does not have the best $F_1$-score, it has the second-best or joint second-best $F_1$-score. We note however, that in many cases the performance is the same or very similar for some datasets, although there are notable exceptions.

\begin{table}[!htbp]
\centering
\setlength{\tabcolsep}{5.7pt}
\caption{Average of the best $F_1$ scores over all datasets.
The best score is bold.}
\label{tab:oracle-f1-summary}
\begin{tabular}{@{}lrrr@{}}
\toprule
Method & Univariate & Multivariate & Overall \\
\midrule
NEWMA          & 0.7028 & 0.6074 & 0.6892 \\
MMDEW          & 0.7487 & \textbf{0.7848} & 0.7538 \\
Online RFF MMD & 0.6657 & 0.5783 & 0.6532 \\
OKAFF          & \textbf{0.7840} & 0.7671 & \textbf{0.7815} \\
\bottomrule
\end{tabular}
\end{table}

\begin{table}[htbp]
\centering
\small
\caption{Parameter grids used for per-dataset best $F_1$ selection.}
\label{tab:oracle-parameter-grids}
\begin{tabularx}{\textwidth}{lp{4.0cm}X}
\toprule
Method & Tuned parameter & Grid values \\
\midrule

MMDEW 
& Adaptation rate $\rho$,
& $\rho\in\{0.005,0.01,0.05,0.10,0.20\}$ \\
& $\theta$
& $\theta\in\{0.90,0.95,0.99,0.995\}$ \\

\addlinespace
Online RFF MMD 
& Adaptation rate $\rho$,
& $\rho\in\{0.005,0.01,0.05,0.10,0.20\}$ \\
& $\theta$
& $\theta\in\{0.90,0.95,0.99,0.995\}$ \\

\addlinespace
NEWMA 
& Window $B$,
& $B\in\{25,50,100,150,250\}$ \\
& Adaptation rate $\rho$,
& $\rho\in\{\lambda_2(B),0.005,0.01,0.05,0.10,0.20\}$  \\
& $\theta$
& $\theta\in\{0.90,0.95,0.99,0.995\}$ \\

\addlinespace
OKAFF 
& Learning rate $\eta$,
& $\eta\in\{10^{-5},10^{-4},10^{-3},10^{-2}\}$ \\
& Adaptation rate $\rho$,
& $\rho\in\{0.005,0.01,0.05,0.10,0.20\}$ \\
& $\theta=1-(1-\tilde{\theta})/2$
& $\tilde{\theta}\in\{0.90,0.95,0.99,0.995\}$ \\

\bottomrule
\end{tabularx}
\end{table}

\section{Conclusion}\label{sec-conc}

We propose a computationally efficient, online changepoint detection method named OKAFF that is suitable for both univariate and multivariate data. Section~\ref{bemostatnullalt} shows that the method's test statistic $S_t$ reacts to a variety of changes, for a range of pre-change and post-change distributions. This is due to the cost function in \eqref{costfun} that measures a difference in distribution between the current and past data.
Theorem~\ref{cor:CLTffwave} provides a way to set the threshold for the test statistic, if one can assume knowledge of the pre-change distribution. A simulation study following this approach in Section~\ref{fixthr} shows good performance for the proposed OKAFF compared to several competitor methods. However, this approach requires a relatively long burn-in period to allow the test statistic to stabilize. A second approach to setting the threshold, based on the adaptive threshold of NEWMA in \cite{keriven2020newma}, only requires a short burn-in period, and using this approach also yields good performance, as shown in Section~\ref{sec:adapthreshsimstudy}. Indeed, this second approach provides a data-driven way to setting a threshold for $S_t$, which would make OKAFF a truly nonparametric online changepoint method. Results on real-world benchmark datasets from \cite{Burg2020AnEO} show that OKAFF has slightly better overall performance compared to several competitor methods.

\section{Code availability}

The code will be soon be available as a Python package \texttt{okaff}.

\section{Disclosure statement}\label{disclosure-statement}

The authors declare that they have no competing
interests.



\section{Declaration of Generative AI Use}

\label{sec:generative-ai-use}

Generative AI used as a coding assistant, however all code was reviewed and checked by the authors for accuracy. It was also used for copyediting purposes, to check the spelling and grammar in the text.

\bibliography{bibliography.bib}

@article{fukumizu2009kernel,
  title={Kernel choice and classifiability for RKHS embeddings of probability distributions},
  author={Sriperumbudur, Bharath K and Fukumizu, Kenji and Gretton, Arthur and Lanckriet, Gert and Sch{\"o}lkopf, Bernhard},
  journal={Advances in Neural Information Processing Systems},
  volume={22},
  year={2009}
}

@book{rudin1962fourier,
  title={Fourier analysis on groups},
  author={Rudin, Walter and others},
  volume={12},
  year={1962},
  publisher={Interscience publishers New York}
}

@book{christmann2008support,
  title     = {Support Vector Machines},
  author    = {Steinwart, Ingo and Christmann, Andreas},
  series    = {Information Science and Statistics},
  year      = {2008},
  publisher = {Springer},
  address   = {New York},
  doi       = {10.1007/978-0-387-77242-4}
}

@book{reed1980methods,
  author    = {Reed, Michael and Simon, Barry},
  title     = {Methods of Modern Mathematical Physics. Vol. 1: Functional Analysis},
  publisher = {Academic Press},
  address   = {San Diego},
  year      = {1980}
}

@article{aronszajn1950theory,
  author  = {Aronszajn, N.},
  title   = {Theory of Reproducing Kernels},
  journal = {Transactions of the American Mathematical Society},
  volume  = {68},
  number  = {3},
  pages   = {337--404},
  year    = {1950}
}

@inbook{vandervaart1998stochastic,
  author    = {van der Vaart, A. W.},
  title     = {Stochastic Convergence},
  booktitle = {Asymptotic Statistics},
  series    = {Cambridge Series in Statistical and Probabilistic Mathematics},
  publisher = {Cambridge University Press},
  year      = {1998},
  pages     = {5--24}
}

@article{gretton2006kernel,
  title={A kernel method for the two-sample-problem},
  author={Gretton, Arthur and Borgwardt, Karsten and Rasch, Malte and Sch{\"o}lkopf, Bernhard and Smola, Alex},
  journal={Advances in Neural Information Processing Systems},
  volume={19},
  year={2006}
}

@article{fukumizu2008characteristic,
  title={Characteristic kernels on groups and semigroups},
  author={Fukumizu, Kenji and Gretton, Arthur and Sch{\"o}lkopf, Bernhard and Sriperumbudur, Bharath K},
  journal={Advances in Neural Information Processing Systems},
  volume={21},
  year={2008}
}

@book{shawe2004kernel,
  title     = {Kernel Methods for Pattern Analysis},
  author    = {Shawe-Taylor, John and Cristianini, Nello},
  year      = {2004},
  publisher = {Cambridge University Press},
  isbn      = {9780521813976}
}

@article{Page1954CONTINUOUSIS,
  title={CONTINUOUS INSPECTION SCHEMES},
  author={E. S. Page},
  journal={Biometrika},
  year={1954},
  volume={41},
  pages={100-115}
}

@article{Chu1996MonitoringSC,
  title={Monitoring Structural Change},
  author={Chia-Shang James Chu and Maxwell B. Stinchcombe and Halbert L. White},
  journal={Econometrica},
  year={1996},
  volume={64},
  pages={1045-1065}
}

@book{tartakovsky2014sequential,
  author    = {Tartakovsky, Alexander G. and Nikiforov, Igor V. and Basseville, Mich{\`e}le},
  title     = {Sequential Analysis: Hypothesis Testing and Changepoint Detection},
  series    = {Chapman \& Hall/CRC Monographs on Statistics \& Applied Probability},
  volume    = {136},
  publisher = {Chapman \& Hall/CRC, Taylor \& Francis Group},
  year      = {2014}
}

@article{Aue2023TheSO,
  title={The state of cumulative sum sequential changepoint testing 70 years after Page},
  author={Aue, Alexander and Kirch, Claudia},
  journal={Biometrika},
  volume={111},
  number={2},
  pages={367--391},
  year={2024},
  publisher={Oxford University Press}
}

@article{Harchaoui2007RetrospectiveMC,
  title={Retrospective Mutiple Change-Point Estimation with Kernels},
  author={Za{\"i}d Harchaoui and Olivier Capp{\'e}},
  journal={2007 IEEE/SP 14th Workshop on Statistical Signal Processing},
  year={2007},
  pages={768-772}
}

@article{arlot2019kernel,
  title={A kernel multiple change-point algorithm via model selection},
  author={Arlot, Sylvain and Celisse, Alain and Harchaoui, Zaid},
  journal={Journal of Machine Learning Research},
  volume={20},
  number={162},
  pages={1--56},
  year={2019}
}

@article{Harchaoui2008KernelCA,
  title={Kernel change-point analysis},
  author={Harchaoui, Zaid and Moulines, Eric and Bach, Francis},
  journal={Advances in Neural Information Processing Systems},
  volume={21},
  year={2008}
}

@article{ferrari2023online,
  title={Online change-point detection with kernels},
  author={Ferrari, Andr{\'e} and Richard, C{\'e}dric and Bourrier, Anthony and Bouchikhi, Ikram},
  journal={Pattern Recognition},
  volume={133},
  pages={109022},
  year={2023},
  publisher={Elsevier}
}

@article{Plasse2021Streaming,
  author  = {Plasse, J. and Hoeltgebaum, H. and Adams, N. M.},
  title   = {Streaming changepoint detection for transition matrices},
  journal = {Data Mining and Knowledge Discovery},
  year    = {2021},
  volume  = {35},
  pages   = {1287--1316}
}

@article{Burg2020AnEO,
  title={An Evaluation of Change Point Detection Algorithms},
  author={Gerrit J. J. van den Burg and Christopher K. I. Williams},
  journal={ArXiv},
  year={2020},
  volume={abs/2003.06222}
}

@article{kalman1960new,
  author  = {Kalman, Rudolf E.},
  title   = {A New Approach to Linear Filtering and Prediction Problems},
  journal = {Journal of Basic Engineering},
  volume  = {82},
  number  = {1},
  pages   = {35--45},
  year    = {1960},
  month   = mar
}

@article{bodenham2017continuous,
  author  = {Bodenham, Dean A. and Adams, Niall M.},
  title   = {Continuous Monitoring for Changepoints in Data Streams Using Adaptive Estimation},
  journal = {Statistics and Computing},
  volume  = {27},
  number  = {5},
  pages   = {1257--1270},
  year    = {2017}
}

@article{hoeltgebaum2021unsupervised,
author  = {Hoeltgebaum, Henrique and Adams, Niall and Lau, F. Din-Houn},
title   = {Unsupervised Streaming Anomaly Detection for Instrumented Infrastructure},
journal = {The Annals of Applied Statistics},
volume  = {15},
number  = {3},
pages   = {1101--1125},
year    = {2021},
month   = sep
}

@inproceedings{fukumizu2007kernel,
  author    = {Fukumizu, Kenji and Gretton, Arthur and Sun, Xiaohai and Sch{\"o}lkopf, Bernhard},
  title     = {Kernel Measures of Conditional Dependence},
  booktitle = {Advances in Neural Information Processing Systems},
  volume    = {20},
  pages     = {489--496},
  year      = {2007},
  publisher = {Curran Associates, Inc.}
}

@book{berlinet2004reproducing,
  author    = {Berlinet, Alain and Thomas-Agnan, Christine},
  title     = {Reproducing Kernel Hilbert Spaces in Probability and Statistics},
  year      = {2004},
  publisher = {Springer},
  address   = {New York, NY},
  edition   = {1},
  pages     = {355},
  isbn      = {978-1-4020-7679-4}
}

@article{Roberts1959,
  author  = {Roberts, S. W.},
  title   = {Control Chart Tests Based on Geometric Moving Averages},
  journal = {Technometrics},
  year    = {1959},
  volume  = {1},
  number  = {3},
  pages   = {239--250}
}

@article{Gretton2012AKT,
  title   = {A Kernel Two-Sample Test},
  author  = {Gretton, Arthur and Borgwardt, Karsten M. and
             Rasch, Malte J. and Sch{\"o}lkopf, Bernhard and
             Smola, Alexander},
  journal = {Journal of Machine Learning Research},
  year    = {2012},
  volume  = {13},
  number  = {25},
  pages   = {723--773}
}

@article{KeCe2019,
  author = {J{\'e}r{\'e}mie Kellner and Alain Celisse},
  title = {{A one-sample test for normality with kernel methods}},
  volume = {25},
  journal = {Bernoulli},
  number = {3},
  publisher = {Bernoulli Society for Mathematical Statistics and Probability},
  pages = {1816 -- 1837},
  year = {2019}
}

@article{KUNDU2000265,
  title   = {Central Limit Theorems revisited},
  author  = {Kundu, Subrata and Majumdar, Suman and Mukherjee, Kanchan},
  journal = {Statistics \& Probability Letters},
  volume  = {47},
  number  = {3},
  pages   = {265--275},
  year    = {2000}
}

@article{li2019scan,
  title   = {Scan {B}-statistic for kernel change-point detection},
  author  = {Li, Shuang and Xie, Yao and Dai, Hanjun and Song, Le},
  journal = {Sequential Analysis},
  volume  = {38},
  number  = {4},
  pages   = {503--544},
  year    = {2019}
}

@article{Wei2022OnlineKC,
  author  = {Wei, Song and Xie, Yao},
  title   = {Online Kernel {CUSUM} for Change-Point Detection},
  journal = {Journal of the Royal Statistical Society Series B: Statistical Methodology},
  year    = {2026},
  volume  = {88},
  number  = {4},
  pages   = {1251--1277},
  doi     = {10.1093/jrsssb/qkag020}
}

@article{kalinke2025maximum,
  author  = {Kalinke, Florian and Heyden, Marco and Gntuni, Georg
             and Fouch{\'e}, Edouard and B{\"o}hm, Klemens},
  title   = {Maximum Mean Discrepancy on Exponential Windows for Online Change Detection},
  journal = {Transactions on Machine Learning Research},
  year    = {2025},
  issn    = {2835-8856}
}

@article{anagnostopoulos2012online,
  author  = {Anagnostopoulos, Christoforos and Tasoulis, Dimitrios K. and Adams, Niall M. and Pavlidis, Nicos G. and Hand, David J.},
  title   = {Online Linear and Quadratic Discriminant Analysis with Adaptive Forgetting for Streaming Classification},
  journal = {Statistical Analysis and Data Mining},
  volume  = {5},
  number  = {2},
  pages   = {139--166},
  year    = {2012}
}

@article{kazi2026online,
  author  = {Kazi, S. H. and Adams, N. and Cohen, E. A. K.},
  title   = {Online Spectral Density Estimation},
  journal = {Journal of Computational and Graphical Statistics},
  pages   = {1--11},
  year    = {2026}
}

@article{keriven2020newma,
  title   = {{NEWMA}: A New Method for Scalable Model-Free Online Change-Point Detection},
  author  = {Keriven, Nicolas and Garreau, Damien and Poli, Iacopo},
  journal = {IEEE Transactions on Signal Processing},
  volume  = {68},
  pages   = {3515--3528},
  year    = {2020}
}

@article{kalinke2025optimal,
  title={Optimal Online Change Detection via Random Fourier Features},
  author={Kalinke, Florian and Gavioli-Akilagun, Shakeel},
  journal={Advances in Neural Information Processing Systems},
  volume={38},
  pages={9866--9901},
  year={2026}
}

@inproceedings{rahimi2007random,
  author    = {Rahimi, Ali and Recht, Benjamin},
  title     = {Random Features for Large-Scale Kernel Machines},
  booktitle = {Advances in Neural Information Processing Systems},
  volume    = {20},
  pages     = {1177--1184},
  year      = {2007},
  publisher = {Curran Associates, Inc.}
}

@inbook{McDiarmid1989,
  author    = {McDiarmid, Colin},
  title     = {On the method of bounded differences},
  booktitle = {Surveys in Combinatorics, 1989: Invited Papers at the Twelfth British Combinatorial Conference},
  editor    = {Siemons, J.},
  series    = {London Mathematical Society Lecture Note Series},
  volume    = {141},
  pages     = {148--188},
  publisher = {Cambridge University Press},
  address   = {Cambridge},
  year      = {1989},
}

\clearpage

\phantomsection\label{supplementary-material}
\bigskip




\section{Supplementary Material}\label{supmat}
The supplementary material provides the proof of Theorems~\ref{CLTffwave} and \ref{cor:CLTffwave}, Theorem~\ref{fap1}, and additional experimental results that complement the preceding sections, including detailed simulation settings, simulation study with fixed thresholds, simulation study with adaptive thresholds, comparisons of runtime, comparisons of the performance under different number of random Fourier features, the behavior of the monitoring statistic under null and alternative distributions.

\subsection{Proof of Theorems \ref{CLTffwave} and \ref{cor:CLTffwave}}\label{prfappd}

Since Theorems \ref{CLTffwave} and \ref{cor:CLTffwave} are closely related, we prove them together below.

\begin{proof}

By the reproducing property,
\[
\|\psi(X)\|_{\mathcal H(K)}^2
=
\langle K(X,\cdot),K(X,\cdot)\rangle_{\mathcal H(K)}
=
K(X,X).
\]
Hence the assumption \(\mathbb E_{X\sim p} K(X,X)<\infty\) implies
\[
\mathbb E_{X\sim p}\|\psi(X)\|_{\mathcal H(K)}^2<\infty.
\]
In particular,
\[
\mathbb E_{X\sim p}\|\psi(X)\|_{\mathcal H(K)}
\le
\left(\mathbb E_{X\sim p}\|\psi(X)\|_{\mathcal H(K)}^2\right)^{1/2}
<\infty.
\]
Moreover, \(\psi(X)\) is strongly measurable. So \(\mu_p=\mathbb E_{X\sim p}\psi(X)\) is well-defined in \(\mathcal H(K)\).
Now define
\[
\xi_i:=\psi(X_i)-\mu_p,
\qquad i\ge1.
\]
Thus, we have
\(
    \mathbb E_{X_i\sim p}\xi_i=0.
\)
By Jensen's inequality,
\[
\|\mu_p\|_{\mathcal H(K)}^2
\le
\mathbb E_{X\sim p}\|\psi(X)\|_{\mathcal H(K)}^2
<\infty.
\]
Therefore,
\begin{align}\label{esqfit}
\mathbb E_{X_i\sim p}\|\xi_i\|_{\mathcal H(K)}^2
=
\mathbb E_{X_i\sim p}\|\psi(X_i)-\mu_p\|_{\mathcal H(K)}^2 \le
2\mathbb E_{X_i\sim p}\|\psi(X_i)\|_{\mathcal H(K)}^2
+
2\|\mu_p\|_{\mathcal H(K)}^2 <\infty.
\end{align}
Let
\[
\delta_t
=
\sum_{i=1}^tb_{i,t}^2
=
\frac{\sum_{i=1}^t a_{i,t}^2}{W_t^2},
\]
and 
\(
c_{i,t}:=\frac{b_{i,t}}{\sqrt{\delta_t}}
\), $1\le i\le t$, then
\(
\sum_{i=1}^t c_{i,t}^2=1.
\)
We now apply the Hilbert-space central limit theorem. Define
the triangular array
\[
\zeta_{t,i}:=c_{i,t}\xi_i,
\qquad 1\le i\le t.
\]
Then the row sum is
\[
R_t:=\sum_{i=1}^t \zeta_{t,i}
=
\frac{\bar Y_t-\mu_p}{\sqrt{\delta_t}}.
\]
We have for every $1\leq i\leq t$,
\begin{align*}
    \mathbb E_{X_i\sim p}\|\zeta_{t,i}\|_{\mathcal H(K)}^2=c_{i,t}^2\mathbb E_{X_i\sim p}\|\xi_i\|_{\mathcal H(K)}^2\leq \mathbb E_{X_i\sim p}\|\xi_i\|_{\mathcal H(K)}^2<\infty.
\end{align*}
Let \((e_k)_{k\ge1}\) be an orthonormal basis of \(\mathcal H(K)\). Since
\(\mathbb E_{X_i\sim p}\xi_i=0\), we have
\begin{align}\label{zeroexp}
    \mathbb E_{X_i\sim p}\langle \zeta_{t,i},e_k\rangle_{\mathcal H(K)}=
c_{i,t}
\mathbb E_{X_i\sim p}\left[
\left\langle \xi_i,e_k\right\rangle_{\mathcal H(K)}
\right]=0
\end{align}
for every \(i,t,k\). 
Let \(C_t\) be the covariance operator of \(R_t\). Since $\xi_i$, $\xi_j$ are independent and centered for $i\neq j$, and $\xi_i$ ($i=1,\ldots,t$) are identically distribution, and $\sum_{i=1}^t c_{i,t}^2=1$, for
\(u,v\in\mathcal H(K)\), we have
\begin{align}
\mathbb E
\left[
\langle R_t,u\rangle_{\mathcal H(K)}
\langle R_t,v\rangle_{\mathcal H(K)}
\right]=
\sum_{i=1}^t c_{i,t}^2
\mathbb E_{X_i\sim p}
\left[
\langle \xi_i,u\rangle_{\mathcal H(K)}
\langle \xi_i,v\rangle_{\mathcal H(K)}
\right] =
\langle \Sigma_p u,v\rangle_{\mathcal H(K)},\label{cov}
\end{align}
where 
\begin{align*}
\Sigma_p
:=
\mathbb E_{X\sim p}
\left[
\{\psi(X)-\mu_p\}\otimes \{\psi(X)-\mu_p\}
\right]
\end{align*}
is the covariance operator of random element \(\psi(X)\).
Thus
\[
C_t=\Sigma_p
\]
for every \(t\). Hence, for every \(k,l\ge1\),
\begin{align}\label{bas}
\langle C_t e_k,e_l\rangle_{\mathcal H(K)}
=
\langle \Sigma_p e_k,e_l\rangle_{\mathcal H(K)}.
\end{align}
Next, for every \(t\geq1\),
\[
\sum_{k=1}^\infty
\langle C_t e_k,e_k\rangle_{\mathcal H(K)}
=
\sum_{k=1}^\infty
\langle \Sigma_p e_k,e_k\rangle_{\mathcal H(K)}
=
\operatorname{tr}(\Sigma_p).
\]
Since \(\Sigma_p\) is positive, its trace is
\[
\operatorname{tr}(\Sigma_p)
:=
\sum_{k=1}^\infty
\langle \Sigma_p e_k,e_k\rangle_{\mathcal H(K)}.
\]
For each \(k\ge1\),
\begin{align*}
\langle \Sigma_p e_k,e_k\rangle_{\mathcal H(K)}
&=
\left\langle
\mathbb E_{X\sim p}\left[(\xi\otimes \xi)e_k\right],
e_k
\right\rangle_{\mathcal H(K)} \\
&=\left\langle
\mathbb E_{X\sim p}\left[
\langle e_k,\xi\rangle_{\mathcal H(K)}\xi
\right],
e_k
\right\rangle_{\mathcal H(K)} \\
&=
\mathbb E_{X\sim p}\left[
\langle \xi,e_k\rangle_{\mathcal H(K)}^2
\right],
\end{align*}
where
\(
\xi:=\psi(X)-\mu_p
\).
Hence, by Tonelli's theorem and \eqref{esqfit}, we have
\begin{align}\label{tracefit}
\operatorname{tr}(\Sigma_p)
&=\sum_{k=1}^\infty
\mathbb E_{X\sim p}\left[
\langle \xi,e_k\rangle_{\mathcal H(K)}^2
\right]\nonumber\\
&=
\mathbb E_{X\sim p}\left[
\sum_{k=1}^\infty
\langle \xi,e_k\rangle_{\mathcal H(K)}^2
\right] \nonumber\\
&=
\mathbb E_{X\sim p}\|\xi\|_{\mathcal H(K)}^2\nonumber\\
&=
\mathbb E_{X\sim p}\|\psi(X)-\mu_p\|_{\mathcal H(K)}^2<\infty.
\end{align}
Next, for every \(k\ge1\) and \(\varepsilon>0\). We need to prove that as $t\to\infty$,
\[
\sum_{i=1}^t
\mathbb E_{X_i\sim p}
\left[
\left\langle \zeta_{t,i},e_k\right\rangle_{\mathcal H(K)}^2
\mathbf 1
\left\{
\left|
\left\langle \zeta_{t,i},e_k\right\rangle_{\mathcal H(K)}
\right|
>
\varepsilon
\right\}
\right]
\]
converge to \(0\).
Note that
\begin{align*}
&\sum_{i=1}^t
\mathbb E_{X_i\sim p}
\left[
\left\langle \zeta_{t,i},e_k\right\rangle_{\mathcal H(K)}^2
\mathbf 1
\left\{
\left|
\left\langle \zeta_{t,i},e_k\right\rangle_{\mathcal H(K)}
\right|
>
\varepsilon
\right\}
\right] 
\end{align*}
equal to
\begin{align*}
\sum_{i=1}^t c_{i,t}^2
\mathbb E_{X_i\sim p}
\left[
\left\langle \xi_i,e_k\right\rangle_{\mathcal H(K)}^2
\mathbf 1
\left\{
|c_{i,t}|
\left|
\left\langle \xi_i,e_k\right\rangle_{\mathcal H(K)}
\right|
>
\varepsilon
\right\}
\right].
\end{align*}
Since \(X_1,X_2,\ldots\) are i.i.d., the random variables
\(\left\langle \xi_i,e_k\right\rangle_{\mathcal H(K)}
\), \( i\ge1,\) are identically distributed. 
Hence
\begin{align*}
&\sum_{i=1}^t c_{i,t}^2
\mathbb E_{X_i\sim p}
\left[
Z_{i,k}^2
\mathbf 1
\left\{
|c_{i,t}|
\left|
Z_{i,k}
\right|
>
\varepsilon
\right\}
\right] \le
\mathbb E_{X_1\sim p}
\left[
Z_{1,k}^2
\mathbf 1
\left\{
\left|
Z_{1,k}
\right|
>
\frac{\varepsilon}{\max_{1\le j\le t}|c_{j,t}|}
\right\}
\right]
\end{align*}
where \(Z_{i,k}=\left\langle \xi_i,e_k\right\rangle_{\mathcal H(K)}\),
\(
\sum_{i=1}^t c_{i,t}^2=1.
\)
Furthermore,
\[
\mathbb E_{X_1\sim p}
\left[
\left\langle \xi_1,e_k\right\rangle_{\mathcal H(K)}^2
\right]
\le
\mathbb E_{X_1\sim p}\|\xi_1\|_{\mathcal H(K)}^2
<\infty,
\]
which implies $\left|
\left\langle \xi_1,e_k\right\rangle_{\mathcal H(K)}
\right|<\infty$ almost surely. 
Recall that for $1\leq i\leq t$,
\(
a_{i,t}
=
\prod_{s=i}^{t-1}\lambda_s
\).
Since \(0<\lambda_s<1\), we have
\(
0<a_{i,t}\le1.
\)
Using the inequality
\[
1-\prod_{r=1}^m(1-x_r)\le \sum_{r=1}^m x_r,
\qquad 0\le x_r\le1,
\]
we obtain
\[
1-a_{i,t}
\le
\sum_{s=i}^{t-1}(1-\lambda_s).
\]
Hence,
\[
0
\le
\frac1t\sum_{i=1}^t(1-a_{i,t})
\le
\frac1t\sum_{i=1}^t\sum_{s=i}^{t-1}(1-\lambda_s)=
\frac1t\sum_{s=1}^{t-1}s(1-\lambda_s).
\]
Since \(s(1-\lambda_s)\to0\), as $s\to\infty$, Cesaro's theorem implies 
\[
\lim_{t\to\infty}\frac1t\sum_{s=1}^{t-1}s(1-\lambda_s)=0.
\]
Therefore, 
\[
\lim_{t\to\infty}\frac{W_t}{t}=1.
\]
Similarly, since \(0<a_{i,t}\le1\),
\[
0
\le
1-\frac1t\sum_{i=1}^t a_{i,t}^2
=
\frac1t\sum_{i=1}^t(1-a_{i,t}^2)
\le
\frac2t\sum_{i=1}^t(1-a_{i,t})
\to0.
\]
Therefore,
\begin{align}\label{sumaitlim}
    \lim_{t\to\infty}\frac1t\sum_{i=1}^t a_{i,t}^2=1.
\end{align}
Thus, we obtain, as $t\to\infty$,
\(
t\delta_t\to1.
\)
Moreover, since \(a_{t,t}=1\) and \(0<a_{i,t}\le1\),
\[
\max_{1\le i\le t}c_{i,t}^2
=
\frac{\max_{1\le i\le t}b_{i,t}^2}{\delta_t}
=
\frac{1}{\sum_{i=1}^t a_{i,t}^2}.
\]
Hence, by \eqref{sumaitlim}, we have
\begin{align}\label{clim0}
    \lim_{t\to\infty}\max_{1\le i\le t}|c_{i,t}|=0.
\end{align}
Since $\left|
\left\langle \xi_1,e_k\right\rangle_{\mathcal H(K)}
\right|<\infty$ almost surely, and \eqref{clim0}, we have as $t\to\infty$
\[
\left\langle \xi_1,e_k\right\rangle_{\mathcal H(K)}^2
\mathbf 1
\left\{
\left|
\left\langle \xi_1,e_k\right\rangle_{\mathcal H(K)}
\right|
>
\frac{\varepsilon}{\max_{1\le j\le t}|c_{j,t}|}
\right\}\to0, 
\]
almost surely. Therefore, by dominated convergence theorem, as $t\to\infty$
\[
\mathbb E_{X_1\sim p}
\left[
\left\langle \xi_1,e_k\right\rangle_{\mathcal H(K)}^2
\mathbf 1
\left\{
\left|
\left\langle \xi_1,e_k\right\rangle_{\mathcal H(K)}
\right|
>
\frac{\varepsilon}{\max_{1\le j\le t}|c_{j,t}|}
\right\}
\right]
\]
converges to \(0\).
Consequently, for every \(k\ge1\) and every \(\varepsilon>0\), as $t\to\infty$,
\[
\sum_{i=1}^t
\mathbb E_{X_i\sim p}
\left[
\left\langle \zeta_{t,i},e_k\right\rangle_{\mathcal H(K)}^2
\mathbf 1
\left\{
\left|
\left\langle \zeta_{t,i},e_k\right\rangle_{\mathcal H(K)}
\right|
>
\varepsilon
\right\}
\right]
\to0.
\]
Combine this with \eqref{bas} and \eqref{tracefit}, and applying the Hilbert-space CLT (see, e.g., Theorem 1.1 in \cite{KUNDU2000265}),
\[
R_t
=
\frac{\bar Y_t-\mu_p}{\sqrt{\delta_t}}
\stackrel{d}{\rightarrow}
G
\qquad
\text{in }\mathcal H(K),
\]
where (here $\stackrel{d}{\rightarrow}$ denotes convergence in distribution)
\[
G\sim N(0,\Sigma_p)\;\;\;\;\text{in} \;\mathcal H(K).
\]
If \(dim(\mathcal{H}(K))<\infty\), it follows from the usual multivariate Lindeberg–Feller CLT (see, e.g., Proposition 2.27 in \cite{vandervaart1998stochastic}). Now write
\[
\bar Y_t
=
\mu_p+\sqrt{\delta_t}R_t.
\]
Then
\begin{align*}
S_t
&=
\|\bar Y_t\|_{\mathcal H(K)}^2=
\|\mu_p\|_{\mathcal H(K)}^2
+
2\sqrt{\delta_t}\langle \mu_p,R_t\rangle_{\mathcal H(K)}
+
\delta_t\|R_t\|_{\mathcal H(K)}^2.
\end{align*}
If \(\mu_p\neq0\), then
\[
\frac{
S_t-\|\mu_p\|_{\mathcal H(K)}^2
}{\sqrt{\delta_t}}
=
2\langle \mu_p,R_t\rangle_{\mathcal H(K)}
+
\sqrt{\delta_t}\|R_t\|_{\mathcal H(K)}^2.
\]
Since
\[
R_t\stackrel{d}{\rightarrow} G
\qquad
\text{in }\mathcal H(K),
\]
by the continuous mapping
theorem 
\[
\|R_t\|_{\mathcal H(K)}^2
\stackrel{d}{\rightarrow}
\|G\|_{\mathcal H(K)}^2.
\]
Then, as $t\to\infty$
\[
\|R_t\|_{\mathcal H(K)}^2=O_{\mathbb P}(1).
\]
Since \(\delta_t\to0\) as $t\to\infty$, we have as $t\to\infty$
\[
\sqrt{\delta_t}\|R_t\|_{\mathcal H(K)}^2\to0
\]
in probability. 
Hence, by Slutsky's theorem, as $t\to\infty$
\[
\frac{
S_t-\|\mu_p\|_{\mathcal H(K)}^2
}{\sqrt{\delta_t}}
\stackrel{d}{\rightarrow}
2\langle \mu_p,G\rangle_{\mathcal H(K)}.
\]
Since
\[
\langle \mu_p,G\rangle_{\mathcal H(K)}
\sim
N\left(
0,
\langle \Sigma_p\mu_p,\mu_p\rangle_{\mathcal H(K)}
\right),
\]
and \(t\delta_t\to1\),  by Slutsky’s theorem
we get as $t\to\infty$
\[
\sqrt t
\left(
S_t-\|\mu_p\|_{\mathcal H(K)}^2
\right)
\stackrel{d}{\rightarrow}
N\left(
0,
4\langle \Sigma_p\mu_p,\mu_p\rangle_{\mathcal H(K)}
\right).
\]
Next, we aim to prove that
if \(\mu_p=0\), then as $t\to\infty$
\[
tS_t
\stackrel{d}{\rightarrow}
\sum_{r\ge1}\eta_{r,p}Z_r^2.
\]
Note that if \(\mu_p=0\), then
\[
S_t=\delta_t\|R_t\|_{\mathcal H(K)}^2.
\]
Therefore, by the continuous mapping theorem
\[
\frac{S_t}{\delta_t}
=
\|R_t\|_{\mathcal H(K)}^2
\stackrel{d}{\rightarrow}
\|G\|_{\mathcal H(K)}^2.
\]
Since \(t\delta_t\to1\), this implies
\[
tS_t\stackrel{d}{\rightarrow} \|G\|_{\mathcal H(K)}^2.
\]
Since \(G\sim N(0,\Sigma_p)\) in \(\mathcal H(K)\), the covariance operator \(\Sigma_p\) is positive, self-adjoint and trace-class. Hence its nonzero
eigenvalues \((\eta_{r,p})_{r\ge1}\), counted with multiplicity, satisfy
\begin{align}\label{trsp}
\sum_{r\ge1}\eta_{r,p}=\operatorname{tr}(\Sigma_p)<\infty.
\end{align}
By the Karhunen--Lo\`eve expansion,
\begin{align}\label{G2}
    \|G\|_{\mathcal H(K)}^2
\overset{d}{=}
\sum_{r\ge1}\eta_{r,p}Z_r^2,
\end{align}
where \(Z_1,Z_2,\ldots\) are i.i.d. \(N(0,1)\).
Therefore, as $t\to\infty$
\[
tS_t
\stackrel{d}{\rightarrow}
\sum_{r\ge1}\eta_{r,p}Z_r^2.
\]
Finally, note that
\[
\mathbb E S_t
=
\|\mu_p\|_{\mathcal H(K)}^2
+
\delta_t\operatorname{tr}(\Sigma_p)=
\eta\delta_t
+
\bigl(1-\delta_t\bigr)\theta_K.
\]
Thus, in the nonzero mean case ($\mu_p\neq0$),
\[
\sqrt t
\left(
\mathbb E S_t-\|\mu_p\|_{\mathcal H(K)}^2
\right)
=
\sqrt t\,\delta_t\operatorname{tr}(\Sigma_p)
\to0,
\]
thus, by Slutsky’s theorem, we have
as $t\to\infty$
\[
\sqrt t
\left(
S_t-\mathbb E S_t
\right)
\stackrel{d}{\rightarrow}
N\left(
0,
4\langle \Sigma_p\mu_p,\mu_p\rangle_{\mathcal H(K)}
\right).
\]
In the zero mean case ($\mu_p=0$),
\[
\mathbb E S_t=\delta_t\operatorname{tr}(\Sigma_p),
\]
and hence
\[
t(S_t-\mathbb E S_t)
=
t\delta_t
\left(
\frac{S_t}{\delta_t}-\operatorname{tr}(\Sigma_p)
\right).
\]
Since \(t\delta_t\to1\), by Slutsky’s theorem, we have
\[
t(S_t-\mathbb E S_t)
\stackrel{d}{\rightarrow}
\|G\|_{\mathcal H(K)}^2-\operatorname{tr}(\Sigma_p).
\]
By \eqref{trsp} -- \eqref{G2}
we have
\[
t(S_t-\mathbb E S_t)
\stackrel{d}{\rightarrow}
\sum_{r\ge1}\eta_{r,p}(Z_r^2-1).
\]
\end{proof}

\subsection{Proof of Theorem \ref{fap1}}
\begin{proof}
Recall that $Y_i=\psi(X_i)$, and
\[
S_t
=
\left\|\sum_{i=1}^t b_{i,t}Y_i\right\|_{\mathcal H(K)}^2
=
\sum_{i=1}^t b_{i,t}^2K(X_i,X_i)
+
2\sum_{1\le i<j\le t}b_{i,t} b_{j,t}K(X_i,X_j).
\]
Note that \(X_i\) are i.i.d.,
\[
\mathbb E S_t
=
\eta_K\sum_{i=1}^t b_{i,t}^2
+
2\theta_K\sum_{1\le i<j\le t}b_{i,t} b_{j,t}=
\eta_K\delta_t+(1-\delta_t)\theta_K,
\]
where
\(
\eta_K=\mathbb E_{X_i\sim p} K(X_i,X_i)
\),
and
\(
\theta_K=\mathbb E_{X_i,X_j\sim p} K(X_i,X_j).
\)
Since \(K\) is positive semidefinite,
\(
K(x,x)\ge 0
\)
for every \(x\). Moreover, by the Cauchy--Schwarz inequality,
\begin{align}\label{Kbound}
    |K(x,z)|
=
|\langle K(x,\cdot),K(z,\cdot)\rangle_{\mathcal H(K)}|
\le
\sqrt{K(x,x)K(z,z)}
\le
\kappa.
\end{align}
Replace \(X_i\) by \(X_i'\), the diagonal term changes by at most
\[
b_{i,t}^2|K(X_i,X_i)-K(X_i',X_i')|
\le
\kappa b_{i,t}^2.
\]
The off-diagonal terms involving \(X_i\) change by at most
\[
2b_{i,t}\sum_{j\ne i}b_{j,t}
\left|
K(X_i,X_j)-K(X_i',X_j)
\right|.
\]
By \eqref{Kbound}, we have
\(
\left|
K(X_i,X_j)-K(X_i',X_j)
\right|
\le
2\kappa.
\)
Therefore,
\[
\left|
S_t(X)-S_t(X^{(i)})
\right|
\le
\kappa b_{i,t}^2
+
4\kappa b_{i,t}\sum_{j\ne i}b_{j,t}=
\kappa b_{i,t}(4-3b_{i,t}),
\]
where \(X:=(X_1,\ldots,X_t)\), \(X^{(i)}\) is the vector obtained by replacing the i-th component \(X_i\) of \(X\) by \(X_i'\), and
\(
S_t(X):=S_t(X_1,\ldots,X_t):=\sum_{k,\ell=1}^t b_{k,t} b_{\ell,t} K(X_k,X_\ell),
\)
and
\(
S_t(X^{(i)}):=S_t(X_1,\ldots,X_{i-1},X_i',X_{i+1},\ldots,X_t).
\)
Therefore,
\begin{align}\label{bound2}
\sup_{X\in\mathcal{X}^t,X_i'\in \mathcal{X}}\left|
S_t(X)-S_t(X^{(i)})
\right|
\le
\kappa b_{i,t}(4-3b_{i,t}).
\end{align}
By \eqref{bound2} and McDiarmid's inequality \citep{McDiarmid1989}, we have
\begin{align}\label{bound1}
    \mathbb P\left(
|S_t-\mathbb E S_t|>\varepsilon
\right)
\le
2\exp\left(
-\frac{2\varepsilon^2}{\kappa^2L_t}
\right),
\end{align}
where 
\(
L_t
=
\sum_{i=1}^t b_{i,t}^2(4-3b_{i,t})^2.
\)
Let \(j(t):=\lfloor \log_2 t\rfloor\). For every
\(\alpha\in(0,1)\), define
\[
\varepsilon_{t}
:=
\kappa
\sqrt{
\frac{L_t}{2}
\log\left(
\frac{2\pi^2\,2^{j(t)}(j(t)+1)^2}{3\alpha}
\right)
}.
\]
That is,
\begin{align*}
    \varepsilon_{t}
&=\kappa
\sqrt{
\frac{L_t}{2}
\left[
\lfloor \log_2 t\rfloor\log 2
+
2\log(\lfloor \log_2 t\rfloor+1)
+
\log\left(\frac{2\pi^2}{3\alpha}\right)
\right]
}.
\end{align*}
Then, for \(2^j\le t<2^{j+1}\), we have \(j(t)=j\). Therefore, by \eqref{bound1}, we have
\[
\begin{aligned}
\mathbb P\left(
|S_t-\mathbb E S_t|>\varepsilon_{t}
\right)
&\le
2\exp\left(
-\frac{2\varepsilon_{t}^2}
{\kappa^2L_t}
\right)=
2\exp\left[
-\log\left(
\frac{2\pi^2\,2^j(j+1)^2}{3\alpha}
\right)
\right] =
\frac{3\alpha}{\pi^2\,2^j(j+1)^2}.
\end{aligned}
\]
Therefore,
\[
\begin{aligned}
\mathbb P_{\infty}\left(T<\infty\right)
&\le
\mathbb P\left(
\exists t<\infty \text{ such that }
|S_t-\mathbb E S_t|>\varepsilon_{t}
\right) \\
&=
\mathbb P\left(\cup_{t=1}^{\infty}\{
|S_t-\mathbb E S_t|>\varepsilon_{t}\}
\right) \\
&=
\mathbb P\left(\cup_{j=0}^{\infty}\cup_{2^j\leq t< 2^{j+1}}\{
|S_t-\mathbb E S_t|>\varepsilon_{t}\}
\right) \\
&\le
\sum_{j=0}^\infty
2^j
\max_{2^j\le t<2^{j+1}}
\mathbb P\left(
|S_t-\mathbb E S_t|>\varepsilon_{t}
\right) \\
&\le
\sum_{j=0}^\infty
2^j
\frac{3\alpha}{\pi^2\,2^j(j+1)^2} \\
&=
\frac{3\alpha}{\pi^2}
\sum_{j=0}^\infty \frac{1}{(j+1)^2} \\
&=
\frac{3\alpha}{\pi^2}\cdot \frac{\pi^2}{6}
=
\frac{\alpha}{2}
<
\alpha.
\end{aligned}
\]
Note that for all $t\geq 1$,
\[
L_t
=
\sum_{i=1}^t
b_{i,t}^2(4-3b_{i,t})^2
\leq
16\sum_{i=1}^t b_{i,t}^2
\leq
\frac{16}{W_t}.
\]
Since $0<\lambda_t<1$, and $1-\lambda_t=o(t^{-\beta})$ for some $\beta>1$, for all sufficiently large $t$, and for  $\left\lfloor \frac{t}{2}\right\rfloor\leq i\leq t$, we have
\[
a_{i,t}
\geq
\prod_{s=i}^{t-1}(1-s^{-\beta})>0.
\]
Therefor, for all sufficiently large $t$, and  $\left\lfloor \frac{t}{2}\right\rfloor\leq i\leq t$, we have
\begin{align*}
    \log a_{i,t}
&\geq
\sum_{s=i}^{t-1}\log(1-s^{-\beta})
\geq
-2\sum_{s=i}^{t-1}s^{-\beta},\\
&\geq-2t\left(\frac{t}{4}\right)^{-\beta}
\geq
-4^{\beta+1} t^{1-\beta}\geq -4^{\beta+1} (\beta\geq1)
\end{align*}
where the second inequality is derived from
\(
\log(1-x)\geq -2x\) for
\(0\leq x\leq \frac12.
\)
Therefore, 
\(
a_{i,t}\geq e^{-4^{\beta+1}}.
\)
Consequently, for all sufficiently large $t$, we have
\[
W_t
=
\sum_{i=1}^t a_{i,t}
\geq
\sum_{i=\left\lfloor \frac{t}{2}\right\rfloor}^{t} a_{i,t}
\geq
(t-\left\lfloor \frac{t}{2}\right\rfloor)e^{-4^{\beta+1}}
\geq
c t
\]
for some constant $c>0$. Therefore, \(\varepsilon_t=O(\sqrt{t^{-1}\log t}).\)

\end{proof}

\subsection{Experiments}\label{exprappd}

This section contains additional experiments which are referenced in the main text. Specifically, Section~\ref{sec:supp:addfixedexp} provides additional simulation study results for the fixed threshold approach. Section~\ref{adpthr} provides the mathematical details for the adaptive threshold approach, along with additional simulation study results. Section~\ref{supp:sec:avruntime} provides the results comparing the runtime of the proposed OKAFF method to competitor methods.
Section~\ref{rffnumcom} shows the performance of OKAFF for different numbers of random Fourier features.
Section~\ref{sec:supp:monstat} provides additional plots showing the behaviour of the OKAFF statistic $S_t$ under different conditions.

\subsubsection{Additional simulation with fixed thresholds}

\label{sec:supp:addfixedexp}

\paragraph*{Simulation settings.} This subsection provides the detailed simulation settings for the \(20\)-dimensional experiments
reported in Section \ref{fixthr}, while the univariate setting has already been described in Section \ref{fixthr}. It specifies the pre-change distribution and the post-change alternatives used to evaluate detection performance in Section \ref{fixthr}.

Let \((X_t)_{t\geq1}\) be a sequence of independent \(\mathbb{R}^d\)-valued observations with common distribution
\[
p=\mathcal{N}(\boldsymbol{0},I).
\]
After the change, \(X_t\sim q\neq p\). For \(d=20\), we consider the
following alternatives:

\begin{itemize}
    \item \textbf{Dense mean shift:}
    \[
    \begin{aligned}
    q=\mathcal{N}(\delta\mathbf{1},I),\quad
    \delta\in\{1,2,3,4\}.
    \end{aligned}
    \]
    Here, \(\mathbf{1}=(1,\ldots,1)^\top\in\mathbb{R}^d\). 

    \item \textbf{Sparse mean shift:}
    \[
    q=\mathcal{N}\!\left(
    \left(
    \underbrace{\delta,\ldots,\delta}_{5},
    \underbrace{0,\ldots,0}_{d-5}
    \right)^\top,I
    \right),\qquad
    \delta\in\{1,2,3,4\}.
    \]

    \item \textbf{Dense covariance shift:}
    \[
    q=\mathcal{N}(\boldsymbol{0},\delta I),
    \qquad \delta\in\{0.3,0.5,2,4\}.
    \]

    \item \textbf{Sparse covariance shift:}
    \[
    q=\mathcal{N}\!\left(
    \boldsymbol{0},
    \operatorname{diag}\!\left(
    \underbrace{\delta,\ldots,\delta}_{5},
    \underbrace{1,\ldots,1}_{d-5}
    \right)
    \right),\qquad \delta\in\{0.3,0.5,2,4\}.
    \]

    \item \textbf{Gaussian mixture:}
    \[
    q=(1-\delta)\mathcal{N}(\boldsymbol{0},4I)
       +\delta\,\mathcal{N}(\boldsymbol{0},I),\qquad
    \delta\in\{0.3,0.5,0.7,0.9\}.
    \]

     \item \textbf{Uniform distribution:}
    \[
    q=\operatorname{Uniform}([-\delta,\delta]^d),
    \qquad \delta\in\{1,2,3,4\},
    \]
    is the \(d\)-dimensional Uniform distribution with mutually independent coordinates, each following an \(\operatorname{Uniform}([-\delta,\delta])\) distribution.
\end{itemize}

\begin{itemize}
    \item \textbf{Laplace distribution:}
    \[
    q=\operatorname{Laplace}(0,\delta)^{d},
    \qquad \delta\in\{0.3,0.5,1,2\},
    \]
    is the \(d\)-dimensional Laplace distribution with mutually independent coordinates, each following an \(\operatorname{Laplace}(0,\delta)\) distribution.
\end{itemize}

\begin{figure}[!htbp]
\centering

\begin{subfigure}[t]{0.96\linewidth}
  \centering
  \includegraphics[
    width=\linewidth
  ]{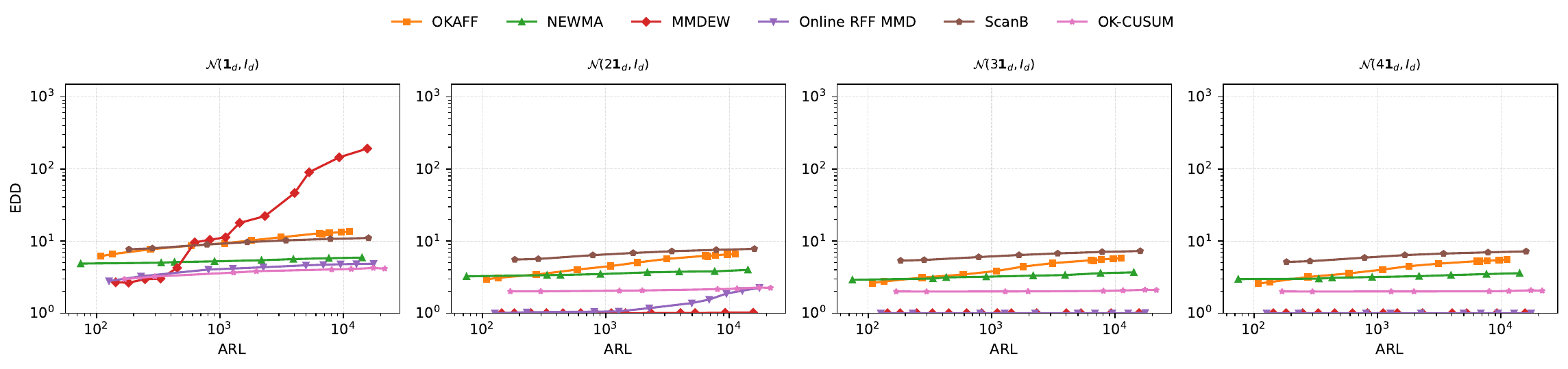}
  \caption{Dense mean shift in \(d=20\):
  \(q=\mathcal{N}(\delta\mathbf{1},I)\), where
  \(\delta\in\{1,2,3,4\}\).}
\end{subfigure}

\medskip

\begin{subfigure}[t]{0.96\linewidth}
  \centering
  \includegraphics[
    width=\linewidth
  ]{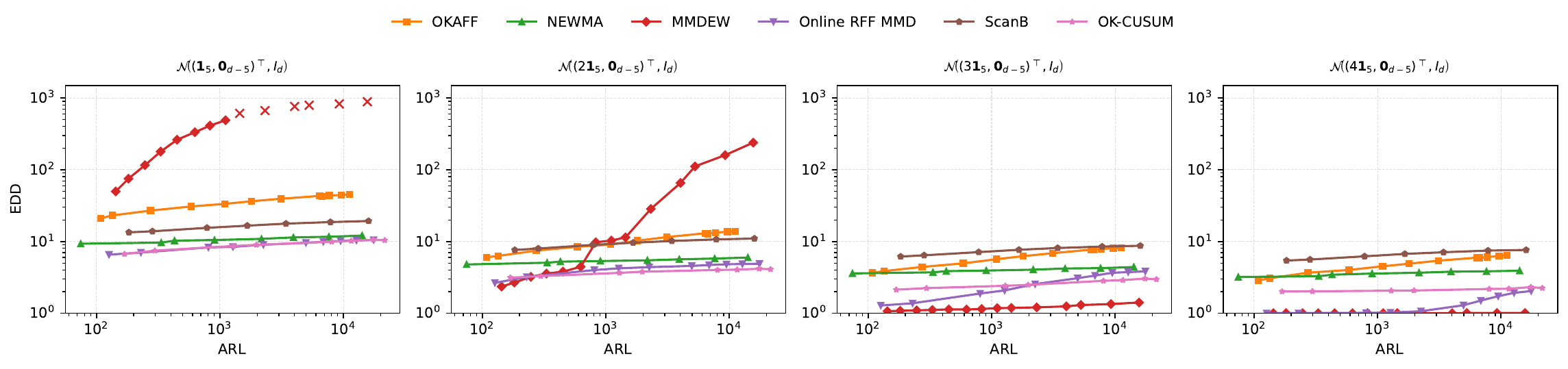}
  \caption{Sparse mean shift in \(d=20\): the first five coordinates
  have mean \(\delta\in\{1,2,3,4\}\), while the remaining coordinates
  have mean zero.}
\end{subfigure}

\caption{EDD versus ARL for the dense and sparse mean-shift settings in
\(d=20\), when the change occurs at \(k=100\).}
\label{fig:edd-arl-mean-shifts-d20}

\end{figure}

\begin{figure}[!htbp]
\centering

\begin{subfigure}[t]{0.96\linewidth}
  \centering
  \includegraphics[
    width=\linewidth
  ]{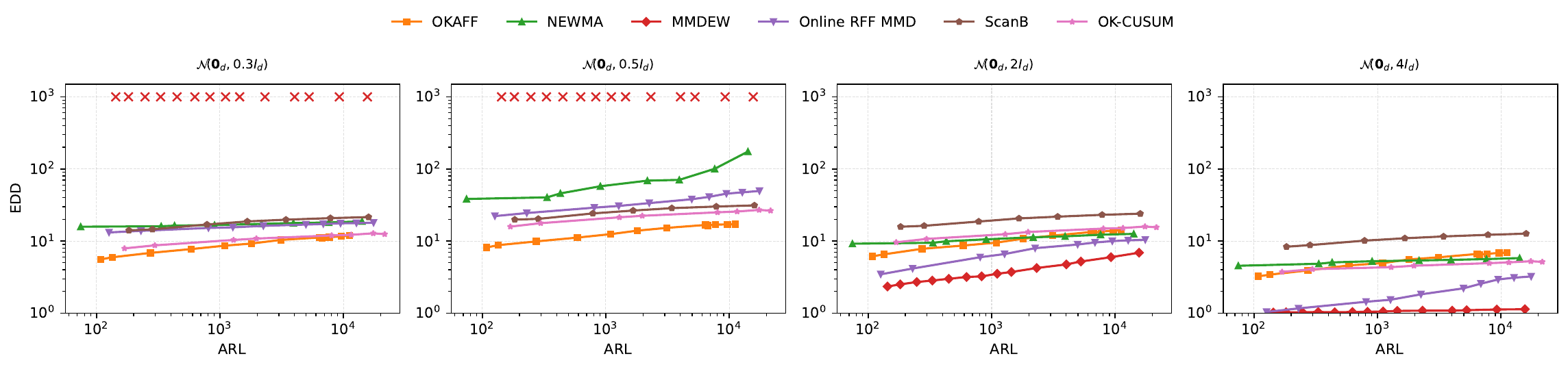}
  \caption{Dense covariance shift in \(d=20\):
  \(q=\mathcal{N}(\boldsymbol{0},\delta I)\), where
  \(\delta\in\{0.3,0.5,2,4\}\).}
\end{subfigure}

\medskip

\begin{subfigure}[t]{0.96\linewidth}
  \centering
  \includegraphics[
    width=\linewidth
  ]{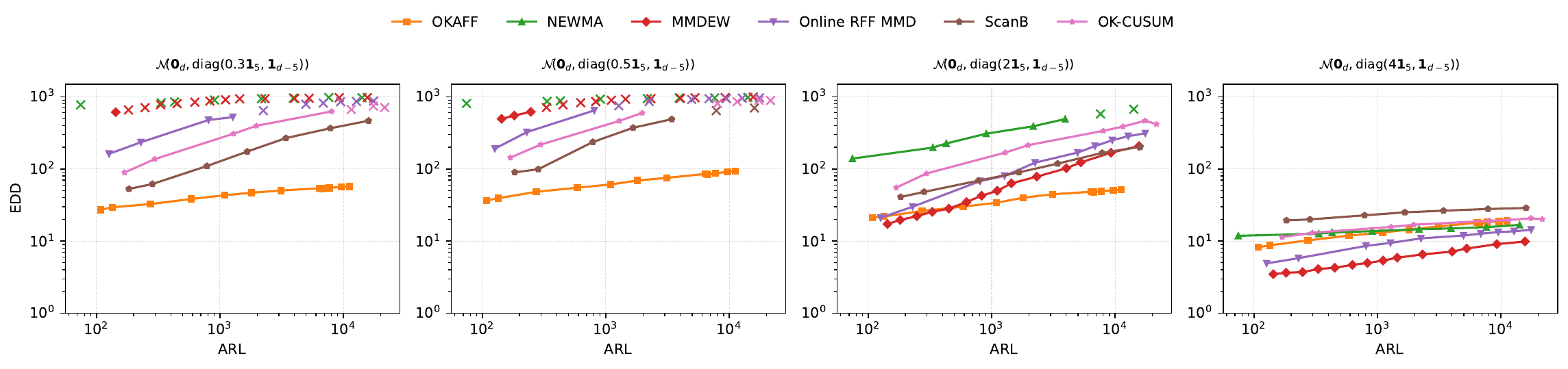}
  \caption{Sparse covariance shift in \(d=20\): the first five marginal
  variances are \(\delta\in\{0.3,0.5,2,4\}\), while the remaining
  marginal variances equal one.}
\end{subfigure}

\caption{EDD versus ARL for the dense and sparse covariance-shift
settings in \(d=20\), when the change occurs at \(k=100\).}
\label{fig:edd-arl-covariance-shifts-d20}

\end{figure}

\begin{figure}[!htbp]
\centering

\begin{subfigure}[t]{0.96\linewidth}
  \centering
  \includegraphics[
    width=\linewidth
  ]{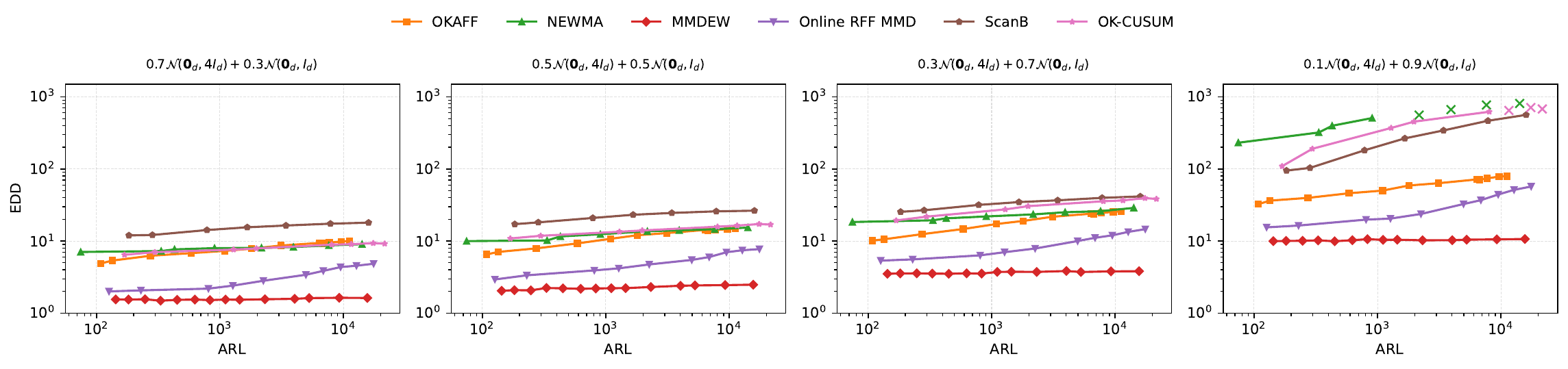}
  \caption{Gaussian-mixture setting in \(d=20\).}
\end{subfigure}

\medskip

\begin{subfigure}[t]{0.96\linewidth}
  \centering
  \includegraphics[
    width=\linewidth
  ]{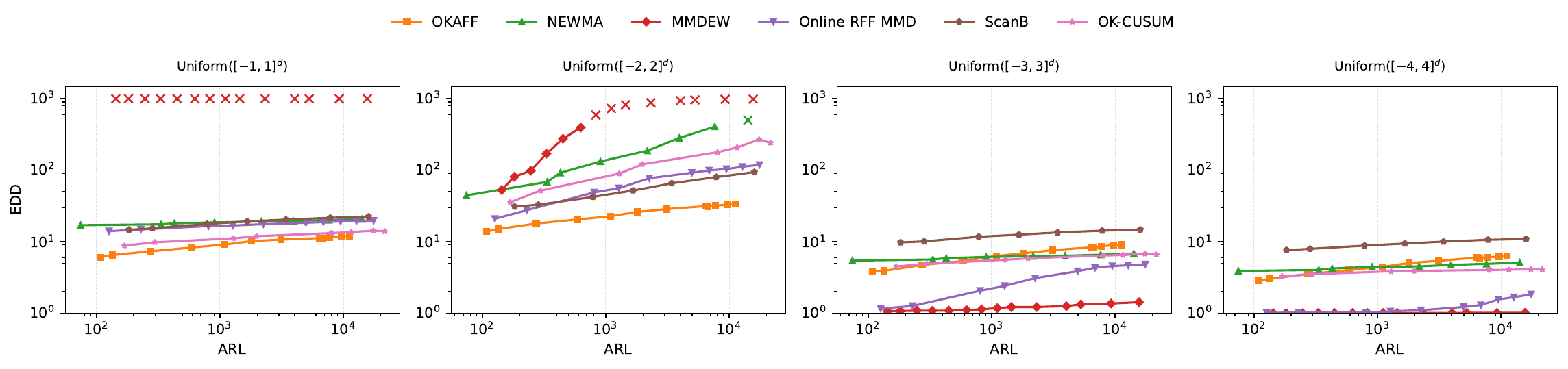}
  \caption{Uniform setting in \(d=20\).}
\end{subfigure}

\medskip

\begin{subfigure}[t]{0.96\linewidth}
  \centering
  \includegraphics[
    width=\linewidth
  ]{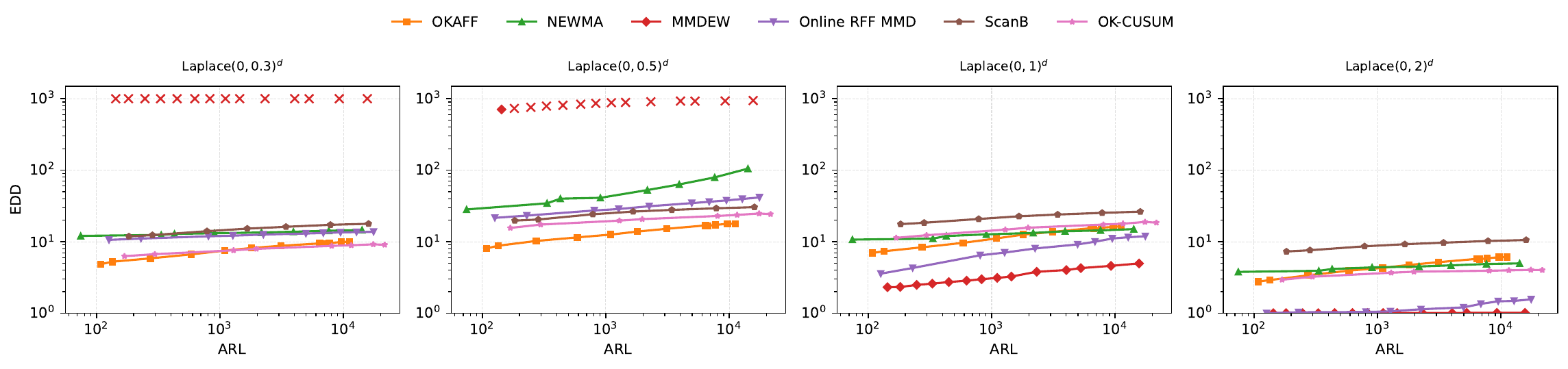}
  \caption{Laplace setting in \(d=20\).}
\end{subfigure}

\caption{EDD versus ARL for the Gaussian-mixture, uniform, and Laplace
settings in \(d=20\), when the change occurs at \(k=100\).}
\label{fig:edd-arl-distribution-shifts-d20}

\end{figure}

\begin{figure}[!htbp]
\centering

\begin{subfigure}[t]{0.96\linewidth}
  \centering
  \includegraphics[
    width=\linewidth
  ]{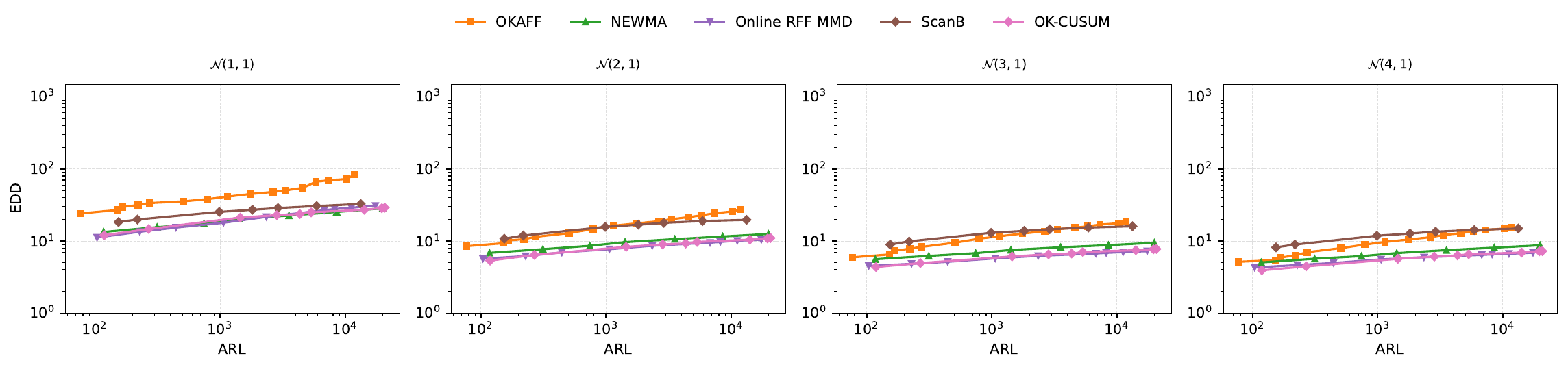}
  \caption{Univariate mean shift:
  \(q=\mathcal{N}(\delta,1)\), where
  \(\delta\in\{1,2,3,4\}\).}
\end{subfigure}

\caption{EDD versus ARL for the univariate mean-shift setting when the
change occurs at \(k=100\).}
\label{fig:edd-arl-mean-shifts-d1}

\end{figure}

\begin{figure}[!htbp]
\centering

\begin{subfigure}[t]{0.96\linewidth}
  \centering
  \includegraphics[
    width=\linewidth
  ]{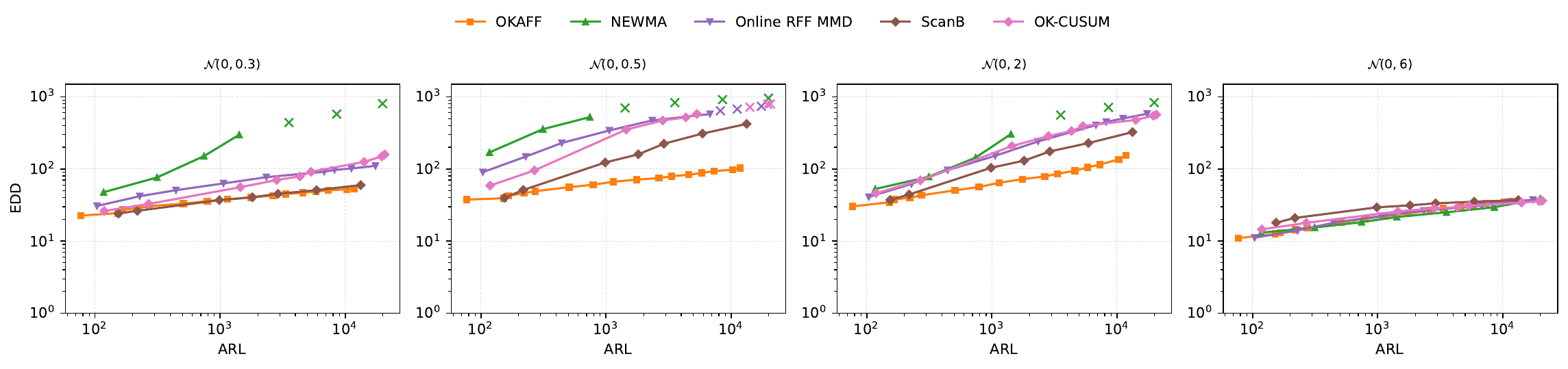}
  \caption{Univariate variance shift:
  \(q=\mathcal{N}(0,\delta)\), where
  \(\delta\in\{0.3,0.5,2,6\}\).}
\end{subfigure}

\caption{EDD versus ARL for the univariate variance-shift setting when
the change occurs at \(k=100\).}
\label{fig:edd-arl-covariance-shifts-d1}

\end{figure}

\begin{figure}[!htbp]
\centering

\begin{subfigure}[t]{0.96\linewidth}
  \centering
  \includegraphics[
    width=\linewidth
  ]{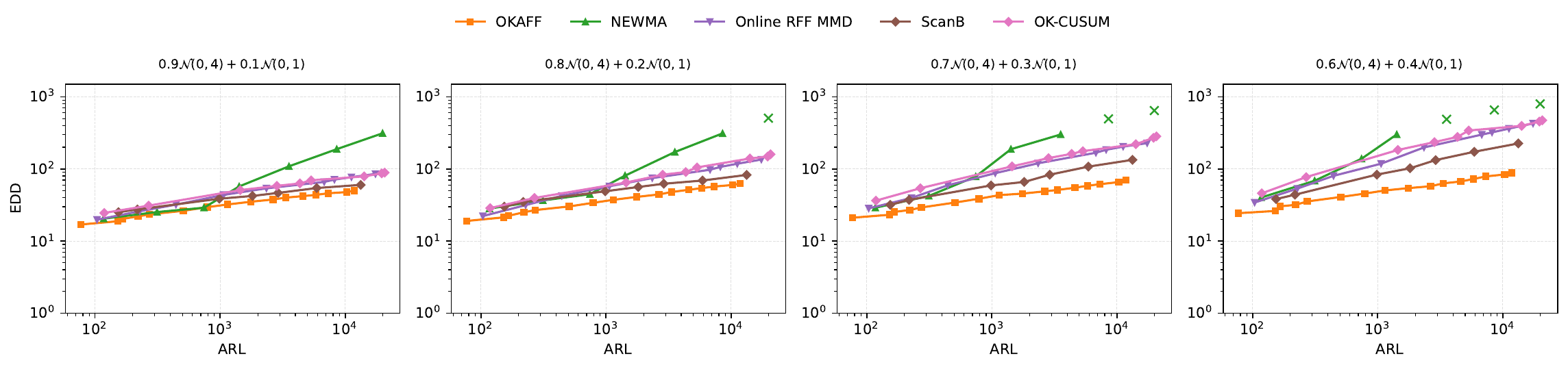}
  \caption{Gaussian-mixture setting for univariate data.}
\end{subfigure}

\medskip

\begin{subfigure}[t]{0.96\linewidth}
  \centering
  \includegraphics[
    width=\linewidth
  ]{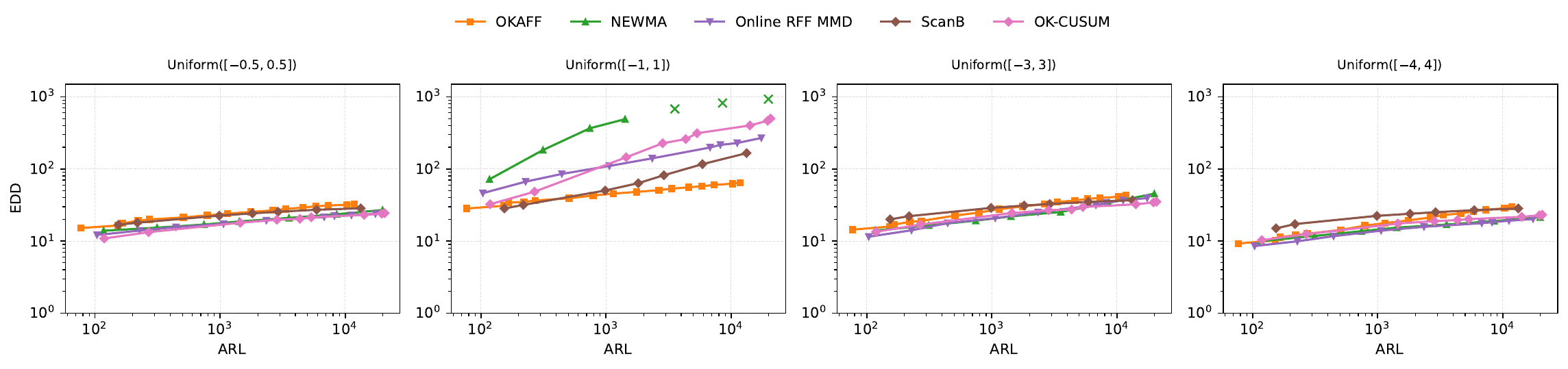}
  \caption{Uniform setting for univariate data.}
\end{subfigure}

\medskip

\begin{subfigure}[t]{0.96\linewidth}
  \centering
  \includegraphics[
    width=\linewidth
  ]{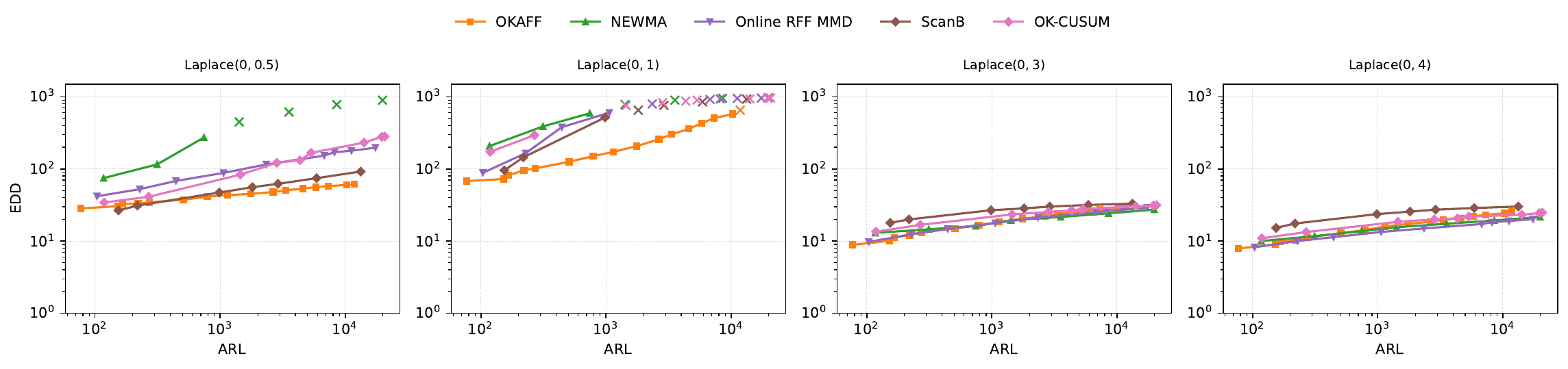}
  \caption{Laplace setting for univariate data.}
\end{subfigure}

\caption{EDD versus ARL for the univariate Gaussian-mixture, uniform,
and Laplace settings when the change occurs at \(k=100\).}
\label{fig:edd-arl-distribution-shifts-d1}

\end{figure}

Figures~\ref{fig:edd-arl-mean-shifts-d20}--\ref{fig:edd-arl-distribution-shifts-d1}
present the EDD-versus-ARL results for the \(d=20\) and univariate experiments, complementing those reported in Section~\ref{fixthr} for simulations with fixed thresholds. The results are grouped according to mean, covariance, and general distributional changes. In all experiments, the change occurs at \(k=100\). To specify the dimension, we use subscripts for the normal distributions in these figures. For example, we write $\mathcal{N}(\boldsymbol{0}_d, I_d)$ instead of $\mathcal{N}(\boldsymbol{0}, I)$, and $\mathcal{N}((\delta\boldsymbol{1}_{5}, \boldsymbol{0}_{d-5})^{\top},I_d)$ instead of $\mathcal{N}((\underbrace{\delta,\ldots,\delta}_{5}, \underbrace{0,\ldots,0}_{d-5})^{\top}, I)$.

\subsubsection{Simulations with adaptive thresholds} 

\label{adpthr}

In this section, we describe the adaptive threshold construction proposed by \cite{keriven2020newma}, which allows the threshold to evolve continuously during online monitoring.  We conduct simulations in which the adaptive thresholds are calibrated using pre-change observations to achieve the same prescribed in-control ARL across all methods. As in the fixed-threshold experiments, this common ARL calibration ensures a fair and meaningful comparison of detection performance.

Let \(S_t\) denote the monitoring statistic. OKAFF uses a two-sided threshold based on exponentially weighted first and
second moments:
\begin{align*}
u_{1,t}=(1-\rho)u_{1,t-1}+\rho S_t,\qquad
u_{2,t}=(1-\rho)u_{2,t-1}+\rho S_t^2,\qquad
\sigma_t=\sqrt{u_{2,t}-u_{1,t}^2},
\end{align*}
with the initial value $u_{1,0}=u$, $u_{2,0}=\sigma^2+u^2$, here $u$, $\sigma$ are determined by \eqref{u}-\eqref{zetagauss}.
It signals a change whenever
\[
S_t \not \in 
\bigl[u_{1,t}-L\sigma_t,\,
      u_{1,t}+L\sigma_t\bigr] = 
      \mathcal{R}_{L, \rho, t}', 
\]
where the adaptive rejection region is defined as 
\begin{equation}
    \mathcal{R}_{L, \rho, t} 
    :=
[0,u_{1,t}-L\sigma_t) \cup (u_{1,t}+L \sigma_t,\infty).
\label{eqn:RLrhot}
\end{equation}
NEWMA, MMDEW, and Online RFF MMD use a one-sided threshold
based on exponentially weighted moments of \(S_t^2\):
\begin{align*}
\mu_{2,t}=(1-\rho)\mu_{2,t-1}+\rho S_t^2,\qquad
\mu_{4,t}=(1-\rho)\mu_{4,t-1}+\rho S_t^4,\qquad
\sigma_t=
\sqrt{\mu_{4,t}-\mu_{2,t}^{2}},
\end{align*}
with the initial value $u_{2,0}=u_{4,0}=0$. They signal a change whenever
\[
S_t^2>\mu_{2,t}+L\sigma_t.
\]

For all methods, we set \(\rho=0.1\) during the burn-in period. This enables the threshold to be learnt quickly during the burn-in period.
After the burn-in period ends, OKAFF, Online RFF MMD, and MMDEW use \(\rho=0.01\) when monitoring for a change. For NEWMA, \(\rho\) is tied to the slow update rate \(\lambda\), as specified by its authors, during the monitoring period.
Select the control limit \(L=\Phi^{-1}(\theta)\), where \(\Phi^{-1}(\theta)\) denotes the \(\theta\)-quantile of the standard normal distribution, to achieve a prescribed in-control ARL.

Monitoring begins after
50 burn-in observations for OKAFF, Online RFF MMD, MMDEW, and NEWMA. For each method, the first 50 burn-in
observations constitute the reference sample used to select the
Gaussian-kernel bandwidth by the median heuristic. All remaining parameters
are identical to those used in the fixed-threshold experiments. ScanB and OK-CUSUM require a relatively large number of pre-change observations to construct their reference blocks. We therefore exclude these methods from the adaptive-threshold and real-data experiments, where only limited pre-change data are assumed to be available. While such a requirement may be reasonable in applications such as manufacturing process control and industrial quality monitoring, where extensive in-control data can often be collected, it may be unrealistic in other settings, such as econometrics and environmental monitoring, where a large pre-change reference sample is rarely available.

We consider both univariate and 20-dimensional settings. For each method, \(L\) is calibrated using 500 Monte Carlo replications so that the estimated in-control ARL satisfies the prescribed constraint. In both settings, the change occurs at \(k=100\) and is followed by at most 1000 sequential post-change observations. The EDD is estimated from 500 Monte Carlo replications as the average detection delay among successful detections. 

Recall that the average run length  under $H_0$ is defined by 
\[
\operatorname{ARL}:=\mathbb{E}_{\infty}[T].
\]
Conditional on no false alarm before the change, (conditional) expected detection delay is defined by 
\[
\operatorname{EDD}:=\mathbb{E}_k[T-k|T>k].
\]
The performance metric reported in Table~\ref{tab:d1-mid-arl-edd-sff}--Table~\ref{tab:d20-mid-arl-edd-sff} are \(\operatorname{EDD}\), False alarm  (proportion of sequences with an alarm raised before the true
changepoint), Success (the proportion of successful detections, with no false alarm and the changepoint is detected within $1000$ post-change observations), and Failure (failure to detect the
change within 1000 post-change observations).
\vspace{0.2cm}

\noindent\textbf{Univariate setting.}\enspace
The pre-change distribution is standard normal. The univariate post-change
distributions are \(q=\mathcal{N}(\delta,1)\), \(q=\mathcal{N}(0,\delta)\), \(q=\operatorname{Uniform}([-\delta,\delta])\), \(q=\operatorname{Laplace}(0,\delta)\),  \(q= \operatorname{Exp}(\delta)\).
For each method, \(L\) is calibrated so that the estimated in-control ARL is approximately equal to the target value of 500. The resulting ARL estimates are reported below, with standard errors in parentheses:
\begin{align*}
\text{OKAFF}&=539.90\;(\mathrm{se.}=13.49),\\
\text{NEWMA}&=502.43\;(\mathrm{se.}=16.42),\\
\text{Online RFF MMD}&=469.82\;(\mathrm{se.}=23.12),\\
\text{MMDEW}&=513.56\;(\mathrm{se.}=52.31).
\end{align*}
\noindent\textbf{Multivariate setting.}\enspace
The pre-change distribution is the \(d\)-dimensional standard normal
distribution with \(d=20\). The \(d\)-dimensional post-change distributions are
\(q=\mathcal{N}(\delta\mathbf{1},I)\),
\(q=\mathcal{N}(\boldsymbol{0},\delta I)\),
\(q=\operatorname{Uniform}([-\delta,\delta]^d)\) is a \(d\)-dimensional uniform distribution, \(q=\operatorname{Laplace}(0,\delta)^d\) is a \(d\)-dimensional Laplace distribution, 
and \(q=\operatorname{Exp}(\delta)^d\) is a \(d\)-dimensional exponential distribution. The coordinates of the uniform,
exponential, and Laplace distributions are mutually independent, each following an \(\operatorname{Uniform}([-\delta,\delta])\)  distribution,  \(\operatorname{Laplace}(0,\delta)\)  distribution, and \(\operatorname{Exp}(\delta)\) distribution, respectively.
For each method, \(L\) is calibrated so that the estimated in-control ARL is approximately equal to the target value of 500. The resulting ARL estimates are reported below, with standard errors in parentheses:
\begin{align*}
\text{OKAFF}&=517.32\;(\mathrm{se.}=8.85),\\
\text{NEWMA}&=518.19\;(\mathrm{se.}=18.80),\\
\text{Online RFF MMD}&=473.56\;(\mathrm{se.}=13.81),\\
\text{MMDEW}&=532.96\;(\mathrm{se.}=20.72).
\end{align*}
The results are reported in Tables~\ref{tab:d1-mid-arl-edd-sff}
and~\ref{tab:d20-mid-arl-edd-sff}. Success, false-alarm, and failure outcomes
are reported as proportions. Across most post-change alternatives in both
the univariate and 20-dimensional settings, the proposed method, OKAFF,
provides the strongest overall balance between EDD and detection success
while maintaining low false-alarm and failure proportions. Table~\ref{tab:d1-mid-arl-edd-sff} shows that OKAFF provides the most consistent
performance in the univariate experiments. Under weak changes, it maintains
high success proportions and low false-alarm proportions. By contrast, Online RFF MMD records low successful detections, high false alarm and failure rate in several weak-change
settings. Under stronger changes, Online RFF MMD or MMDEW can achieve a smaller EDD among
successful detections, but this advantage is often accompanied by a low proportion of successful detections. EDD should therefore be interpreted jointly with the
success.

The 20-dimensional results in Table~\ref{tab:d20-mid-arl-edd-sff} exhibit a
similar pattern. OKAFF remains reliable under weak  changes. Under strong alternatives, Online RFF MMD, MMDEW, and OKAFF
generally achieve high success proportions and short detection delays, with MMDEW often attaining the smallest EDD. Under weak change signals, however, MMDEW exhibits substantially larger detection delays and may fail to detect the change within the simulation horizon. Overall, adaptive calibration reveals a clear trade-off between detection
speed and reliability. OKAFF is the most robust method across weak and
moderate changes, whereas Online RFF MMD and MMDEW can respond more quickly
to strong changes. Because EDD is computed only from successful detections, it
can appear favorable when the success proportion is small; all four metrics
should therefore be considered together.

\clearpage

\begingroup
\scriptsize
\setlength{\tabcolsep}{2.5pt}
\renewcommand{\arraystretch}{1.08}
\begin{longtable}{@{\hspace{0.4em}}p{0.240\textwidth}p{0.11\textwidth}*{4}{>{\centering\arraybackslash}p{0.145\textwidth}}@{\hspace{0.4em}}}
\caption{Performance comparison of four methodsunder univariate
post-change alternatives, with the in-control ARL constrained to
\(500\pm40\). The symbol ``--'' indicates that the detection procedure does not successfully detect the change. For EDD, bold entries identify the smallest EDD among methods with Success \(\ge 0.5\). If the absolute smallest EDD is attained by a method with Success \(<0.5\), that EDD is underlined instead. For the remaining metrics, bold entries indicate the highest Success and the lowest False alarm and Failure proportions.}
\label{tab:d1-mid-arl-edd-sff}\\
\toprule
& Metric & OKAFF & NEWMA & Online RFF MMD & MMDEW\\
\midrule
\endfirsthead

\multicolumn{6}{c}{\tablename~\thetable{} continued}\\
\toprule
& Metric & OKAFF & NEWMA & Online RFF MMD & MMDEW\\
\midrule
\endhead

\midrule
\multicolumn{6}{r}{Continued on next page}\\
\endfoot

\bottomrule
\endlastfoot

\multicolumn{6}{@{}l}{\bfseries\boldmath
\(\mathcal{N}(0,1)\to\mathcal{N}(\delta,1)\)}\\
\addlinespace[2pt]

\textbf{\(\delta=3\)} & EDD
  & \makebox[5.0em][l]{\makebox[3.4em][r]{8.69}}
  & \makebox[5.0em][l]{\makebox[3.4em][r]{355.24}}
  & {\bfseries \makebox[5.0em][l]{\makebox[3.4em][r]{5.89}}}
  & \makebox[5.0em][l]{\makebox[3.4em][r]{378.65}}\\
& Success
  & {\bfseries \makebox[5.0em][l]{\makebox[3.4em][r]{0.916}}}
  & \makebox[5.0em][l]{\makebox[3.4em][r]{0.666}}
  & \makebox[5.0em][l]{\makebox[3.4em][r]{0.682}}
  & \makebox[5.0em][l]{\makebox[3.4em][r]{0.558}}\\
& False alarm
  & {\bfseries \makebox[5.0em][l]{\makebox[3.4em][r]{0.084}}}
  & \makebox[5.0em][l]{\makebox[3.4em][r]{0.318}}
  & \makebox[5.0em][l]{\makebox[3.4em][r]{0.312}}
  & \makebox[5.0em][l]{\makebox[3.4em][r]{0.380}}\\
& Failure
  & {\bfseries \makebox[5.0em][l]{\makebox[3.4em][r]{0.000}}}
  & \makebox[5.0em][l]{\makebox[3.4em][r]{0.016}}
  & \makebox[5.0em][l]{\makebox[3.4em][r]{0.006}}
  & \makebox[5.0em][l]{\makebox[3.4em][r]{0.062}}\\

\addlinespace[3pt]

\textbf{\(\delta=-4\)} & EDD
  & \makebox[5.0em][l]{\makebox[3.4em][r]{7.03}}
  & \makebox[5.0em][l]{\makebox[3.4em][r]{323.55}}
  & {\bfseries \makebox[5.0em][l]{\makebox[3.4em][r]{5.34}}}
  & \makebox[5.0em][l]{\makebox[3.4em][r]{385.32}}\\
& Success
  & {\bfseries \makebox[5.0em][l]{\makebox[3.4em][r]{0.926}}}
  & \makebox[5.0em][l]{\makebox[3.4em][r]{0.674}}
  & \makebox[5.0em][l]{\makebox[3.4em][r]{0.710}}
  & \makebox[5.0em][l]{\makebox[3.4em][r]{0.546}}\\
& False alarm
  & {\bfseries \makebox[5.0em][l]{\makebox[3.4em][r]{0.074}}}
  & \makebox[5.0em][l]{\makebox[3.4em][r]{0.322}}
  & \makebox[5.0em][l]{\makebox[3.4em][r]{0.290}}
  & \makebox[5.0em][l]{\makebox[3.4em][r]{0.348}}\\
& Failure
  & {\bfseries \makebox[5.0em][l]{\makebox[3.4em][r]{0.000}}}
  & \makebox[5.0em][l]{\makebox[3.4em][r]{0.004}}
  & {\bfseries \makebox[5.0em][l]{\makebox[3.4em][r]{0.000}}}
  & \makebox[5.0em][l]{\makebox[3.4em][r]{0.106}}\\

\addlinespace[4pt]
\multicolumn{6}{@{}l}{\bfseries\boldmath
\(\mathcal{N}(0,1)\to\mathcal{N}(0,\delta)\)}\\
\addlinespace[2pt]

\(\delta=0.1\) & EDD
  & {\bfseries \makebox[5.0em][l]{\makebox[3.4em][r]{25.63}}}
  & \makebox[5.0em][l]{\makebox[3.4em][r]{805.05}}
  & \makebox[5.0em][l]{\makebox[3.4em][r]{\underline{18.67}}}
  & \makebox[5.0em][l]{\makebox[3.4em][r]{351.26}}\\
& Success
  & {\bfseries \makebox[5.0em][l]{\makebox[3.4em][r]{0.898}}}
  & \makebox[5.0em][l]{\makebox[3.4em][r]{0.556}}
  & \makebox[5.0em][l]{\makebox[3.4em][r]{\underline{0.086}}}
  & \makebox[5.0em][l]{\makebox[3.4em][r]{0.608}}\\
& False alarm
  & {\bfseries \makebox[5.0em][l]{\makebox[3.4em][r]{0.102}}}
  & \makebox[5.0em][l]{\makebox[3.4em][r]{0.358}}
  & \makebox[5.0em][l]{\makebox[3.4em][r]{\underline{0.310}}}
  & \makebox[5.0em][l]{\makebox[3.4em][r]{0.392}}\\
& Failure
  & {\bfseries \makebox[5.0em][l]{\makebox[3.4em][r]{0.000}}}
  & \makebox[5.0em][l]{\makebox[3.4em][r]{0.086}}
  & \makebox[5.0em][l]{\makebox[3.4em][r]{\underline{0.604}}}
  & {\bfseries \makebox[5.0em][l]{\makebox[3.4em][r]{0.000}}}\\

\addlinespace[3pt]

\(\delta=6\) & EDD
  & {\bfseries \makebox[5.0em][l]{\makebox[3.4em][r]{21.08}}}
  & \makebox[5.0em][l]{\makebox[3.4em][r]{661.90}}
  & \makebox[5.0em][l]{\makebox[3.4em][r]{\underline{18.24}}}
  & \makebox[5.0em][l]{\makebox[3.4em][r]{26.43}}\\
& Success
  & {\bfseries \makebox[5.0em][l]{\makebox[3.4em][r]{0.906}}}
  & \makebox[5.0em][l]{\makebox[3.4em][r]{0.688}}
  & \makebox[5.0em][l]{\makebox[3.4em][r]{\underline{0.346}}}
  & \makebox[5.0em][l]{\makebox[3.4em][r]{0.140}}\\
& False alarm
  & {\bfseries \makebox[5.0em][l]{\makebox[3.4em][r]{0.086}}}
  & \makebox[5.0em][l]{\makebox[3.4em][r]{0.308}}
  & \makebox[5.0em][l]{\makebox[3.4em][r]{\underline{0.322}}}
  & \makebox[5.0em][l]{\makebox[3.4em][r]{0.378}}\\
& Failure
  & \makebox[5.0em][l]{\makebox[3.4em][r]{0.008}}
  & {\bfseries \makebox[5.0em][l]{\makebox[3.4em][r]{0.004}}}
  & \makebox[5.0em][l]{\makebox[3.4em][r]{\underline{0.332}}}
  & \makebox[5.0em][l]{\makebox[3.4em][r]{0.482}}\\

\addlinespace[4pt]
\multicolumn{6}{@{}l}{\bfseries\boldmath
\(\mathcal{N}(0,1)\to\operatorname{Uniform}([-\delta,\delta])\)}\\
\addlinespace[2pt]

\(\delta=0.5\) & EDD
  & {\bfseries \makebox[5.0em][l]{\makebox[3.4em][r]{24.61}}}
  & \makebox[5.0em][l]{\makebox[3.4em][r]{814.76}}
  & \makebox[5.0em][l]{\makebox[3.4em][r]{\underline{20.21}}}
  & \makebox[5.0em][l]{\makebox[3.4em][r]{740.94}}\\
& Success
  & {\bfseries \makebox[5.0em][l]{\makebox[3.4em][r]{0.912}}}
  & \makebox[5.0em][l]{\makebox[3.4em][r]{0.542}}
  & \makebox[5.0em][l]{\makebox[3.4em][r]{\underline{0.048}}}
  & \makebox[5.0em][l]{\makebox[3.4em][r]{0.598}}\\
& False alarm
  & {\bfseries \makebox[5.0em][l]{\makebox[3.4em][r]{0.088}}}
  & \makebox[5.0em][l]{\makebox[3.4em][r]{0.372}}
  & \makebox[5.0em][l]{\makebox[3.4em][r]{\underline{0.346}}}
  & \makebox[5.0em][l]{\makebox[3.4em][r]{0.366}}\\
& Failure
  & {\bfseries \makebox[5.0em][l]{\makebox[3.4em][r]{0.000}}}
  & \makebox[5.0em][l]{\makebox[3.4em][r]{0.086}}
  & \makebox[5.0em][l]{\makebox[3.4em][r]{\underline{0.606}}}
  & \makebox[5.0em][l]{\makebox[3.4em][r]{0.036}}\\

\addlinespace[3pt]

\(\delta=5\) & EDD
  & \makebox[5.0em][l]{\makebox[3.4em][r]{12.07}}
  & \makebox[5.0em][l]{\makebox[3.4em][r]{645.19}}
  & {\bfseries \makebox[5.0em][l]{\makebox[3.4em][r]{11.48}}}
  & \makebox[5.0em][l]{\makebox[3.4em][r]{176.25}}\\
& Success
  & {\bfseries \makebox[5.0em][l]{\makebox[3.4em][r]{0.920}}}
  & \makebox[5.0em][l]{\makebox[3.4em][r]{0.634}}
  & \makebox[5.0em][l]{\makebox[3.4em][r]{0.570}}
  & \makebox[5.0em][l]{\makebox[3.4em][r]{0.162}}\\
& False alarm
  & {\bfseries \makebox[5.0em][l]{\makebox[3.4em][r]{0.076}}}
  & \makebox[5.0em][l]{\makebox[3.4em][r]{0.358}}
  & \makebox[5.0em][l]{\makebox[3.4em][r]{0.286}}
  & \makebox[5.0em][l]{\makebox[3.4em][r]{0.344}}\\
& Failure
  & {\bfseries \makebox[5.0em][l]{\makebox[3.4em][r]{0.004}}}
  & \makebox[5.0em][l]{\makebox[3.4em][r]{0.008}}
  & \makebox[5.0em][l]{\makebox[3.4em][r]{0.144}}
  & \makebox[5.0em][l]{\makebox[3.4em][r]{0.494}}\\

\addlinespace[4pt]
\multicolumn{6}{@{}l}{\bfseries\boldmath
\(\mathcal{N}(0,1)\to\operatorname{Laplace}(0,\delta)\)}\\
\addlinespace[2pt]

\(\delta=0.3\) & EDD
  & {\bfseries \makebox[5.0em][l]{\makebox[3.4em][r]{29.94}}}
  & \makebox[5.0em][l]{\makebox[3.4em][r]{782.20}}
  & \makebox[5.0em][l]{\makebox[3.4em][r]{\underline{22.41}}}
  & \makebox[5.0em][l]{\makebox[3.4em][r]{174.51}}\\
& Success
  & {\bfseries \makebox[5.0em][l]{\makebox[3.4em][r]{0.940}}}
  & \makebox[5.0em][l]{\makebox[3.4em][r]{0.626}}
  & \makebox[5.0em][l]{\makebox[3.4em][r]{\underline{0.068}}}
  & \makebox[5.0em][l]{\makebox[3.4em][r]{0.678}}\\
& False alarm
  & {\bfseries \makebox[5.0em][l]{\makebox[3.4em][r]{0.060}}}
  & \makebox[5.0em][l]{\makebox[3.4em][r]{0.328}}
  & \makebox[5.0em][l]{\makebox[3.4em][r]{\underline{0.326}}}
  & \makebox[5.0em][l]{\makebox[3.4em][r]{0.322}}\\
& Failure
  & {\bfseries \makebox[5.0em][l]{\makebox[3.4em][r]{0.000}}}
  & \makebox[5.0em][l]{\makebox[3.4em][r]{0.046}}
  & \makebox[5.0em][l]{\makebox[3.4em][r]{\underline{0.606}}}
  & {\bfseries \makebox[5.0em][l]{\makebox[3.4em][r]{0.000}}}\\

\addlinespace[3pt]

\(\delta=3\) & EDD
  & {\bfseries \makebox[5.0em][l]{\makebox[3.4em][r]{15.15}}}
  & \makebox[5.0em][l]{\makebox[3.4em][r]{701.90}}
  & \makebox[5.0em][l]{\makebox[3.4em][r]{13.80}}
  & \makebox[5.0em][l]{\makebox[3.4em][r]{\underline{2.60}}}\\
& Success
  & {\bfseries \makebox[5.0em][l]{\makebox[3.4em][r]{0.922}}}
  & \makebox[5.0em][l]{\makebox[3.4em][r]{0.656}}
  & \makebox[5.0em][l]{\makebox[3.4em][r]{0.384}}
  & \makebox[5.0em][l]{\makebox[3.4em][r]{\underline{0.144}}}\\
& False alarm
  & {\bfseries \makebox[5.0em][l]{\makebox[3.4em][r]{0.068}}}
  & \makebox[5.0em][l]{\makebox[3.4em][r]{0.342}}
  & \makebox[5.0em][l]{\makebox[3.4em][r]{0.304}}
  & \makebox[5.0em][l]{\makebox[3.4em][r]{\underline{0.362}}}\\
& Failure
  & \makebox[5.0em][l]{\makebox[3.4em][r]{0.010}}
  & {\bfseries \makebox[5.0em][l]{\makebox[3.4em][r]{0.002}}}
  & \makebox[5.0em][l]{\makebox[3.4em][r]{0.312}}
  & \makebox[5.0em][l]{\makebox[3.4em][r]{\underline{0.494}}}\\

\addlinespace[4pt]
\multicolumn{6}{@{}l}{\bfseries\boldmath
\(\mathcal{N}(0,1)\to\operatorname{Exp}(\delta)\)}\\
\addlinespace[2pt]

\(\delta=1\) & EDD
  & {\bfseries \makebox[5.0em][l]{\makebox[3.4em][r]{46.61}}}
  & \makebox[5.0em][l]{\makebox[3.4em][r]{689.09}}
  & \makebox[5.0em][l]{\makebox[3.4em][r]{\underline{16.95}}}
  & \makebox[5.0em][l]{\makebox[3.4em][r]{242.49}}\\
& Success
  & {\bfseries \makebox[5.0em][l]{\makebox[3.4em][r]{0.928}}}
  & \makebox[5.0em][l]{\makebox[3.4em][r]{0.650}}
  & \makebox[5.0em][l]{\makebox[3.4em][r]{\underline{0.222}}}
  & \makebox[5.0em][l]{\makebox[3.4em][r]{0.568}}\\
& False alarm
  & {\bfseries \makebox[5.0em][l]{\makebox[3.4em][r]{0.072}}}
  & \makebox[5.0em][l]{\makebox[3.4em][r]{0.348}}
  & \makebox[5.0em][l]{\makebox[3.4em][r]{\underline{0.304}}}
  & \makebox[5.0em][l]{\makebox[3.4em][r]{0.354}}\\
& Failure
  & {\bfseries \makebox[5.0em][l]{\makebox[3.4em][r]{0.000}}}
  & \makebox[5.0em][l]{\makebox[3.4em][r]{0.002}}
  & \makebox[5.0em][l]{\makebox[3.4em][r]{\underline{0.474}}}
  & \makebox[5.0em][l]{\makebox[3.4em][r]{0.078}}\\

\addlinespace[3pt]

\(\delta=8\) & EDD
  & {\bfseries \makebox[5.0em][l]{\makebox[3.4em][r]{8.80}}}
  & \makebox[5.0em][l]{\makebox[3.4em][r]{659.38}}
  & \makebox[5.0em][l]{\makebox[3.4em][r]{9.08}}
  & \makebox[5.0em][l]{\makebox[3.4em][r]{\underline{1.30}}}\\
& Success
  & {\bfseries \makebox[5.0em][l]{\makebox[3.4em][r]{0.916}}}
  & \makebox[5.0em][l]{\makebox[3.4em][r]{0.632}}
  & \makebox[5.0em][l]{\makebox[3.4em][r]{0.592}}
  & \makebox[5.0em][l]{\makebox[3.4em][r]{\underline{0.162}}}\\
& False alarm
  & {\bfseries \makebox[5.0em][l]{\makebox[3.4em][r]{0.084}}}
  & \makebox[5.0em][l]{\makebox[3.4em][r]{0.342}}
  & \makebox[5.0em][l]{\makebox[3.4em][r]{0.290}}
  & \makebox[5.0em][l]{\makebox[3.4em][r]{\underline{0.404}}}\\
& Failure
  & {\bfseries \makebox[5.0em][l]{\makebox[3.4em][r]{0.000}}}
  & \makebox[5.0em][l]{\makebox[3.4em][r]{0.026}}
  & \makebox[5.0em][l]{\makebox[3.4em][r]{0.118}}
  & \makebox[5.0em][l]{\makebox[3.4em][r]{\underline{0.434}}}\\

\end{longtable}
\endgroup

\begingroup
\scriptsize
\setlength{\tabcolsep}{2.5pt}
\renewcommand{\arraystretch}{0.95}
\begin{longtable}{@{}>{\hspace{0.4em}}p{0.265\textwidth}p{0.11\textwidth}*{4}{>{\centering\arraybackslash}p{0.145\textwidth}}@{}}
\caption{Performance comparison of four methods under 20-dimensional
post-change alternatives, with the in-control ARL constrained to
\(500\pm40\). The symbol ``--'' indicates that the detection procedure does not successfully detect the change. For EDD, bold entries identify the smallest EDD among methods with Success $\ge 0.5$. If the absolute smallest EDD is attained by a method with Success $<0.5$, that EDD is underlined instead. For the remaining metrics, bold entries indicate the highest Success and the lowest False alarm and Failure proportions.}
\label{tab:d20-mid-arl-edd-sff}\\
\toprule
& Metric & OKAFF & NEWMA & Online RFF MMD & MMDEW\\
\midrule
\endfirsthead
\multicolumn{6}{c}{\tablename~\thetable{} continued}\\
\toprule
& Metric & OKAFF & NEWMA & Online RFF MMD & MMDEW\\
\midrule
\endhead
\midrule
\multicolumn{6}{r}{Continued on next page}\\
\endfoot
\bottomrule
\endlastfoot

\multicolumn{6}{@{}l}{\hspace{0.4em}\bfseries\boldmath
\(\mathcal{N}(\boldsymbol{0},I)
\to\mathcal{N}(\delta\boldsymbol{1},I)\)}\\
\addlinespace[2pt]

\(\delta=1\) & EDD
& \makebox[5.0em][l]{\makebox[3.4em][r]{9.11}}
& \makebox[5.0em][l]{\makebox[3.4em][r]{764.28}}
& {\bfseries \makebox[5.0em][l]{\makebox[3.4em][r]{4.07}}}
& \makebox[5.0em][l]{\makebox[3.4em][r]{179.08}}\\
& Success
& {\bfseries \makebox[5.0em][l]{\makebox[3.4em][r]{0.950}}}
& \makebox[5.0em][l]{\makebox[3.4em][r]{0.582}}
& \makebox[5.0em][l]{\makebox[3.4em][r]{0.912}}
& \makebox[5.0em][l]{\makebox[3.4em][r]{0.782}}\\
& False alarm
& {\bfseries \makebox[5.0em][l]{\makebox[3.4em][r]{0.050}}}
& \makebox[5.0em][l]{\makebox[3.4em][r]{0.414}}
& \makebox[5.0em][l]{\makebox[3.4em][r]{0.088}}
& \makebox[5.0em][l]{\makebox[3.4em][r]{0.142}}\\
& Failure
& {\bfseries \makebox[5.0em][l]{\makebox[3.4em][r]{0.000}}}
& \makebox[5.0em][l]{\makebox[3.4em][r]{0.004}}
& {\bfseries \makebox[5.0em][l]{\makebox[3.4em][r]{0.000}}}
& \makebox[5.0em][l]{\makebox[3.4em][r]{0.076}}\\

\addlinespace[3pt]

\(\delta=-4\) & EDD
& \makebox[5.0em][l]{\makebox[3.4em][r]{3.39}}
& \makebox[5.0em][l]{\makebox[3.4em][r]{71.74}}
& \makebox[5.0em][l]{\makebox[3.4em][r]{1.01}}
& {\bfseries \makebox[5.0em][l]{\makebox[3.4em][r]{1.00}}}\\
& Success
& {\bfseries \makebox[5.0em][l]{\makebox[3.4em][r]{0.934}}}
& \makebox[5.0em][l]{\makebox[3.4em][r]{0.580}}
& \makebox[5.0em][l]{\makebox[3.4em][r]{0.888}}
& \makebox[5.0em][l]{\makebox[3.4em][r]{0.844}}\\
& False alarm
& {\bfseries \makebox[5.0em][l]{\makebox[3.4em][r]{0.066}}}
& \makebox[5.0em][l]{\makebox[3.4em][r]{0.414}}
& \makebox[5.0em][l]{\makebox[3.4em][r]{0.112}}
& \makebox[5.0em][l]{\makebox[3.4em][r]{0.156}}\\
& Failure
& {\bfseries \makebox[5.0em][l]{\makebox[3.4em][r]{0.000}}}
& \makebox[5.0em][l]{\makebox[3.4em][r]{0.006}}
& {\bfseries \makebox[5.0em][l]{\makebox[3.4em][r]{0.000}}}
& {\bfseries \makebox[5.0em][l]{\makebox[3.4em][r]{0.000}}}\\

\addlinespace[4pt]

\multicolumn{6}{@{}l}{\hspace{0.4em}\bfseries\boldmath
\(\mathcal{N}(\boldsymbol{0},I)
\to\mathcal{N}(\boldsymbol{0},\delta I)\)}\\
\addlinespace[2pt]

\(\delta=0.3\) & EDD
& {\bfseries \makebox[5.0em][l]{\makebox[3.4em][r]{12.67}}}
& \makebox[5.0em][l]{\makebox[3.4em][r]{861.44}}
& \makebox[5.0em][l]{\makebox[3.4em][r]{25.03}}
& \makebox[5.0em][l]{\makebox[3.4em][r]{671.29}}\\
& Success
& {\bfseries \makebox[5.0em][l]{\makebox[3.4em][r]{0.942}}}
& \makebox[5.0em][l]{\makebox[3.4em][r]{0.592}}
& \makebox[5.0em][l]{\makebox[3.4em][r]{0.880}}
& \makebox[5.0em][l]{\makebox[3.4em][r]{0.828}}\\
& False alarm
& {\bfseries \makebox[5.0em][l]{\makebox[3.4em][r]{0.058}}}
& \makebox[5.0em][l]{\makebox[3.4em][r]{0.386}}
& \makebox[5.0em][l]{\makebox[3.4em][r]{0.106}}
& \makebox[5.0em][l]{\makebox[3.4em][r]{0.168}}\\
& Failure
& {\bfseries \makebox[5.0em][l]{\makebox[3.4em][r]{0.000}}}
& \makebox[5.0em][l]{\makebox[3.4em][r]{0.022}}
& \makebox[5.0em][l]{\makebox[3.4em][r]{0.014}}
& \makebox[5.0em][l]{\makebox[3.4em][r]{0.004}}\\

\addlinespace[3pt]

\( \delta=2.5\) & EDD
& \makebox[5.0em][l]{\makebox[3.4em][r]{6.33}}
& \makebox[5.0em][l]{\makebox[3.4em][r]{808.32}}
& \makebox[5.0em][l]{\makebox[3.4em][r]{4.61}}
& {\bfseries \makebox[5.0em][l]{\makebox[3.4em][r]{2.06}}}\\
& Success
& {\bfseries \makebox[5.0em][l]{\makebox[3.4em][r]{0.938}}}
& \makebox[5.0em][l]{\makebox[3.4em][r]{0.618}}
& \makebox[5.0em][l]{\makebox[3.4em][r]{0.906}}
& \makebox[5.0em][l]{\makebox[3.4em][r]{0.896}}\\
& False alarm
& {\bfseries \makebox[5.0em][l]{\makebox[3.4em][r]{0.062}}}
& \makebox[5.0em][l]{\makebox[3.4em][r]{0.382}}
& \makebox[5.0em][l]{\makebox[3.4em][r]{0.094}}
& \makebox[5.0em][l]{\makebox[3.4em][r]{0.102}}\\
& Failure
& {\bfseries \makebox[5.0em][l]{\makebox[3.4em][r]{0.000}}}
& {\bfseries \makebox[5.0em][l]{\makebox[3.4em][r]{0.000}}}
& {\bfseries \makebox[5.0em][l]{\makebox[3.4em][r]{0.000}}}
& \makebox[5.0em][l]{\makebox[3.4em][r]{0.002}}\\

\addlinespace[4pt]

\multicolumn{6}{@{}l}{\hspace{0.4em}\bfseries\boldmath
\(\mathcal{N}(\boldsymbol{0},I)
\to\operatorname{Uniform}([-\delta,\delta]^{20})\)}\\
\addlinespace[2pt]

\(\delta=0.5\) & EDD
& {\bfseries \makebox[5.0em][l]{\makebox[3.4em][r]{8.87}}}
& \makebox[5.0em][l]{\makebox[3.4em][r]{966.67}}
& \makebox[5.0em][l]{\makebox[3.4em][r]{15.63}}
& \makebox[5.0em][l]{\makebox[3.4em][r]{--}}\\
& Success
& {\bfseries \makebox[5.0em][l]{\makebox[3.4em][r]{0.928}}}
& \makebox[5.0em][l]{\makebox[3.4em][r]{0.086}}
& \makebox[5.0em][l]{\makebox[3.4em][r]{0.898}}
& \makebox[5.0em][l]{\makebox[3.4em][r]{0.000}}\\
& False alarm
& {\bfseries \makebox[5.0em][l]{\makebox[3.4em][r]{0.072}}}
& \makebox[5.0em][l]{\makebox[3.4em][r]{0.438}}
& \makebox[5.0em][l]{\makebox[3.4em][r]{0.102}}
& \makebox[5.0em][l]{\makebox[3.4em][r]{0.128}}\\
& Failure
& {\bfseries \makebox[5.0em][l]{\makebox[3.4em][r]{0.000}}}
& \makebox[5.0em][l]{\makebox[3.4em][r]{0.476}}
& {\bfseries \makebox[5.0em][l]{\makebox[3.4em][r]{0.000}}}
& \makebox[5.0em][l]{\makebox[3.4em][r]{0.872}}\\

\addlinespace[3pt]

\(\delta=1\) & EDD
& {\bfseries \makebox[5.0em][l]{\makebox[3.4em][r]{14.23}}}
& \makebox[5.0em][l]{\makebox[3.4em][r]{843.44}}
& \makebox[5.0em][l]{\makebox[3.4em][r]{27.74}}
& \makebox[5.0em][l]{\makebox[3.4em][r]{800.76}}\\
& Success
& {\bfseries \makebox[5.0em][l]{\makebox[3.4em][r]{0.940}}}
& \makebox[5.0em][l]{\makebox[3.4em][r]{0.536}}
& \makebox[5.0em][l]{\makebox[3.4em][r]{0.842}}
& \makebox[5.0em][l]{\makebox[3.4em][r]{0.568}}\\
& False alarm
& {\bfseries \makebox[5.0em][l]{\makebox[3.4em][r]{0.060}}}
& \makebox[5.0em][l]{\makebox[3.4em][r]{0.452}}
& \makebox[5.0em][l]{\makebox[3.4em][r]{0.122}}
& \makebox[5.0em][l]{\makebox[3.4em][r]{0.142}}\\
& Failure
& {\bfseries \makebox[5.0em][l]{\makebox[3.4em][r]{0.000}}}
& \makebox[5.0em][l]{\makebox[3.4em][r]{0.012}}
& \makebox[5.0em][l]{\makebox[3.4em][r]{0.036}}
& \makebox[5.0em][l]{\makebox[3.4em][r]{0.290}}\\

\addlinespace[4pt]

\multicolumn{6}{@{}l}{\hspace{0.4em}\bfseries\boldmath
\(\mathcal{N}(\boldsymbol{0},I)
\to\operatorname{Laplace}(0,\delta)^{20}\)}\\
\addlinespace[2pt]

\(\delta=0.3\) & EDD
& {\bfseries \makebox[5.0em][l]{\makebox[3.4em][r]{10.53}}}
& \makebox[5.0em][l]{\makebox[3.4em][r]{914.23}}
& \makebox[5.0em][l]{\makebox[3.4em][r]{19.66}}
& \makebox[5.0em][l]{\makebox[3.4em][r]{614.26}}\\
& Success
& {\bfseries \makebox[5.0em][l]{\makebox[3.4em][r]{0.940}}}
& \makebox[5.0em][l]{\makebox[3.4em][r]{0.432}}
& \makebox[5.0em][l]{\makebox[3.4em][r]{0.900}}
& \makebox[5.0em][l]{\makebox[3.4em][r]{0.862}}\\
& False alarm
& {\bfseries \makebox[5.0em][l]{\makebox[3.4em][r]{0.060}}}
& \makebox[5.0em][l]{\makebox[3.4em][r]{0.414}}
& \makebox[5.0em][l]{\makebox[3.4em][r]{0.100}}
& \makebox[5.0em][l]{\makebox[3.4em][r]{0.138}}\\
& Failure
& {\bfseries \makebox[5.0em][l]{\makebox[3.4em][r]{0.000}}}
& \makebox[5.0em][l]{\makebox[3.4em][r]{0.154}}
& {\bfseries \makebox[5.0em][l]{\makebox[3.4em][r]{0.000}}}
& {\bfseries \makebox[5.0em][l]{\makebox[3.4em][r]{0.000}}}\\

\addlinespace[3pt]

\(\delta=2\) & EDD
& \makebox[5.0em][l]{\makebox[3.4em][r]{3.59}}
& \makebox[5.0em][l]{\makebox[3.4em][r]{773.63}}
& \makebox[5.0em][l]{\makebox[3.4em][r]{1.22}}
& {\bfseries \makebox[5.0em][l]{\makebox[3.4em][r]{1.02}}}\\
& Success
& {\bfseries \makebox[5.0em][l]{\makebox[3.4em][r]{0.932}}}
& \makebox[5.0em][l]{\makebox[3.4em][r]{0.552}}
& \makebox[5.0em][l]{\makebox[3.4em][r]{0.910}}
& \makebox[5.0em][l]{\makebox[3.4em][r]{0.846}}\\
& False alarm
& {\bfseries \makebox[5.0em][l]{\makebox[3.4em][r]{0.068}}}
& \makebox[5.0em][l]{\makebox[3.4em][r]{0.406}}
& \makebox[5.0em][l]{\makebox[3.4em][r]{0.090}}
& \makebox[5.0em][l]{\makebox[3.4em][r]{0.154}}\\
& Failure
& {\bfseries \makebox[5.0em][l]{\makebox[3.4em][r]{0.000}}}
& \makebox[5.0em][l]{\makebox[3.4em][r]{0.042}}
& {\bfseries \makebox[5.0em][l]{\makebox[3.4em][r]{0.000}}}
& {\bfseries \makebox[5.0em][l]{\makebox[3.4em][r]{0.000}}}\\

\addlinespace[4pt]

\multicolumn{6}{@{}l}{\hspace{0.4em}\bfseries\boldmath
\(\mathcal{N}(\boldsymbol{0},I)
\to\operatorname{Exp}(\delta)^{20}\)}\\
\addlinespace[2pt]

\(\delta=0.1\) & EDD
& {\bfseries \makebox[5.0em][l]{\makebox[3.4em][r]{8.11}}}
& \makebox[5.0em][l]{\makebox[3.4em][r]{--}}
& \makebox[5.0em][l]{\makebox[3.4em][r]{10.48}}
& \makebox[5.0em][l]{\makebox[3.4em][r]{--}}\\
& Success
& {\bfseries \makebox[5.0em][l]{\makebox[3.4em][r]{0.940}}}
& \makebox[5.0em][l]{\makebox[3.4em][r]{0.000}}
& \makebox[5.0em][l]{\makebox[3.4em][r]{0.916}}
& \makebox[5.0em][l]{\makebox[3.4em][r]{0.000}}\\
& False alarm
& {\bfseries \makebox[5.0em][l]{\makebox[3.4em][r]{0.060}}}
& \makebox[5.0em][l]{\makebox[3.4em][r]{0.422}}
& \makebox[5.0em][l]{\makebox[3.4em][r]{0.084}}
& \makebox[5.0em][l]{\makebox[3.4em][r]{0.130}}\\
& Failure
& {\bfseries \makebox[5.0em][l]{\makebox[3.4em][r]{0.000}}}
& \makebox[5.0em][l]{\makebox[3.4em][r]{0.578}}
& {\bfseries \makebox[5.0em][l]{\makebox[3.4em][r]{0.000}}}
& \makebox[5.0em][l]{\makebox[3.4em][r]{0.870}}\\

\addlinespace[3pt]

\(\delta=2\) & EDD
& \makebox[5.0em][l]{\makebox[3.4em][r]{3.55}}
& \makebox[5.0em][l]{\makebox[3.4em][r]{666.16}}
& \makebox[5.0em][l]{\makebox[3.4em][r]{1.17}}
& {\bfseries \makebox[5.0em][l]{\makebox[3.4em][r]{1.01}}}\\
& Success
& {\bfseries \makebox[5.0em][l]{\makebox[3.4em][r]{0.954}}}
& \makebox[5.0em][l]{\makebox[3.4em][r]{0.572}}
& \makebox[5.0em][l]{\makebox[3.4em][r]{0.904}}
& \makebox[5.0em][l]{\makebox[3.4em][r]{0.864}}\\
& False alarm
& {\bfseries \makebox[5.0em][l]{\makebox[3.4em][r]{0.046}}}
& \makebox[5.0em][l]{\makebox[3.4em][r]{0.414}}
& \makebox[5.0em][l]{\makebox[3.4em][r]{0.096}}
& \makebox[5.0em][l]{\makebox[3.4em][r]{0.136}}\\
& Failure
& {\bfseries \makebox[5.0em][l]{\makebox[3.4em][r]{0.000}}}
& \makebox[5.0em][l]{\makebox[3.4em][r]{0.014}}
& {\bfseries \makebox[5.0em][l]{\makebox[3.4em][r]{0.000}}}
& {\bfseries \makebox[5.0em][l]{\makebox[3.4em][r]{0.000}}}\\

\end{longtable}
\endgroup

\clearpage

\subsubsection{Comparison of average runtime}

\label{supp:sec:avruntime}

\paragraph*{Average runtime under different sequence lengths.}
To compare runtime as sequence length increases, we fix \(d=20\) and consider sequence lengths ranging from 10 to \(200{,}000\) observations. For each of 10 repetitions, we generate an independent standard normal data stream, with all methods evaluated on the same stream within each repetition. Figure~\ref{fig:runtime-total-comparison-d20} reports the average streaming runtime, excluding initialization costs. OKAFF and NewMA achieve the lowest streaming runtimes, while Online RFF MMD incurs moderately higher computational cost and MMDEW is substantially slower. ScanB and OK-CUSUM are considerably more expensive. In the Figure, both axes are displayed on logarithmic scales. Each observation retains its original values $(x,y)$, but its position is determined by
\(
\bigl(\log_{10}x,\log_{10}y\bigr),
\)
where $x$ denotes the sequence length and $y$ denotes the average streaming runtime in seconds. These results show that OKAFF combines competitive  detection performance with low computational cost, making it well suited for online changepoint detection.

\par\medskip

{\centering
\includegraphics[
    width=0.6\linewidth
]{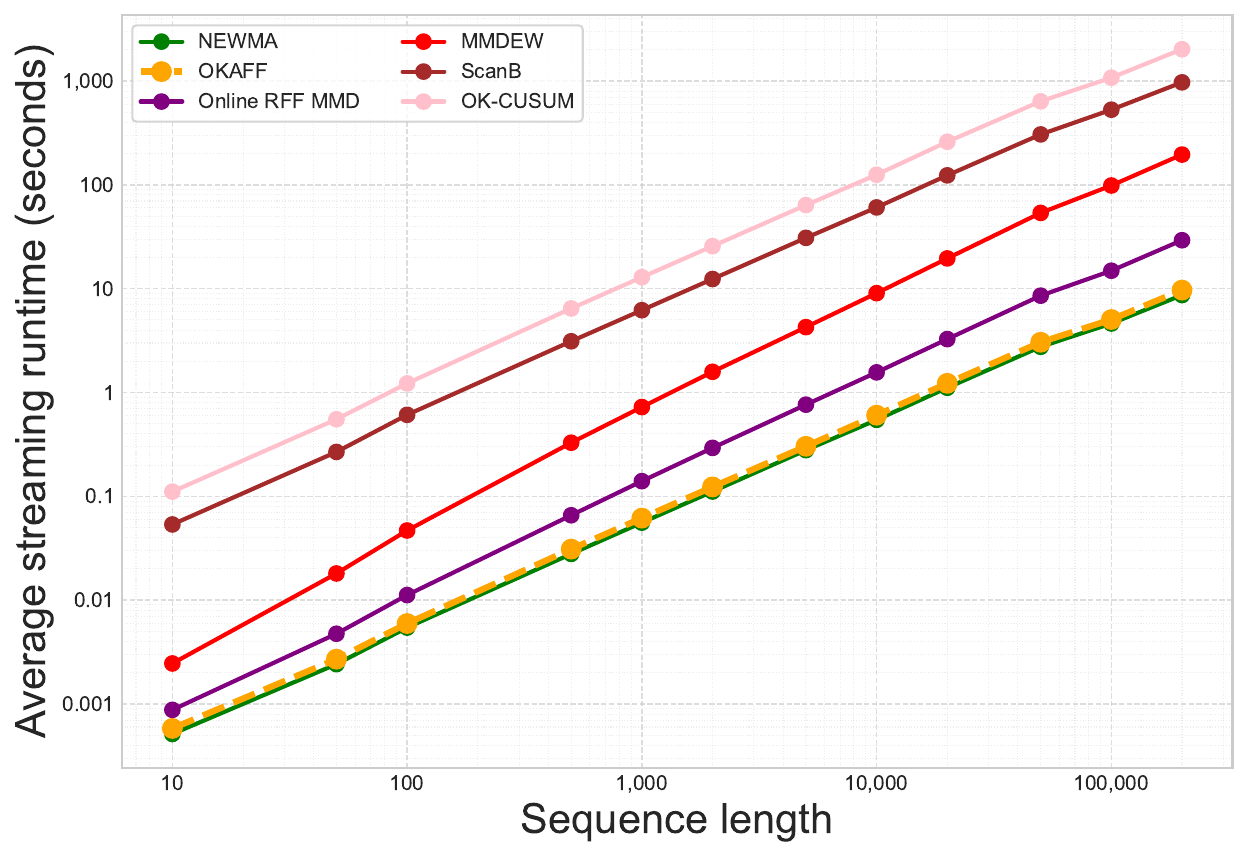}

\captionof{figure}{Average streaming runtime versus sequence length for
standard normal data with \(d=20\), comparing  NEWMA, OKAFF, Online RFF MMD, MMDEW, ScanB, and OK-CUSUM.}
\label{fig:runtime-total-comparison-d20}
\par}

\paragraph*{Average runtime across different dimensions}
Figure~\ref{fig:runtime-dimension-comparison} examines how computational cost
changes with dimension. We consider
\(d\in\{1,2,5,10,20,50,100,1000,5000\}\). For each dimension, a monitoring
stream of length \(T=20{,}000\) is generated from the standard multivariate normal
distribution. The reported runtimes are averages over 10 repetitions. Within
each repetition, every method receives the same reference sample and monitoring
stream. The panel reports streaming time, excluding initialization, for OKAFF, MMDEW,
NEWMA, and Online RFF MMD, ScanB and
OK-CUSUM.

{\centering
\includegraphics[
    width=0.6\linewidth
]{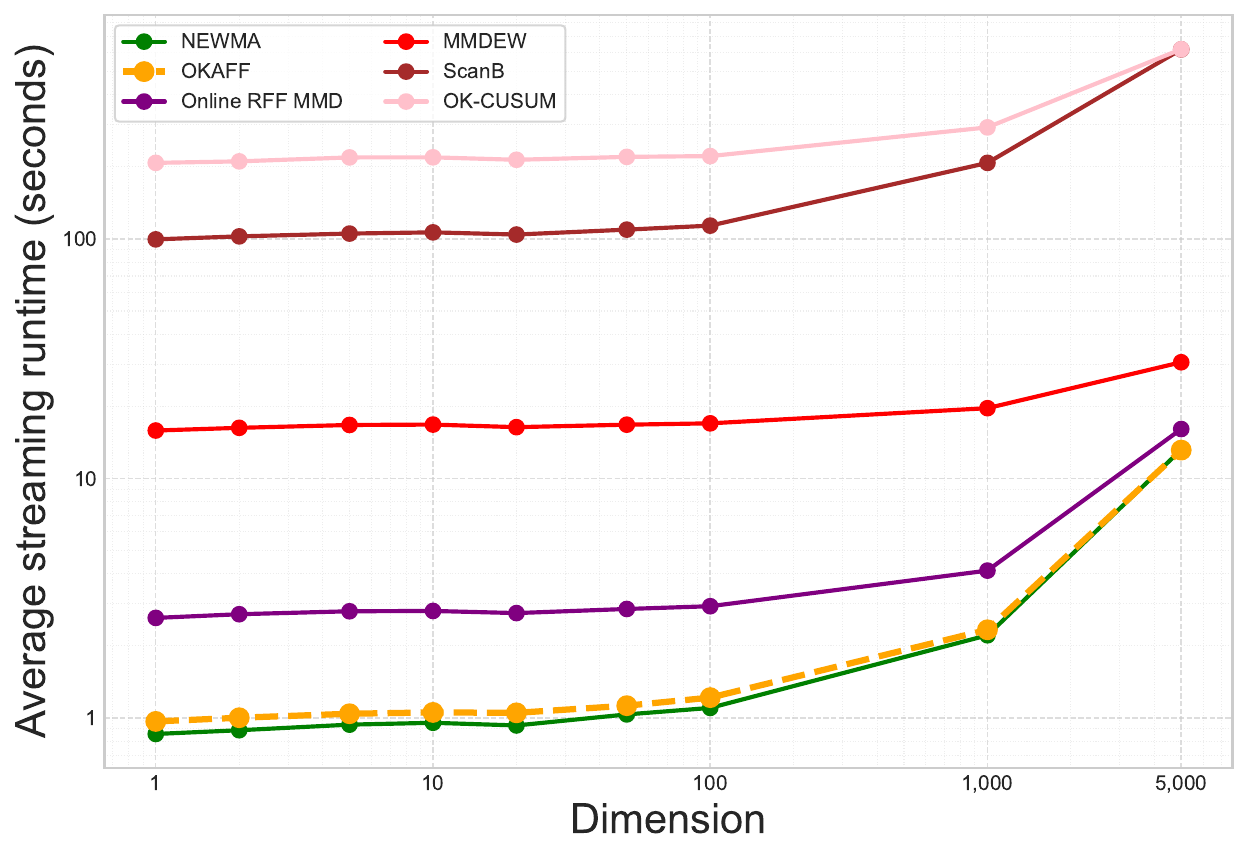}

\captionof{figure}{Average streaming runtime versus dimension, comparing NewMA, OKAFF,Online RFF MMD, MMDEW, ScanB and OK-CUSUM. The sequence length is
\(T=20{,}000\), and runtimes are averaged over 10 repetitions.}
\label{fig:runtime-dimension-comparison}
\par}

\subsubsection{Comparison of different numbers of random Fourier features}

\label{rffnumcom}

\paragraph*{Dimension $d=20$.} The following experiments examine the effect of the number of random features on detection performance. We consider \(500\), \(1000\), and \(5000\) random features in the multivariate setting. The null distribution is \(\mathcal{N}(\mathbf{0},I)\), and the alternative distribution is \(\mathcal{N}(\mathbf{0},0.3I)\). The results are shown in Figure~\ref{fig:rff-number-covdiag03}. Detection performance remains generally similar as the number of random features increases from \(500\) to \(5000\), suggesting that \(500\) random features are sufficient for the considered alternative and that further increases provide improvement in detection performance. Similar behavior is observed for the other alternatives considered in the \(d=20\) experiments. We therefore use \(500\) random features in the main experiments.

\par\medskip

{\centering
\includegraphics[
    width=0.6\linewidth
]{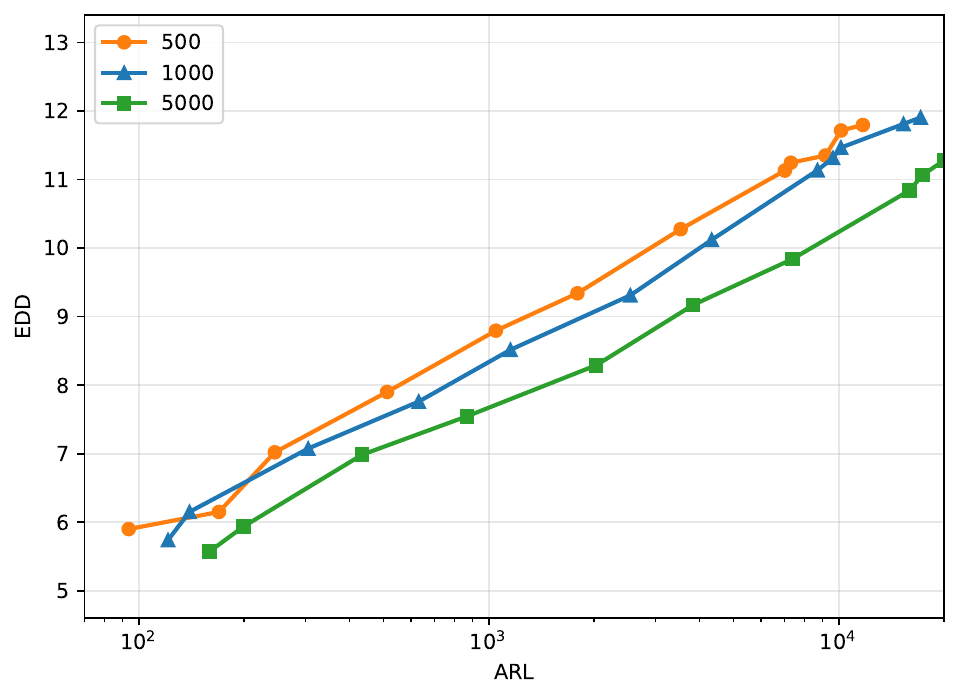}

\captionof{figure}{Effect of the number of random features on detection
performance for multivariate data. The panel compares \(500\), \(1000\), and
\(5000\) random features. The horizontal axis showing ARL is log-scaled,
whereas the vertical axis showing EDD unscaled.}
\label{fig:rff-number-covdiag03}
\par}

\paragraph*{Dimension $d=1$.} We further examine the effect of the number of random features on detection performance in the univariate setting in Figure~\ref{fig:rff-number-mean03-d1}. We consider \(500\), \(1000\), and \(5000\) random features, with null distribution \(\mathcal{N}(0,1)\) and alternative distribution \(\mathcal{N}(-3,1)\). The results show that detection performance remains generally similar as the number of random features increases from \(500\) to \(5000\). This suggests that \(500\) random features are sufficient for the considered setting, with little additional improvement obtained from using a larger number of features. Similar behavior is observed for the other univariate alternatives considered in our experiments.

\begin{figure}[!htbp]
\centering

\includegraphics[
  width=0.6\linewidth
]{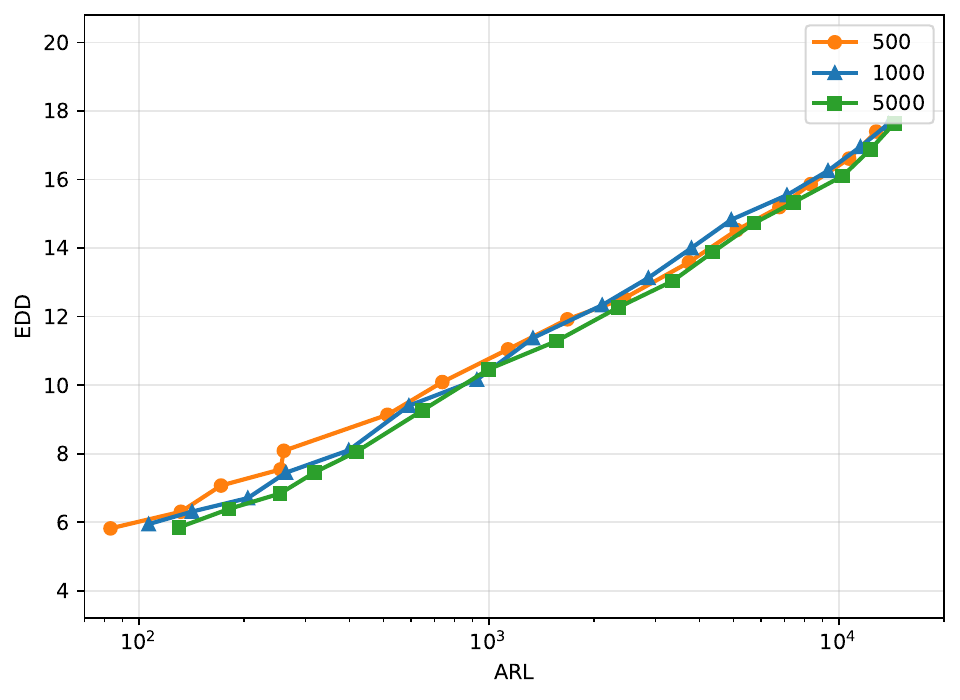}

\caption{Effect of the number of random features on detection performance
for univariate data. The panel compares \(500\), \(1000\), and \(5000\)
random features. The horizontal axis showing ARL is log-scaled,
whereas the vertical axis showing EDD unscaled.}
\label{fig:rff-number-mean03-d1}

\end{figure}

\newpage

\subsubsection{Behaviour of the monitoring statistic under null and alternative distributions}

\label{sec:supp:monstat}

\begin{figure}[!htbp]
\centering

\includegraphics[
  width=0.49\linewidth
]{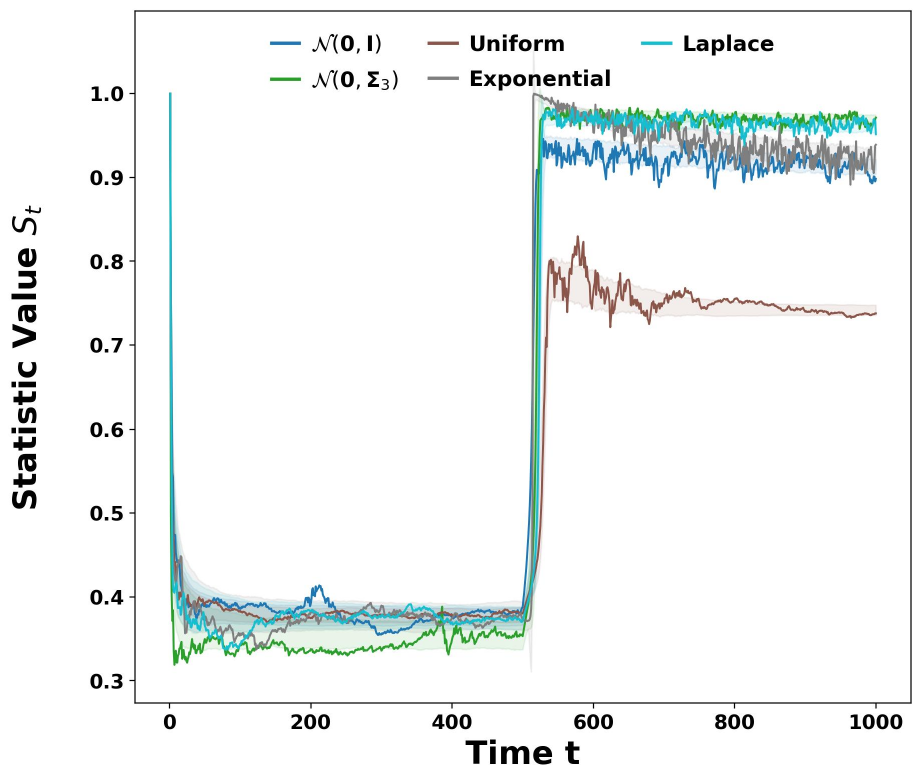}
\hfill
\includegraphics[
  width=0.49\linewidth
]{Figures/boxplot/diffpre/all_prechange_cases_average_paths_empirical_only.pdf}

\caption{Sample paths of the monitoring statistic under different pre-change
distributions and the same post-change distribution, with the change
point occurring at \(t=501\). The left panel shows an individual sample
path with estimated standard-deviation bands, while the right panel shows
the average sample paths with the corresponding estimated
standard-deviation bands.}
\label{fig:average-individual-samplepath-diffpre}

\end{figure}

\begin{figure}[!htbp]
\centering

\includegraphics[
  width=0.49\linewidth
]{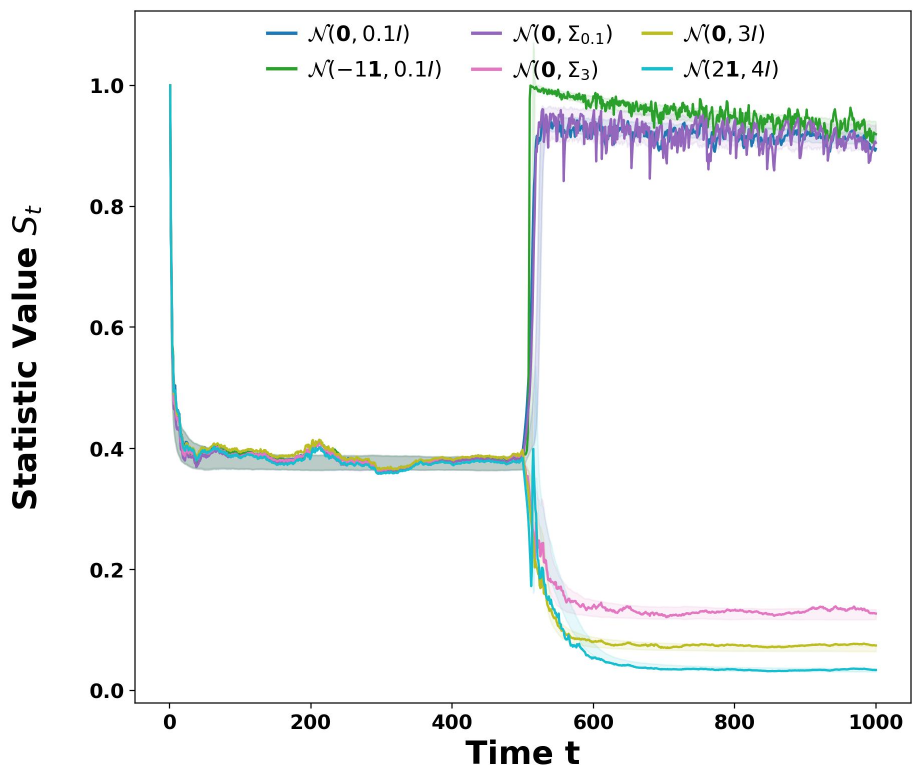}
\hfill
\includegraphics[
  width=0.49\linewidth
]{Figures/boxplot/Gaussian/all_gaussian_cases_average_paths.pdf}

\caption{Sample paths of the monitoring statistic under a common pre-change Gaussian distribution
and different post-change Gaussian distributions, with the changepoint occurring at
\(t=501\). The left panel shows an individual sample path with estimated
standard-deviation bands, while the right panel shows the average sample
paths with the corresponding estimated standard-deviation bands.}
\label{fig:average-individual-samplepath-gaussian}

\end{figure}

In Section~\ref{bemostatnullalt} we visualized the distribution of the OKAFF monitoring statistic $S_t$ before and after a changepoint across various scenarios. This provided a qualitative 
assessment of the $S_t$'s ability to react to a changepoint.
This section provides additional plots showing the behaviour of the statistic $S_t$ for different types of changes in distribution.
Greater separation between the pre- and post-change distributions facilitates reliable detection, allowing a suitably calibrated threshold to achieve high detection power while controlling false alarms. We further consider uniform, exponential, and Laplace null distributions. For each setting, empirical distributions at successive post-change times illustrate the evolving separation from the null distribution, while individual and average sample paths show the temporal evolution and variability of this response. These experiments complement the results in Section \ref{adpthr} and provide further evidence of the statistic's performance across a broader range of underlying distributions. Figure~\ref{fig:boxplot-thresholds-510-540-difpre} and Figure~\ref{fig:average-individual-samplepath-diffpre} compare the empirical distributions of the monitoring statistic under several null distributions and the same alternative distribution. In contrast, Figure~\ref{fig:average-individual-samplepath-gaussian} illustrates sample paths of the monitoring statistic under the same Gaussian null distribution but different Gaussian alternatives.

\vspace{0.2cm}
\noindent\textbf{Uniform distribution.}\enspace
We consider a stream whose coordinates are independent and uniformly
distributed. The pre-change distribution $p$ is
\(\operatorname{Uniform}([-1,1]^{d}),
\)
$d=20$.
The six post-change distributions are displayed in the following order: $\operatorname{Uniform}([-0.25,0]^{d})$, $\operatorname{Uniform}([-0.25,0.25]^{d})$, $\operatorname{Uniform}([-0.5,0.5]^{d})$, $\operatorname{Uniform}([-2,2]^{d})$, $\operatorname{Uniform}([-3,3]^{d})$, and $\operatorname{Uniform}([5,12]^{d})$. Here, \(\operatorname{Uniform}([a,b]^{d})\) denotes the joint distribution of $d$ independent \(\operatorname{Uniform}([a,b])\) variables.
The alternatives above include changes in
dispersion alone as well as simultaneous changes in location and dispersion. The results are in Figure~\ref{fig:uniform-sample-paths} and Figure~\ref{fig:uniform-boxplots-510-540}.

\begin{figure}[!htbp]
\centering

\includegraphics[
  width=0.49\linewidth
]{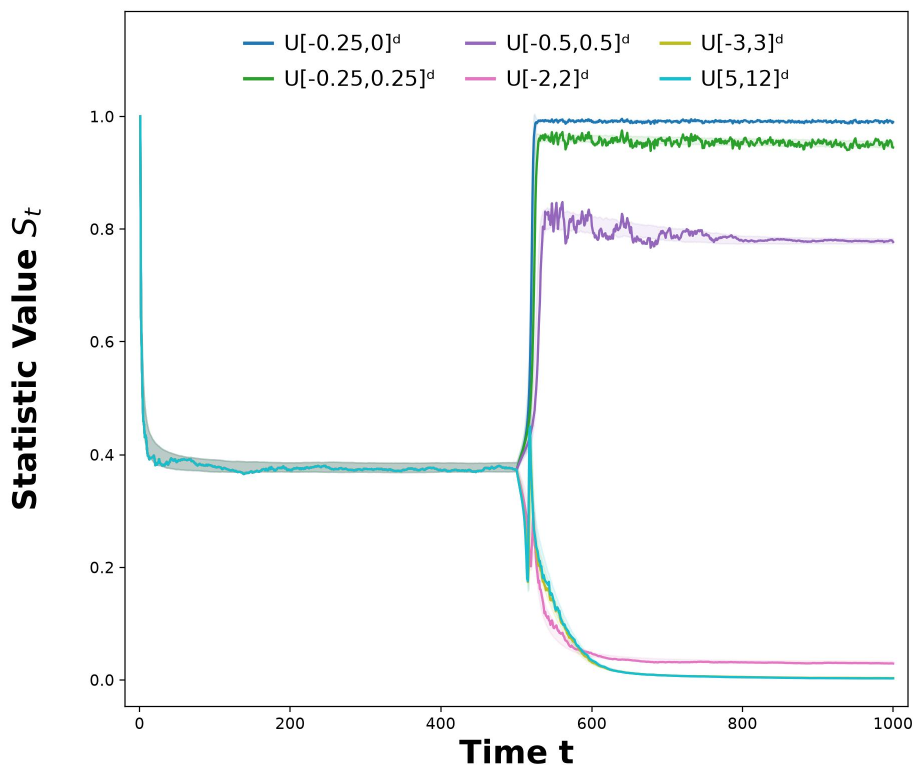}
\hfill
\includegraphics[
  width=0.49\linewidth
]{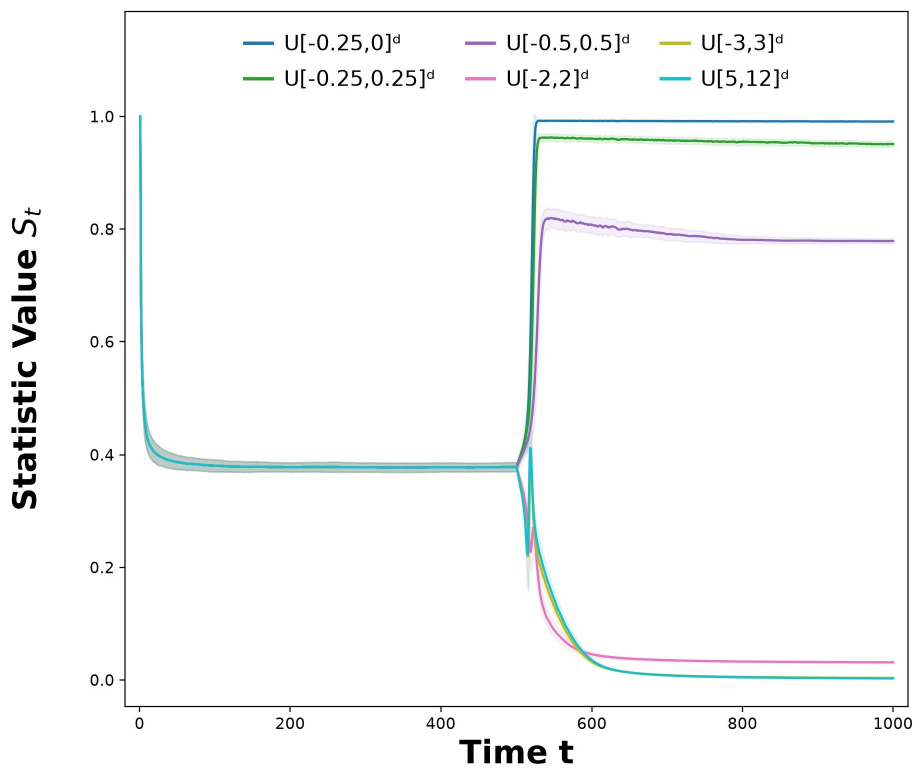}

\caption{Sample paths of the monitoring statistic under a common pre-change uniform distribution and  different
post-change uniform distributions, with the changepoint occurring at \(t=501\). The left panel
shows an individual sample path, with shaded regions representing the
estimated standard-deviation bands. The right panel shows the average
sample paths and their corresponding estimated standard-deviation bands.}
\label{fig:uniform-sample-paths}

\end{figure}

\noindent\textbf{Exponential distribution.}\enspace
We consider a stream whose coordinates are independent and exponentially
distributed. Before the change, all 20 coordinates follow independent
exponential distributions with scale parameter 1. After the changepoint occurs, the
coordinates remain independent and the six post-change alternatives are displayed in the following order:
from $\operatorname{exp}(0.1)^{d}$, $\operatorname{exp}(0.25)^{d}$, $\operatorname{exp}(0.5)^{d}$, $(\operatorname{exp}(0.25)-0.25)^{d}$, $\operatorname{exp}(2)^{d}$, and $(\operatorname{exp}(2)-2)^{d}$.
The results are reported in Figure~\ref{fig:exponential-sample-paths} and Figure \ref{fig:exponential-boxplots-510-540}.
\vspace{0.2cm}

\begin{figure*}[!t]
\centering
\includegraphics[width=0.49\textwidth]{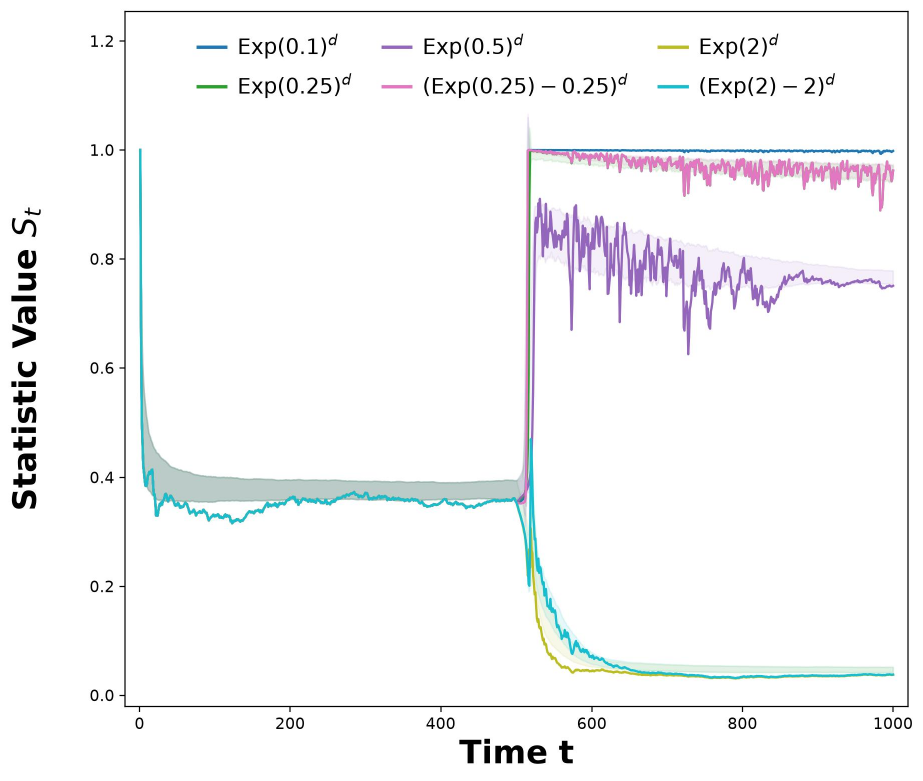}\hfill
\includegraphics[width=0.49\textwidth]{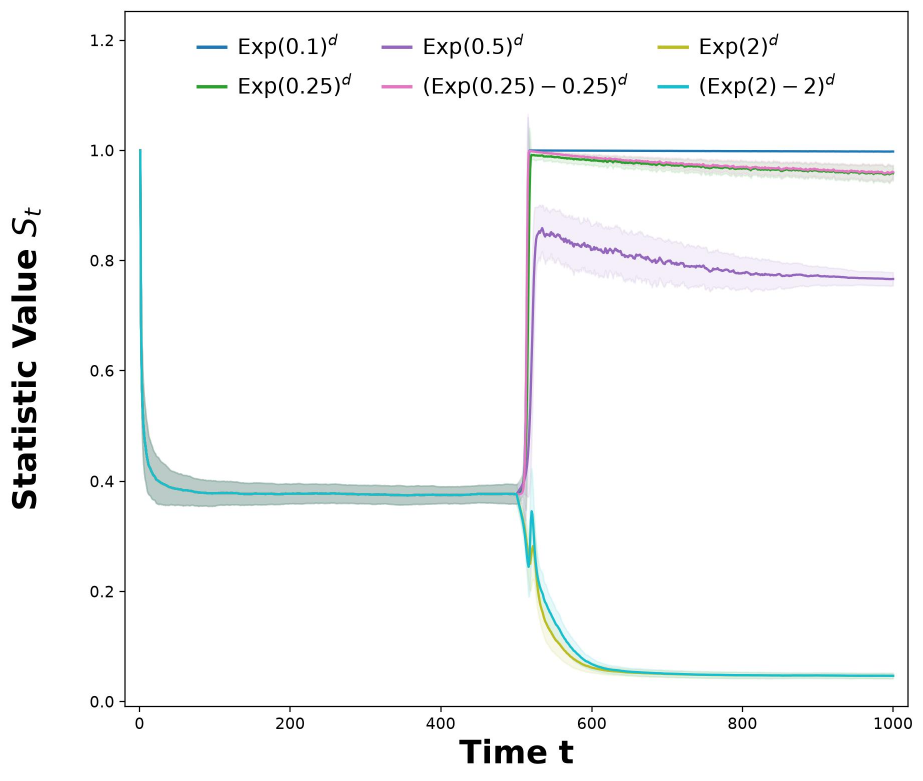}
\caption{Sample paths of the monitoring statistic under a common pre-change exponential distribution and different
         post-change exponential distributions, with the changepoint occurring at
        \(t=501\). The first panel shows an individual sample path, with
        shaded regions representing the estimated standard-deviation
        bands. The second panel shows the average sample paths and their
        corresponding estimated standard-deviation bands.}
 \label{fig:exponential-sample-paths}
\end{figure*}

\noindent\textbf{Laplace distribution.}\enspace
We consider a stream whose coordinates are independent and follow Laplace distributions.
The pre-change distribution is
\(
p=\operatorname{Laplace}(0,1)^{d},
\)
with $d=20$,
and the six post-change alternatives are displayed in the following order:
$\operatorname{Laplace}(0,0.25)^{d}$,
$\operatorname{Laplace}(0,0.5)^{d}$,
$\operatorname{Laplace}(0,2)^{d}$,
$\operatorname{Laplace}(-1,0.25)^{d}$, $\operatorname{Laplace}(-2,0.5)^{d}$, and $\operatorname{Laplace}(3,2)^{d}$.
Here, \(\operatorname{Laplace}(a,b)^{d}\) denotes the joint distribution of $d$ independent Laplace variables with location \(a\) and scale \(b\). The alternatives include changes in scale as well as simultaneous changes in location and scale (see, Figure~\ref{fig:laplace-sample-paths} and Figures~\ref{fig:laplace-boxplots-510-540}).

\begin{figure*}[!t]
\centering
\includegraphics[width=0.49\textwidth]{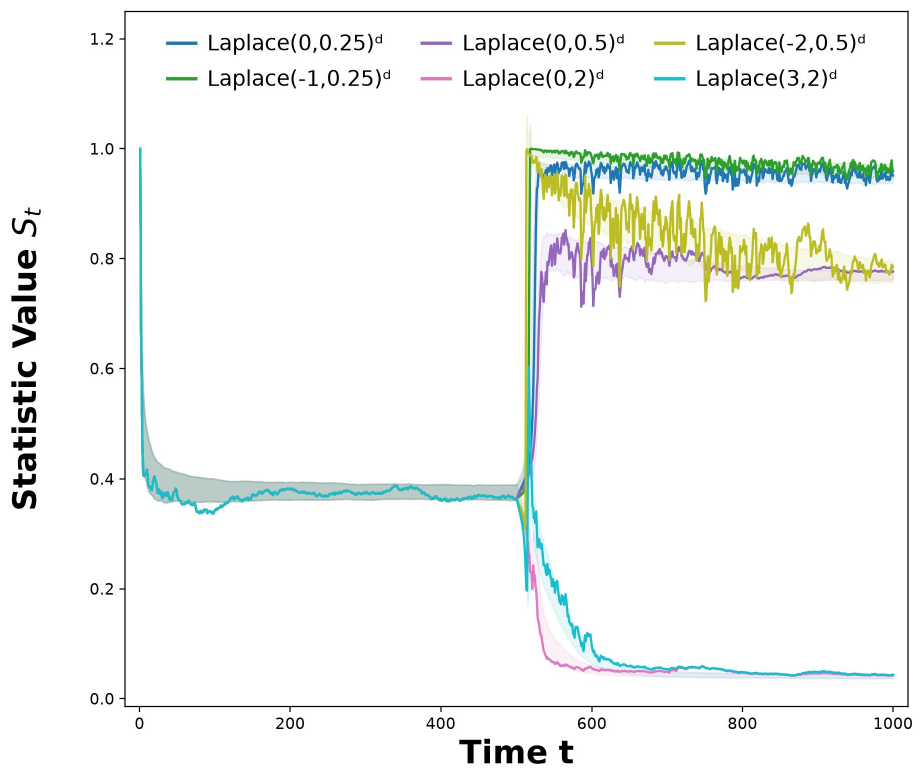}\hfill
\includegraphics[width=0.49\textwidth]{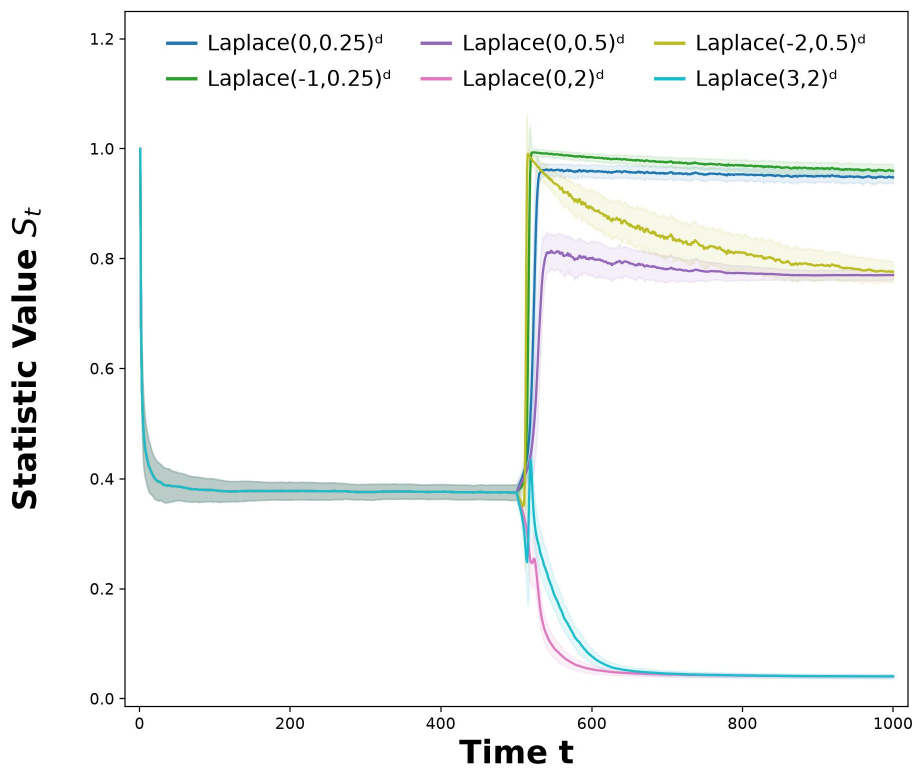}
\caption{Sample paths of the monitoring statistic under a common pre-change Laplace distribution and different
         post-change Laplace distributions, with the changepoint occurring at
        \(t=501\). The first panel shows an individual sample path, with
        shaded regions representing the estimated standard-deviation
        bands. The second panel shows the average sample paths and their
        corresponding estimated standard-deviation bands.}
 \label{fig:laplace-sample-paths}
\end{figure*}

\begin{figure}[!htbp]
\centering

\includegraphics[
  width=0.32\linewidth,
  height=0.155\textheight,
  keepaspectratio
]{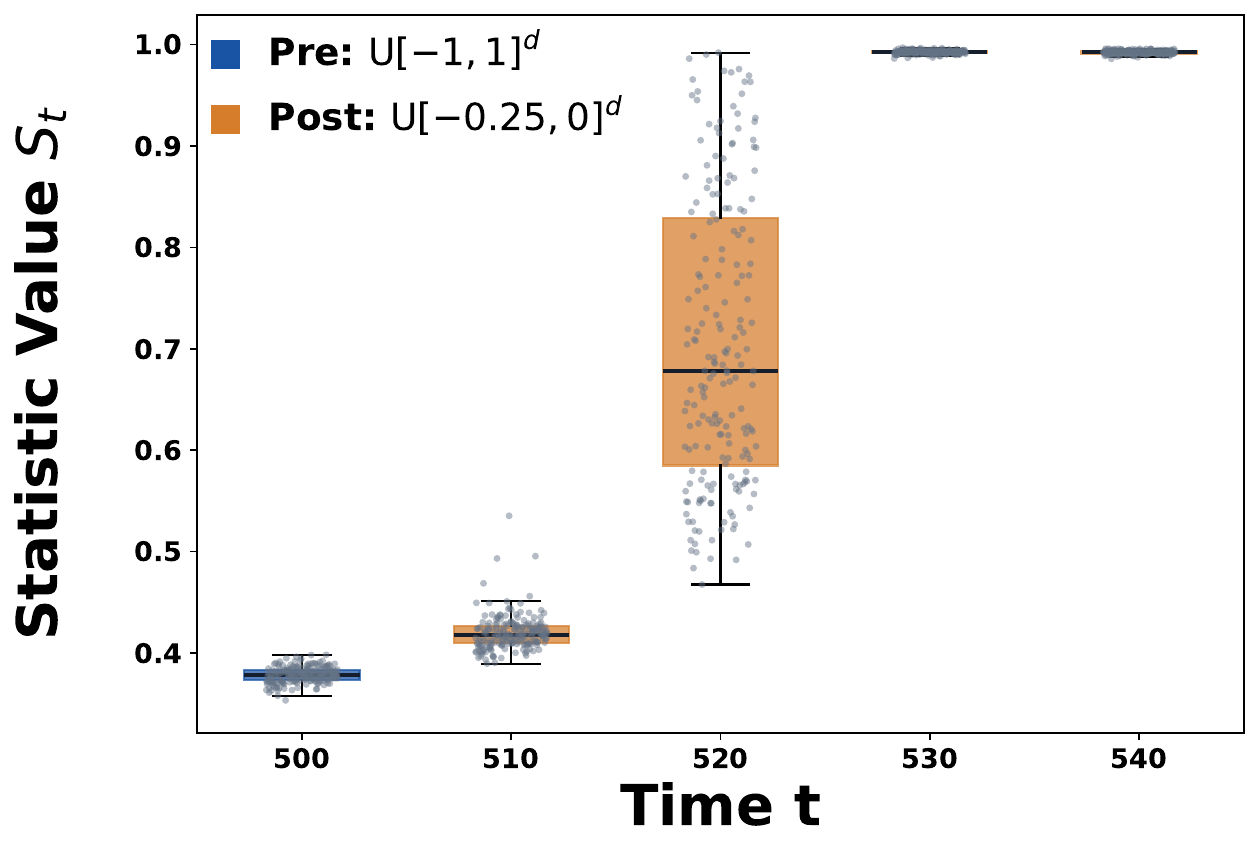}
\hfill
\includegraphics[
  width=0.32\linewidth,
  height=0.155\textheight,
  keepaspectratio
]{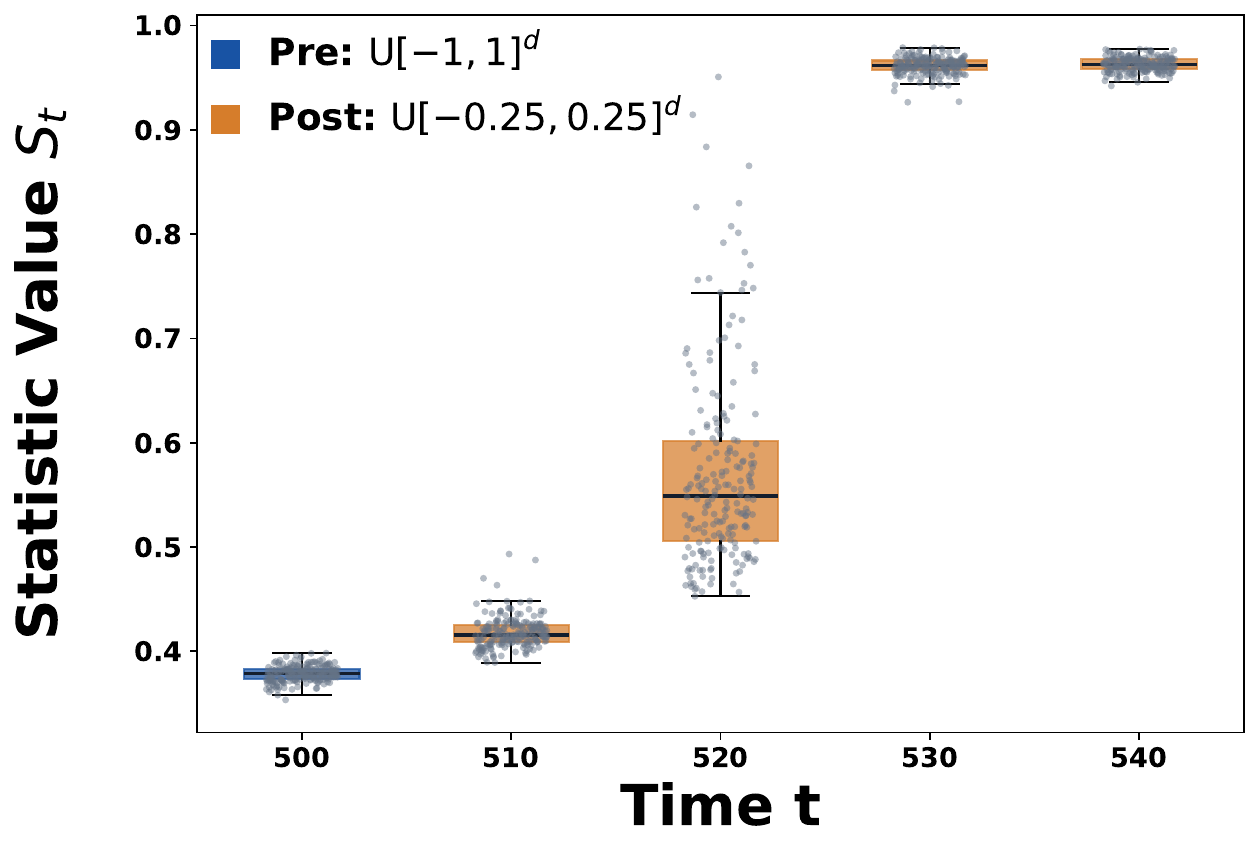}
\hfill
\includegraphics[
  width=0.32\linewidth,
  height=0.155\textheight,
  keepaspectratio
]{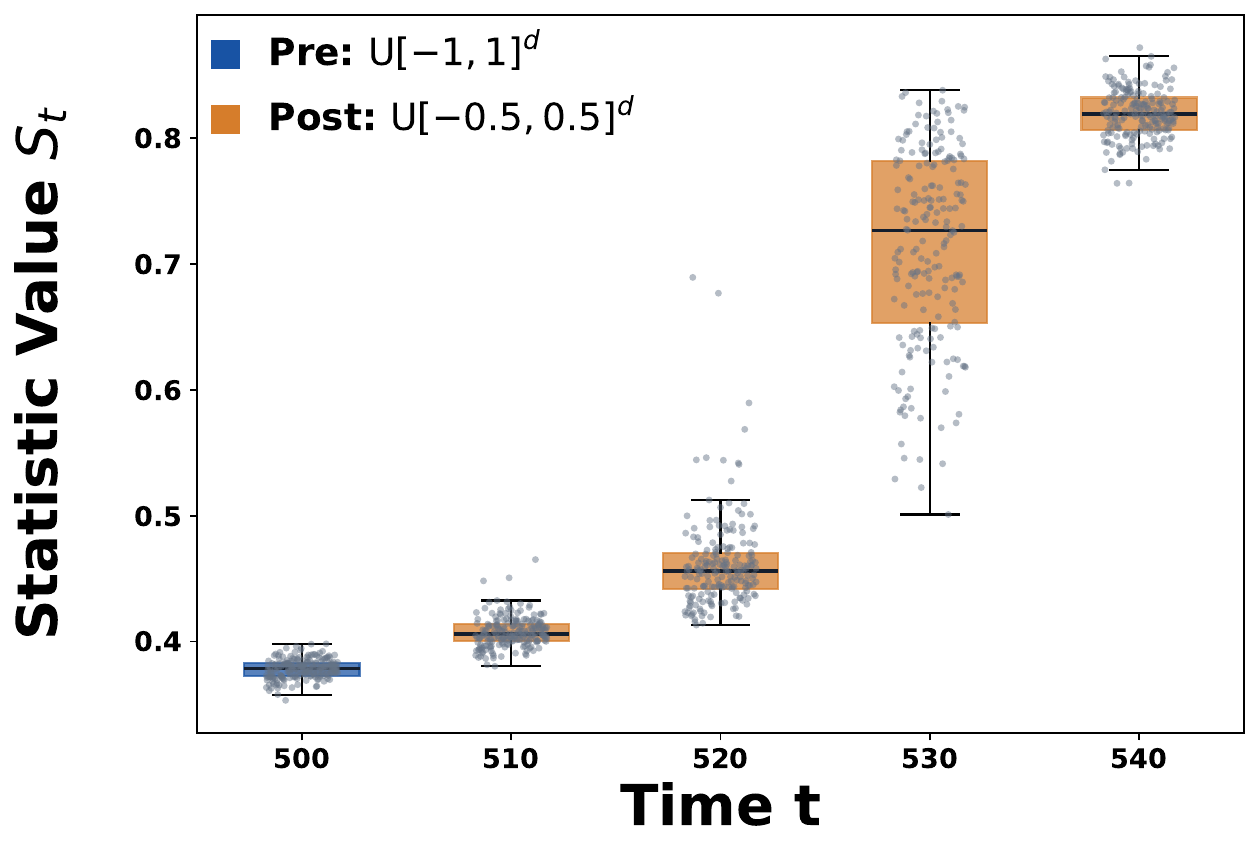}

\smallskip

\includegraphics[
  width=0.32\linewidth,
  height=0.155\textheight,
  keepaspectratio
]{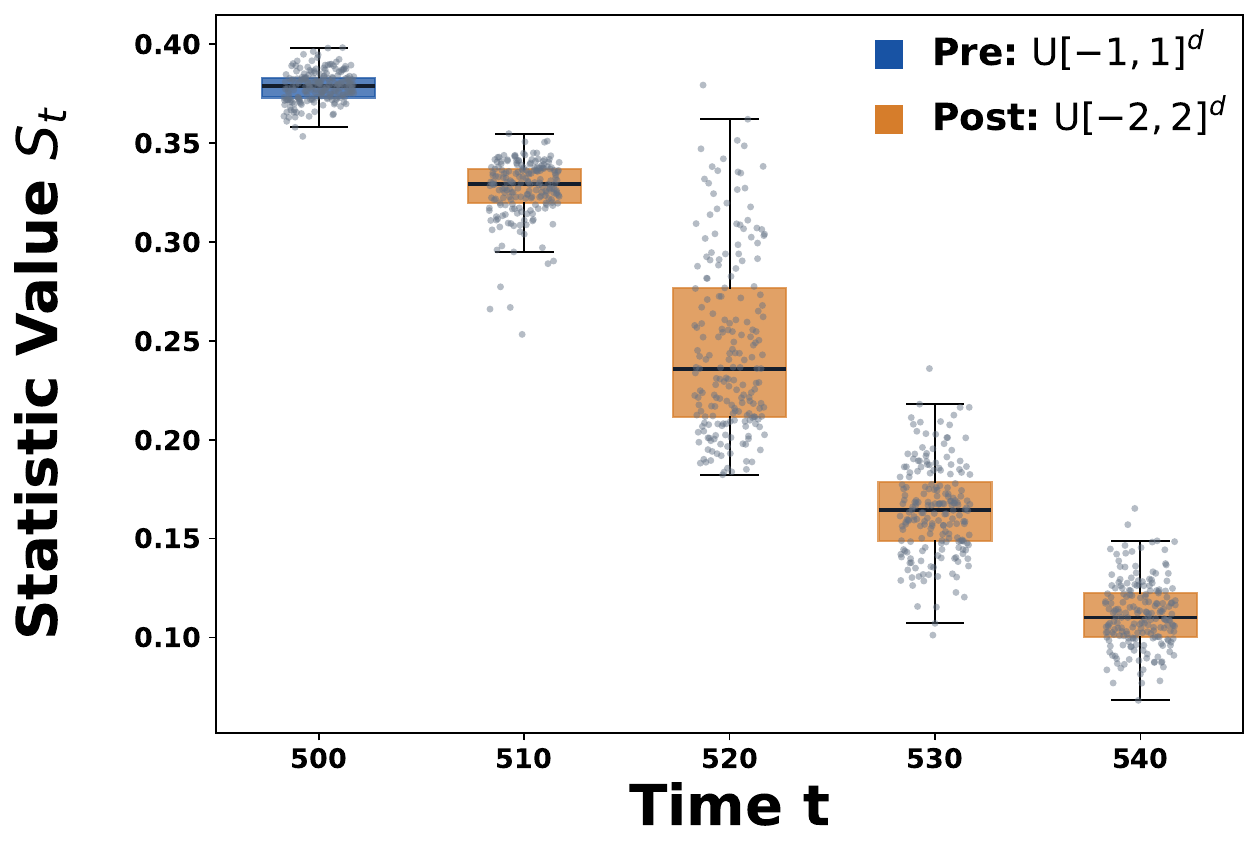}
\hfill
\includegraphics[
  width=0.32\linewidth,
  height=0.155\textheight,
  keepaspectratio
]{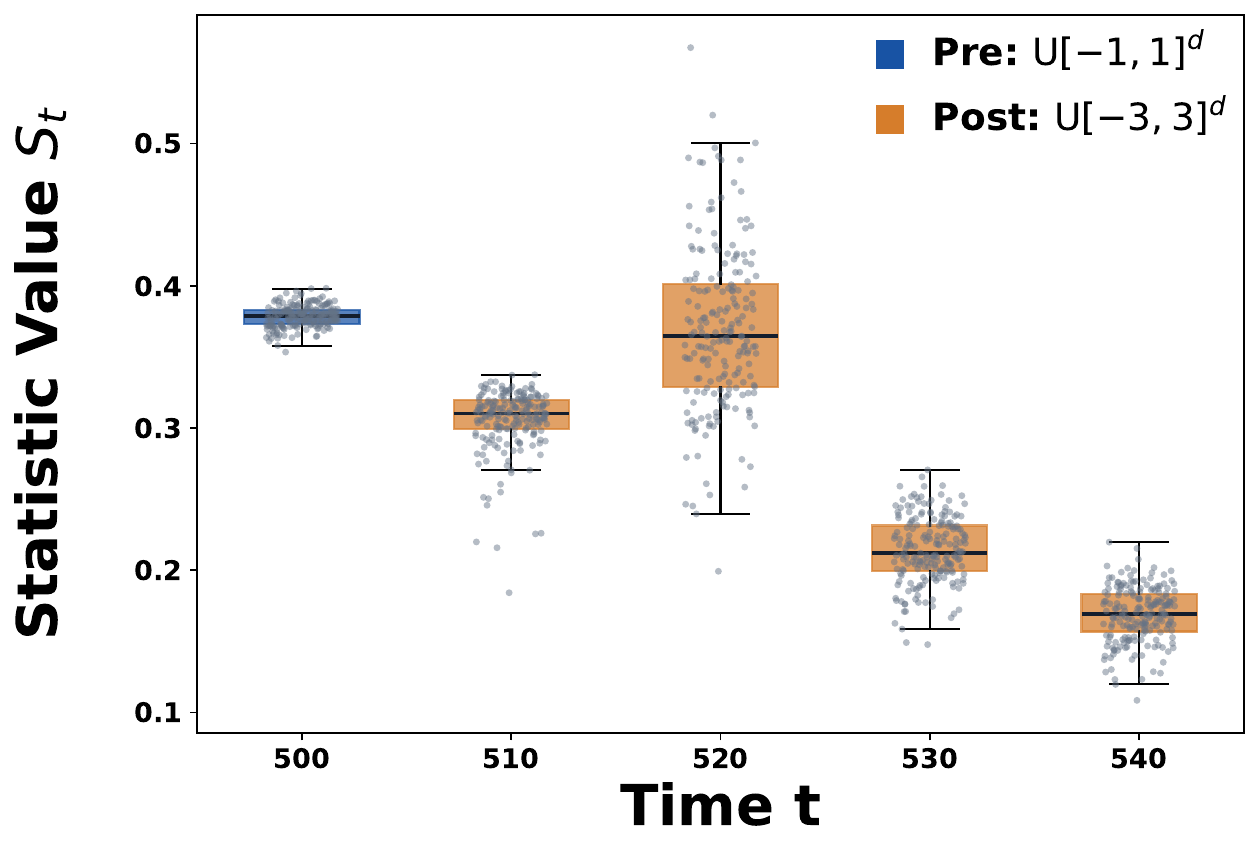}
\hfill
\includegraphics[
  width=0.32\linewidth,
  height=0.155\textheight,
  keepaspectratio
]{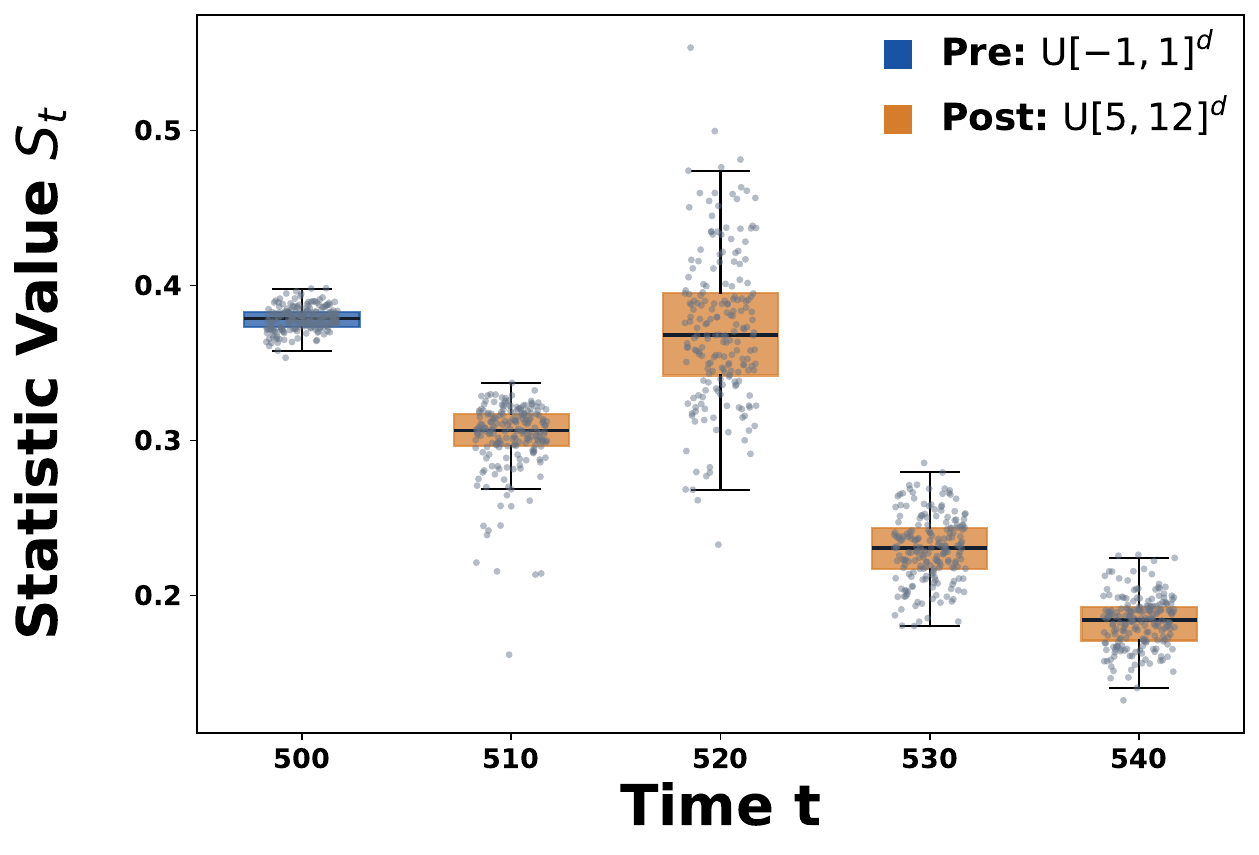}

\caption{Empirical distributions of the monitoring statistic under a common
pre-change uniform distribution and different post-change uniform distributions in dimension \(d=20\), with the
changepoint occurring at \(t=501\). The statistic under the pre-change
distribution is evaluated at \(t=500\), while the statistics under the
post-change distributions are evaluated at \(t=510,520,530,540\).
The top row presents, from left to right, the post-change distributions 
\(\operatorname{Uniform}([-0.25,0]^d)\),
\(\operatorname{Uniform}([-0.25,0.25]^d)\), and
\(\operatorname{Uniform}([-0.5,0.5]^d)\). The bottom row presents, from
left to right, the post-change distributions
\(\operatorname{Uniform}([-2,2]^d)\),
\(\operatorname{Uniform}([-3,3]^d)\), and
\(\operatorname{Uniform}([5,12]^d)\).}
\label{fig:uniform-boxplots-510-540}

\end{figure}

\begin{figure}[!htbp]
\centering

\includegraphics[
  width=0.32\linewidth,
  height=0.155\textheight,
  keepaspectratio
]{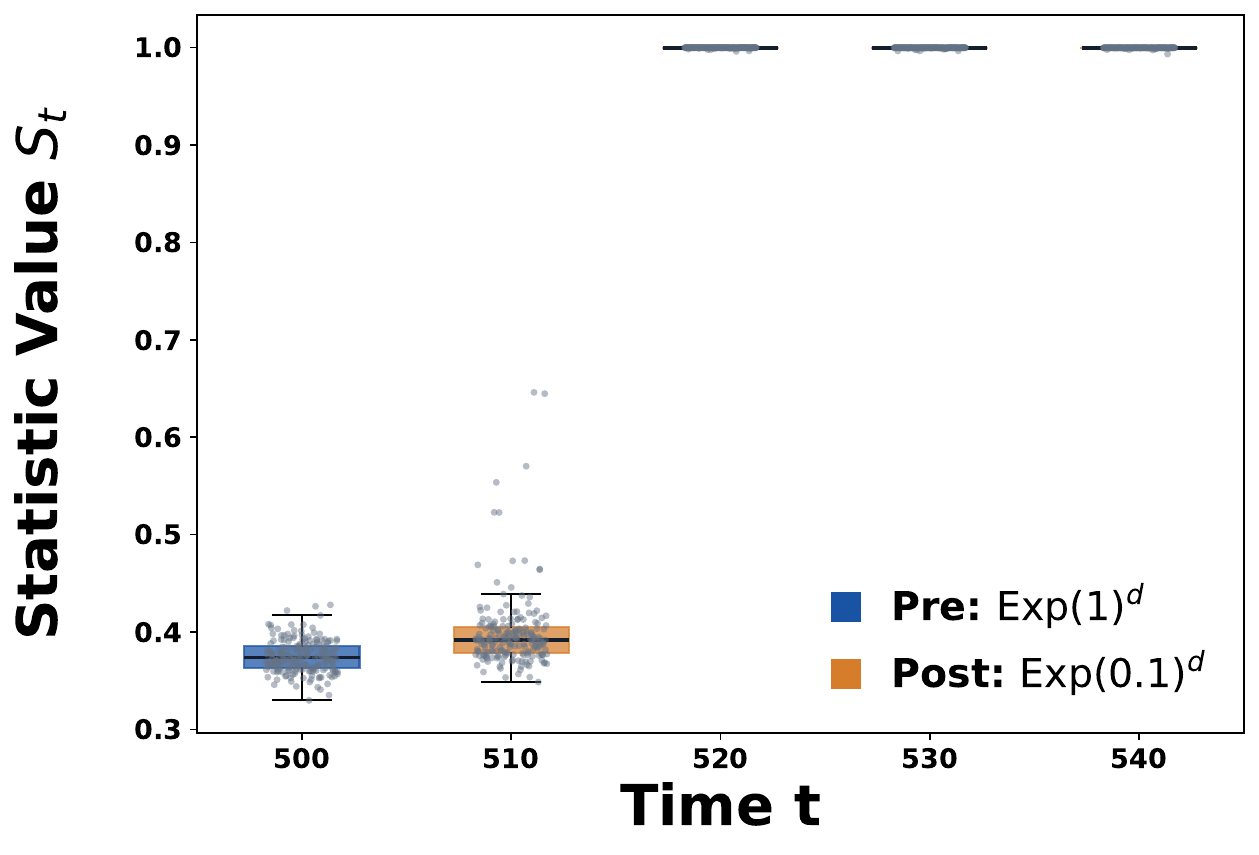}
\hfill
\includegraphics[
  width=0.32\linewidth,
  height=0.155\textheight,
  keepaspectratio
]{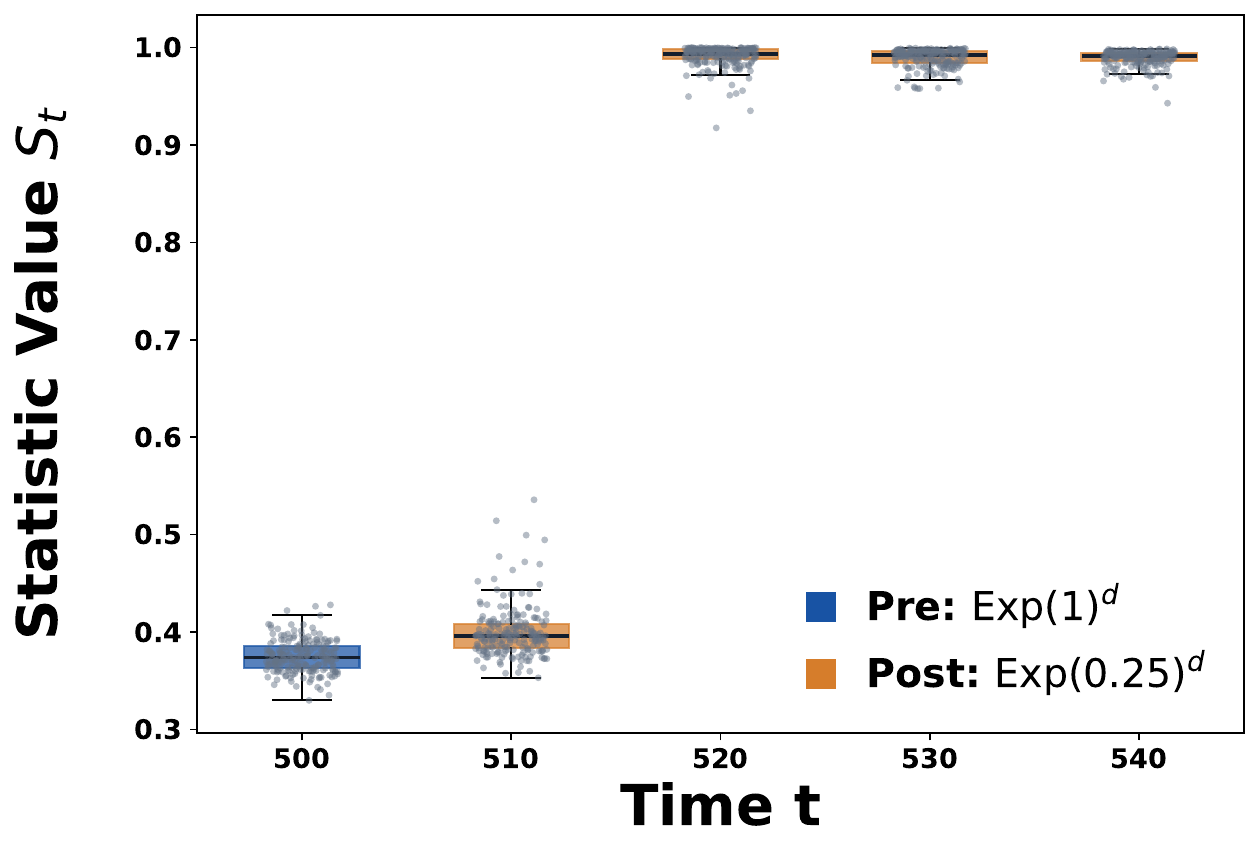}
\hfill
\includegraphics[
  width=0.32\linewidth,
  height=0.155\textheight,
  keepaspectratio
]{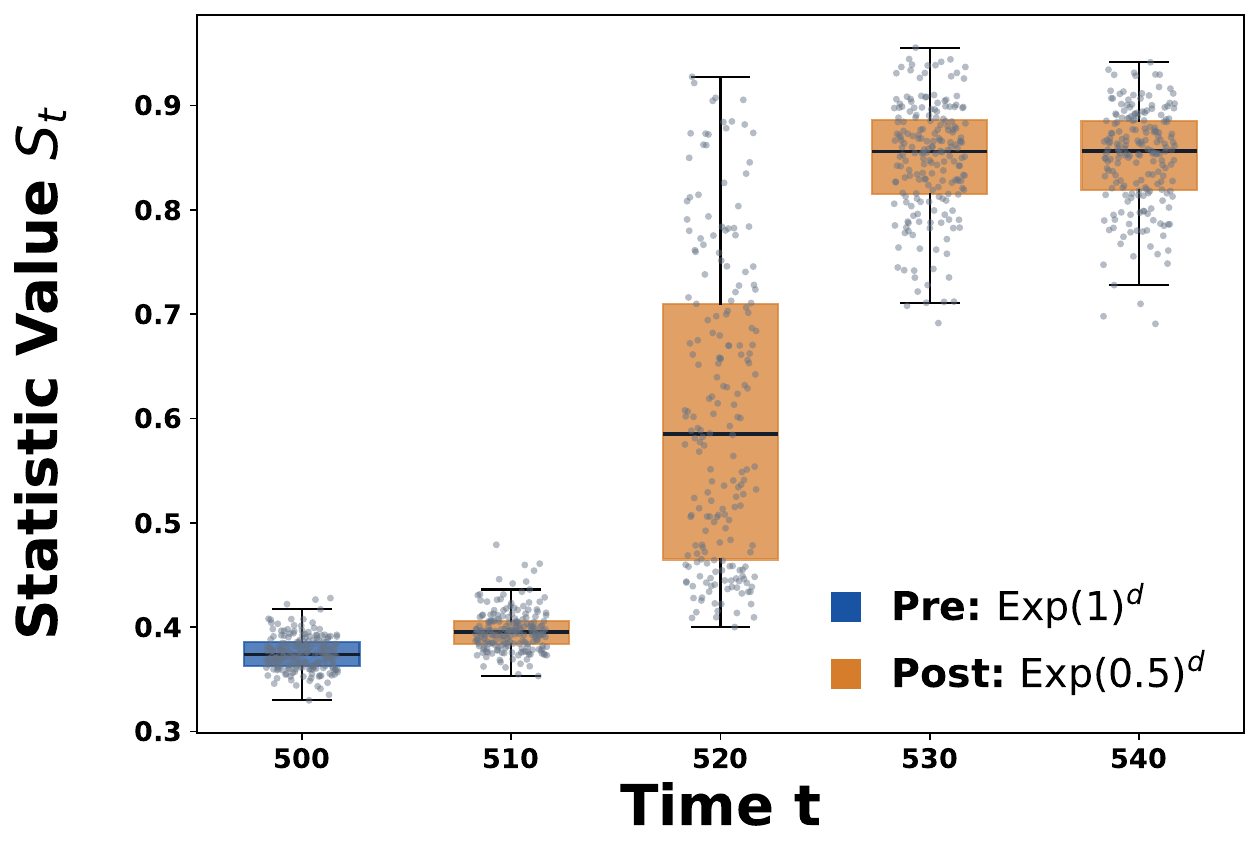}

\smallskip

\includegraphics[
  width=0.32\linewidth,
  height=0.155\textheight,
  keepaspectratio
]{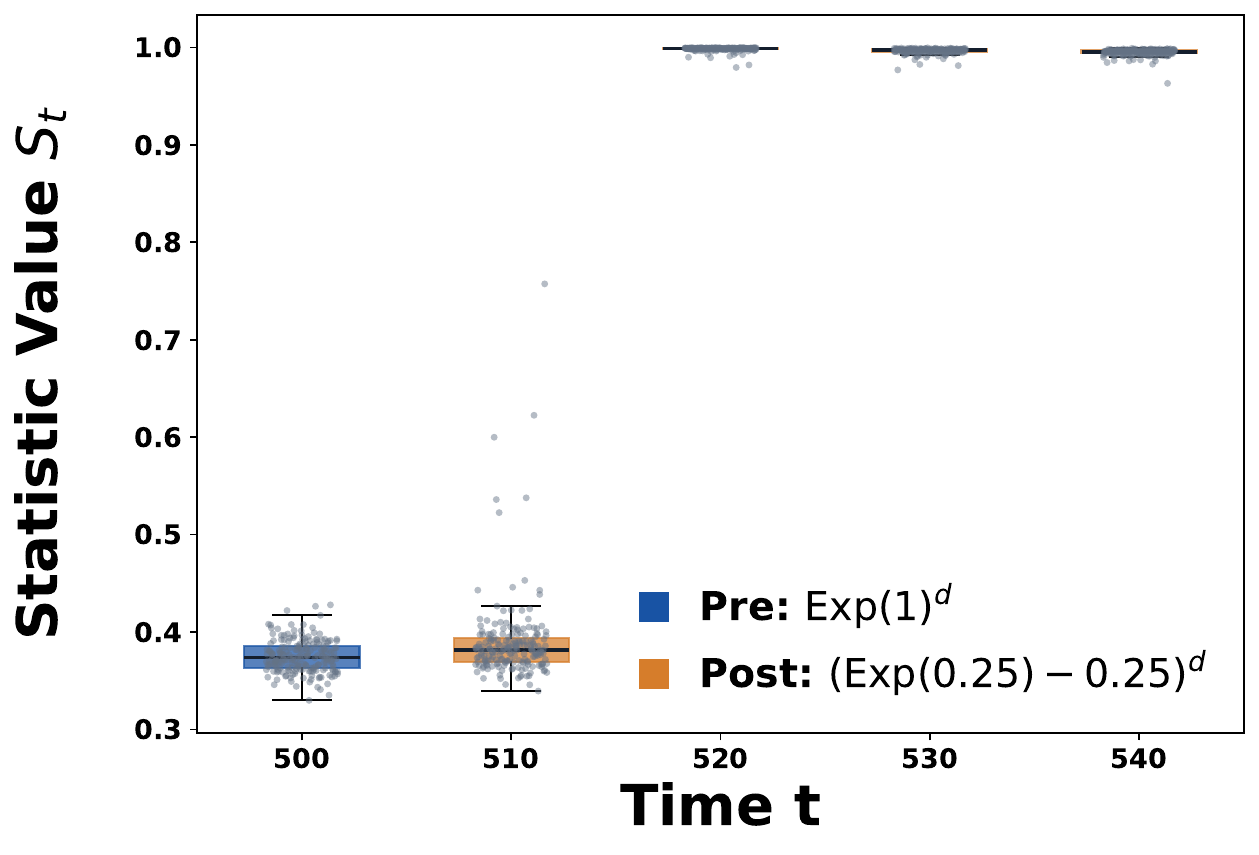}
\hfill
\includegraphics[
  width=0.32\linewidth,
  height=0.155\textheight,
  keepaspectratio
]{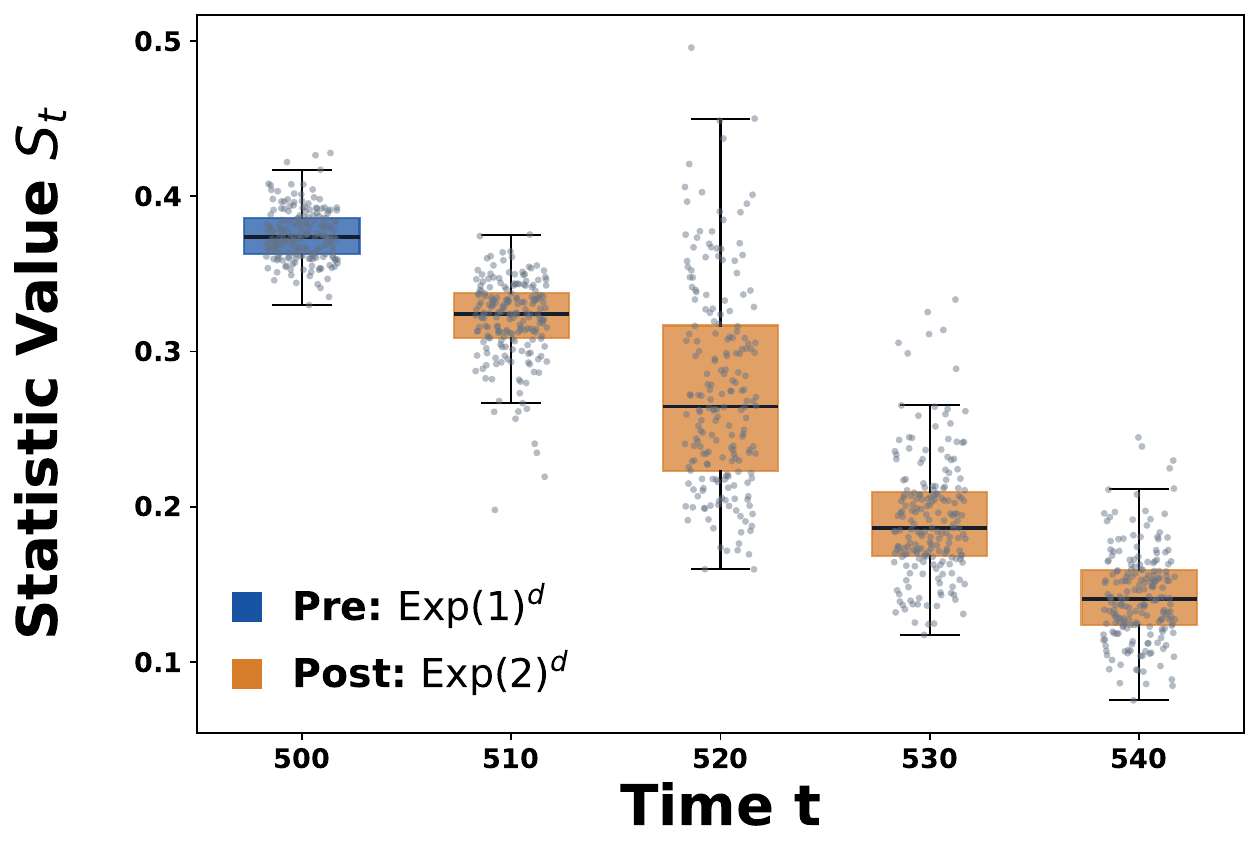}
\hfill
\includegraphics[
  width=0.32\linewidth,
  height=0.155\textheight,
  keepaspectratio
]{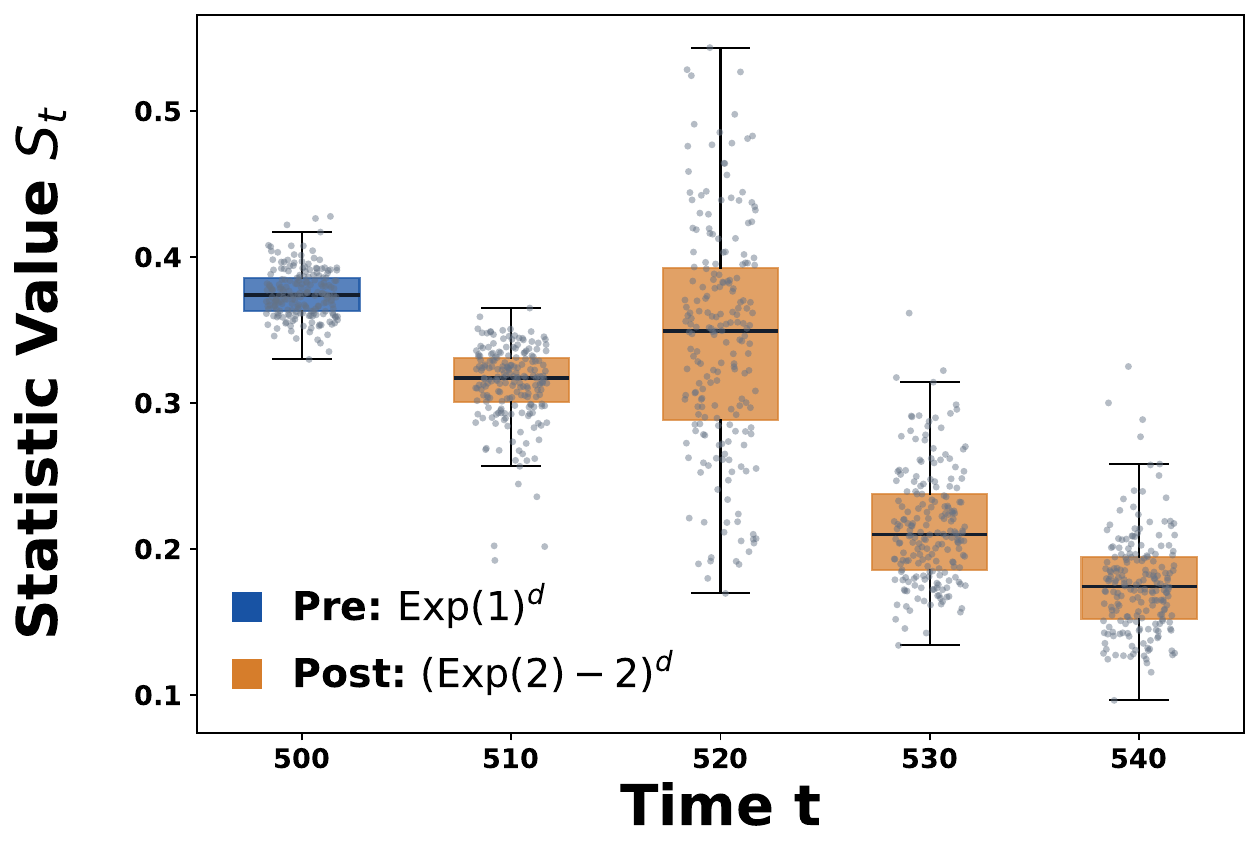}

\caption{Empirical distributions of the monitoring statistic under a common
pre-change exponential distribution and different post-change exponential distributions in dimension \(d=20\), with
the changepoint occurring at \(t=501\). The statistic under the pre-change
distribution is evaluated at \(t=500\), while the statistics under the
post-change distributions are evaluated at \(t=510,520,530,540\).
The top row presents, from left to right, the post-change distributions 
\(\operatorname{Exp}(0.1)^d\), \(\operatorname{Exp}(0.25)^d\), and
\(\operatorname{Exp}(0.5)^d\). The bottom row presents, from left to
right, the post-change distributions 
\((\operatorname{Exp}(0.25)-0.25)^d\),
\(\operatorname{Exp}(2)^d\), and
\((\operatorname{Exp}(2)-2)^d\).}
\label{fig:exponential-boxplots-510-540}

\end{figure}

\begin{figure}[!htbp]
\centering

\includegraphics[
  width=0.32\linewidth,
  height=0.155\textheight,
  keepaspectratio
]{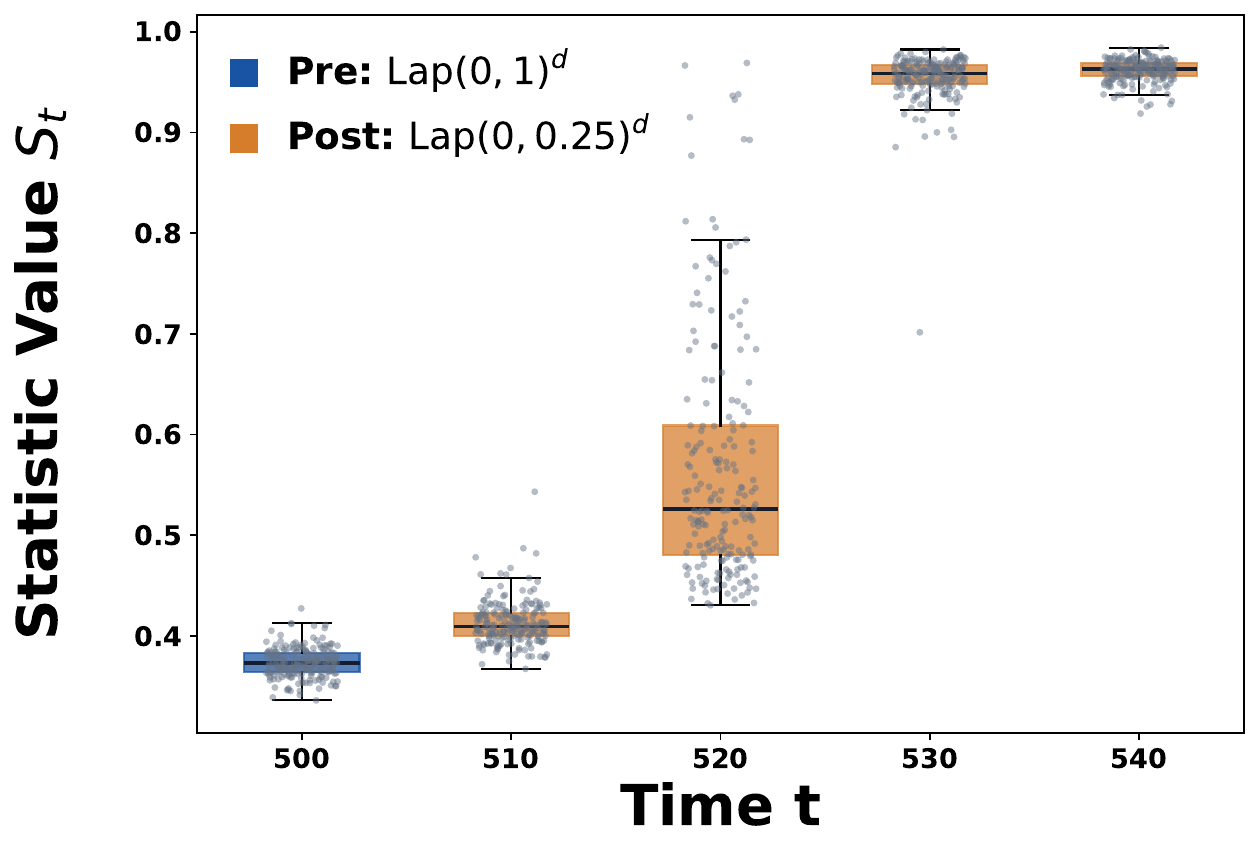}
\hfill
\includegraphics[
  width=0.32\linewidth,
  height=0.155\textheight,
  keepaspectratio
]{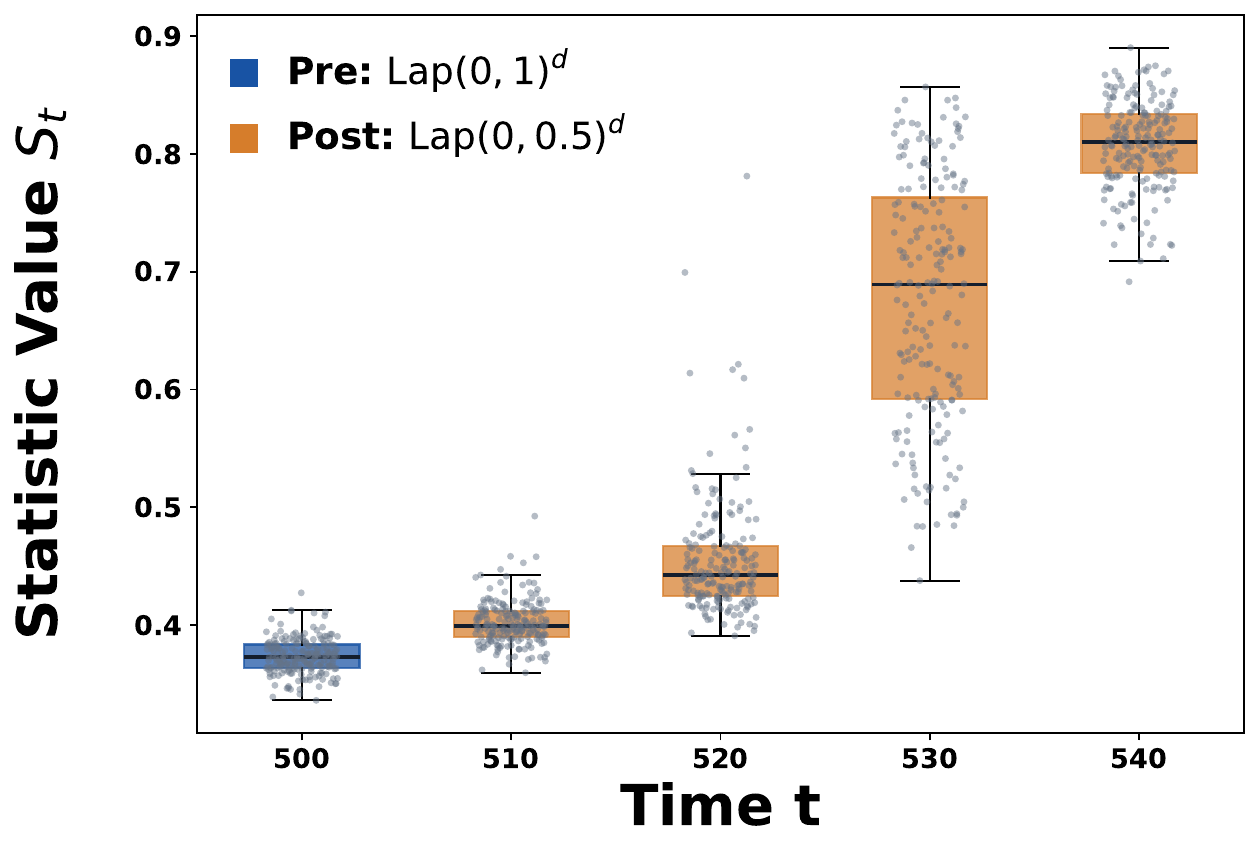}
\hfill
\includegraphics[
  width=0.32\linewidth,
  height=0.155\textheight,
  keepaspectratio
]{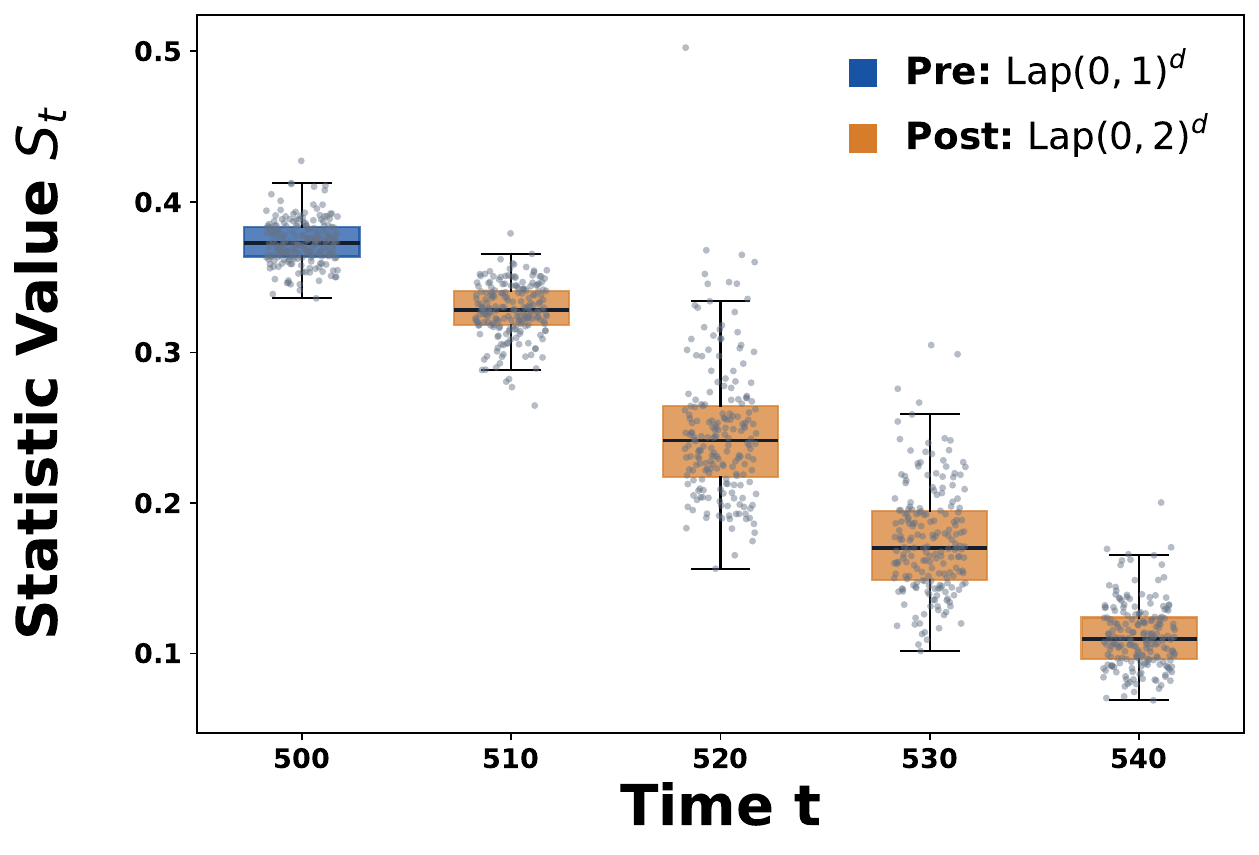}

\smallskip

\includegraphics[
  width=0.32\linewidth,
  height=0.155\textheight,
  keepaspectratio
]{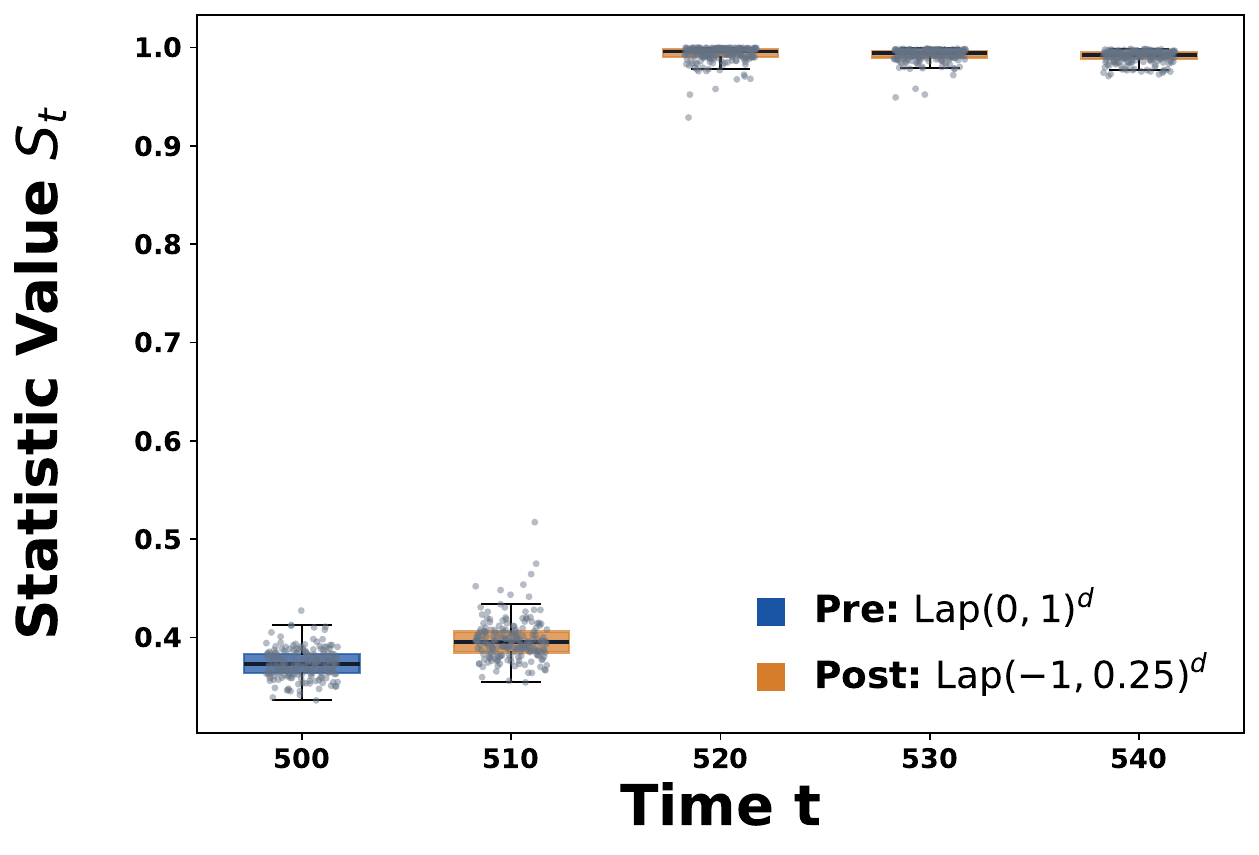}
\hfill
\includegraphics[
  width=0.32\linewidth,
  height=0.155\textheight,
  keepaspectratio
]{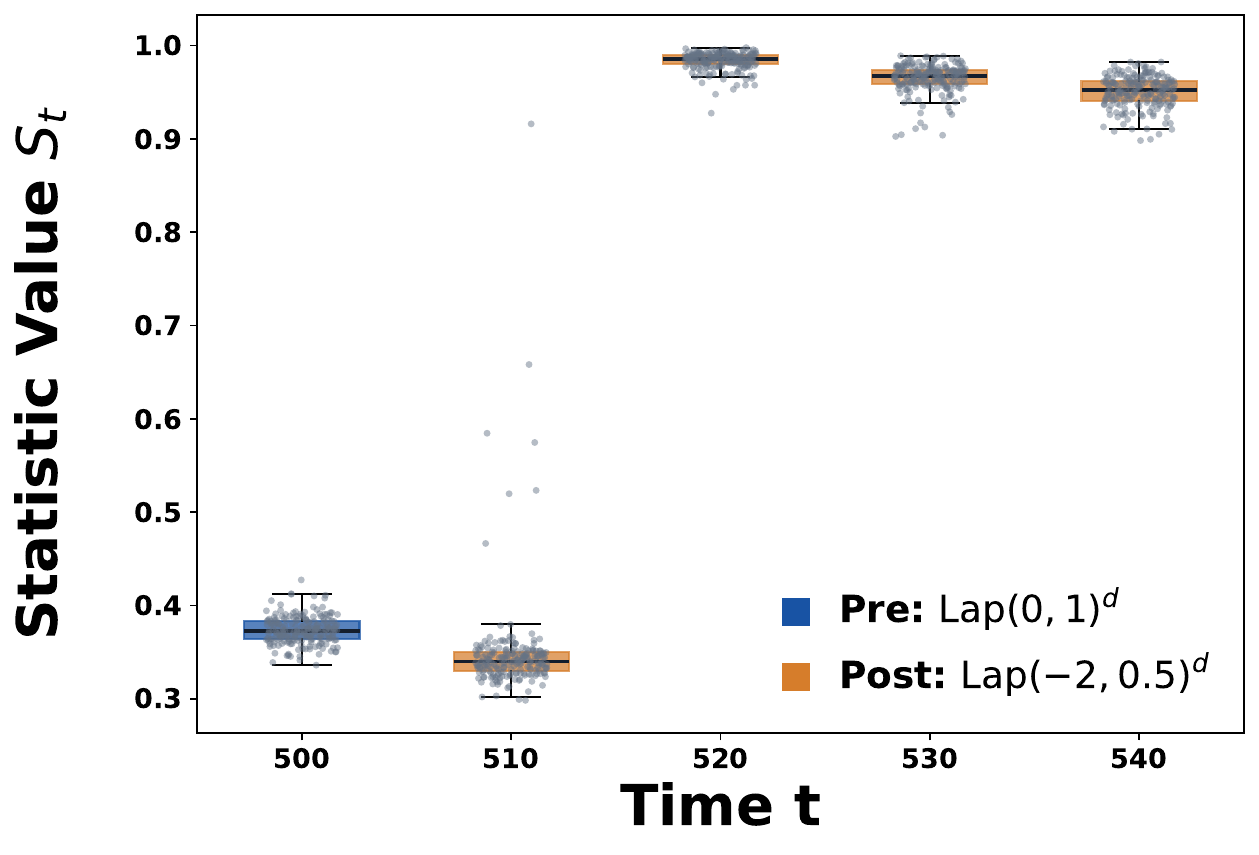}
\hfill
\includegraphics[
  width=0.32\linewidth,
  height=0.155\textheight,
  keepaspectratio
]{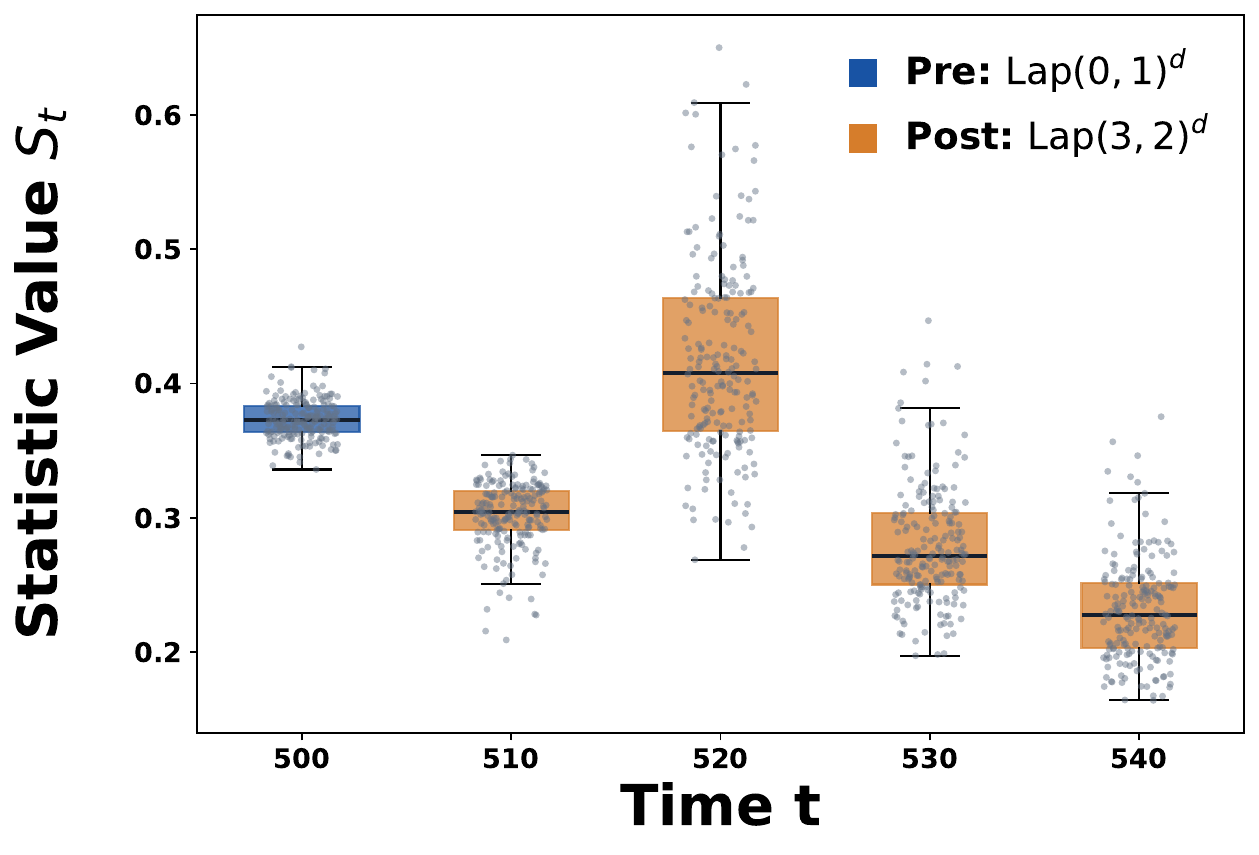}

\caption{Empirical distributions of the monitoring statistic under a common
pre-change Laplace distribution and different post-change Laplace distributions in dimension \(d=20\), with the
changepoint occurring at \(t=501\). The statistic under the pre-change
distribution is evaluated at \(t=500\), while the statistics under the
post-change distributions are evaluated at \(t=510,520,530,540\).
The top row presents, from left to right, the post-change distributions
\(\operatorname{Laplace}(0,0.25)^d\),
\(\operatorname{Laplace}(0,0.5)^d\), and
\(\operatorname{Laplace}(0,2)^d\). The bottom row presents, from left to
right, the post-change distributions
\(\operatorname{Laplace}(-1,0.25)^d\),
\(\operatorname{Laplace}(-2,0.5)^d\), and
\(\operatorname{Laplace}(3,2)^d\).}
\label{fig:laplace-boxplots-510-540}

\end{figure}

\end{document}